\definecolor{darkblue}{rgb}{0,0,.5}
\def\titlerunning#1{\gdef\titrun{#1}}
\def\author#1{\gdef\autrun{\def\and{\unskip, }#1}\gdef\@author{#1}}
\def\address#1{{\def\and{\\\hspace*{18pt}}\renewcommand{\thefootnote}{}%
\footnote {#1}}%
\markboth{\autrun}{\titrun}}
\def\email#1{e-mail: #1}
\def\subjclass#1{{\renewcommand{\thefootnote}{}%
\footnote{\emph{Mathematics Subject Classification (2010):} #1}}}
\def\keywords#1{\par\medskip
\noindent\textbf{Keywords.} #1}
\theoremstyle{plain}
\newtheorem{theorem}{Theorem}[section]
\newtheorem{lemma}[theorem]{Lemma}
\newtheorem{proposition}[theorem]{Proposition}
\newtheorem{corollary}[theorem]{Corollary}
\theoremstyle{definition}
\newtheorem{definition}[theorem]{Definition}
\numberwithin{equation}{section}
\def\D{\mathcal{D}}
\def\C{\mathcal{C}}
\def\R{\mathbb{R}}
\def\N{\mathcal{N}}
\def\F{\mathcal{F}}
\def\A{\mathcal{A}}
\def\Ham{\mathcal{H}}
\providecommand{\bysame}{\makebox[3em]{\hrulefill}\thinspace}
\newcommand{\up}{\upshape}
\newcommand{\longleftmapsto}{\mbox{$\;\longleftarrow\!\mapstochar\;\,$}}
\newcommand{\longto}{\longrightarrow}
\newcommand{\hookto}{\hookrightarrow}
\newcommand{\toto}{\twoheadrightarrow}
\def\vv<#1>{\langle#1\rangle}
\newcommand{\pr}{\mbox{$\text{\up{pr}}$}}
\newcommand{\ev}{\mbox{$\text{\up{ev}}$}}
\providecommand{\det}{\mbox{$\text{\up{det}}\,$}}
\providecommand{\vol}{\mbox{$\text{\up{vol}}$}}
\newcommand{\dd}[2]{\mbox{$\frac{\partial #2}{\partial #1}$}}
\providecommand{\del}{\partial}
\newcommand{\om}{\omega}
\newcommand{\Om}{\Omega}
\newcommand{\eps}{\varepsilon}
\newcommand{\lam}{\lambda}
\newcommand{\by}[2]{\mbox{$\frac{#1}{#2}$}}
\newcommand{\cinf}{\mbox{$C^{\infty}$}}
\providecommand{\set}[1]{\mbox{$\{#1\}$}}
\newcommand{\X}{\mathfrak{X}}
\newcommand{\curv}{\mbox{$\textup{Curv}$}}
\newcommand{\ver}{\mbox{$\textup{Ver}$}}
\newcommand{\hor}{\mbox{$\textup{Hor}$}}
\newcommand{\momap}{momentum map\xspace}
\newcommand{\gu}{\mathfrak{g}}
\newcommand{\Ad}{\mbox{$\text{\upshape{Ad}}$}}
\newcommand{\GL}{\mbox{$\textup{GL}$}}
\newcommand{\gl}{\mbox{$\mathfrak{gl}$}}
\newcommand{\SO}{\mbox{$\textup{SO}$}}
\newcommand{\hl}{\mbox{$\textup{hl}^{\mathcal{A}}$}}
\newcommand{\Hamc}{\mbox{$\mathcal{H}_{\textup{c}}$}}
\newcommand{\Omnh}{\mbox{$\Om_{\textup{nh}}$}}
\newcommand{\Xnh}{\mbox{$X^{\textup{nh}}_{\mathcal{H}_{\textup{c}}}$}}
\newcommand{\todo}[1]{\vspace{5 mm}\par \noindent
\marginpar{\textsc{ToDo}} \framebox{\begin{minipage}[c]{0.95
\textwidth}\raggedright \tt #1 \end{minipage}}\vspace{5 mm}\par}
\begin{document}
\baselineskip=17pt
\titlerunning{Geometry of non-holonomic diffusion}

\title{Geometry of non-holonomic diffusion}

\author{Simon Hochgerner
\and 
Tudor S. Ratiu}
\date{}

\maketitle

\address{S. Hochgerner: Section de 
Math{\'e}matiques, Station 8,  
Ecole Polytechnique
  F{\'e}d{\'e}rale de Lausanne, 
1015 Lausanne,Switzerland; 
\email{simon.hochgerner@gmail.com}
\and
T.S. Ratiu: Section de 
Math{\'e}matiques, Station 8
and Bernoulli Center, Station 15, 
Ecole Polytechnique
  F{\'e}d{\'e}rale de Lausanne, 
1015 Lausanne, Switzerland;
\email{tudor.ratiu@epfl.ch}}

\begin{abstract}
We study stochastically perturbed non-holonomic systems from a
geometric point of view. In this setting, it turns out that the probabilistic properties
of the perturbed system are intimately linked to the geometry of the
constraint distribution. For $G$-Chaplygin
systems, this yields a stochastic criterion for the existence of
a smooth preserved measure.
As an application of our results we consider the motion planning problem
for the noisy two-wheeled robot and the noisy snakeboard.  

\keywords{non-holonomic system, symmetry, measure, reduction,
diffusion, Brownian motion, generator, Chaplygin system, snakeboard, 
two-wheeled carriage}
\subjclass{Primary 37Jxx, 58J65 ; Secondary 93Exx} 
\end{abstract}

\maketitle

\tableofcontents

\section{Introduction} 

The goal of this paper is the study of stochastic non-holonomic systems.
This is a natural continuation of the work on stochastic Hamiltonian 
systems pioneered by Bismut~\cite{Bis81} and revitalized, brought up to 
date, and expanded by L\'{a}zaro-Cam\'{\i} and Ortega \cite{LO08} who also
connected it to symmetries, momentum maps, and reduction.

\subsection{Motivation and basic idea}
A non-holonomic system is, essentially, a rigid body together with a
set of constraints on the velocities. 
A prototypical example is the Chaplygin ball (\cite{C02}; 
for a modern treatment see \cite{D04} and \cite[Chapter
  6]{CuDuSn2010}). 
Here, the configuration
space is the direct product Lie group $G=\SO(3)\times\mathbb{R}^2$,
describing orientation and position of the ball,  
and the kinetic energy is specified by a
left-invariant metric $\mu$; there are 
two (non-integrable) velocity constraints so that the ball does
not slip, i.e., the point of contact of the ball and the plane has
zero velocity. Without constraints (which is clearly not the case in the
problem just presented), this 
would describe the motion of a rigid body in the plane, hence it would
be a Hamiltonian system. 

Stochastically perturbed versions of the latter setting (i.e., without constraints) 
have been
considered by L\'{a}zaro-Cam\'{\i} and Ortega \cite[Section~7.3]{LO08a}:
Let $h^0$ be the kinetic energy Hamiltonian of a left invariant metric
on the Lie group $G$, $\{y_i\}_i$ an orthonormal basis of the Lie algebra 
$\mathfrak{g}$ of $G$, $\{u_i\}_i$ its extension to a left invariant 
frame on $G$, and $h^i:T^*G\to\mathbb{R}$, $(q,p)\mapsto\vv<p,u_i(q)>$. 
Note that $h^i$ is the component $\vv<J^R,y_i>$ of the momentum map 
$J^R: T^\ast G \rightarrow \mathfrak{g}^\ast$ defined by the lift to 
$T^\ast G$ of right translation of $G$ on itself. The 
$\mathbb{R}\times \mathfrak{g}^\ast$-valued function $H=(h_0,h^i)$ on 
$T ^\ast G$ is left invariant. Following \cite{LO08,LO08a} and assuming 
that the perturbation is given by white noise, the stochastic rigid body 
is thus modeled by the Stratonovich equation 
\begin{equation}\label{e:mot-G}
 \delta\Gamma
 = X_{h^0}(\Gamma)\delta t + \sum X_{h^i}(\Gamma)\delta W^i,
\end{equation}
where $X_h$ denotes the Hamiltonian vector field of 
the function $h: T^*G\to\mathbb{R}$ and $W = \{W^i\}$ is Brownian 
motion in $\mathfrak{g}\cong\mathbb{R}^n$. A physical system modeled by 
this equation is that of a rigid body subject to small random impacts. 
Note that, since $u_i$ is auto-parallel for the Levi-Civita connection,
the equation $\delta\Gamma = \sum X_{h^i}(\Gamma)\delta W^i$ yields
the Hamiltonian construction of Brownian motion, as in \cite{LO08}. 

To pass to the nonholonomic setting, we note that the equations of
motion of the constrained (Chaplygin) ball can be encoded in the
vector field $PX_{h^0}$ where $P$ is the constraint force projection and 
is defined in \eqref{e:P} below. 
The effect of $P$ is to force the dynamics generated by $X_{h^0}$ to
satisfy the constraints. 
Thus, the idea of `the Hamiltonian construction of stochastic
non-holonomic systems' is to apply $P$ to \eqref{e:mot-G}. In fact, 
since $PX_{h^0}$ is nothing but the  non-holonomic 
vector field (see Section~\ref{sec:2}), we will focus on studying the 
effects of $P$ on the second term in equation~\eqref{e:mot-G}.
This yields non-holonomic constraints 
on the operator which is used to construct Brownian motion, 
thus leading to `constrained Brownian motion' described by 
\begin{equation}\label{e:mot-PG}
\delta\Gamma
 = \sum P(\Gamma)X_{h^i}(\Gamma)\delta W^i.
\end{equation}
As it stands, this equation has some problems. It depends very much on
the basis $\{u_i\}_i$ that was chosen in the definition of the
$h^i$. For example, since the no-slip constraints are actually
right invariant, one could have chosen a right invariant frame. But
then the Hamiltonian description of Brownian motion needs a correction
term involving the Levi-Civita connection of $\mu$. This approach has
been taken in \cite{H10}. However, 
the basis dependence implies that the generator of
\eqref{e:mot-PG} also changes when we pass to a different frame, and there
would be many natural choices depending on whether the frame should be
left or right invariant, adapted to the constraint distribution, or the
direct product structure of $G$, etc. 
Even if one ignores these issues, it is not clear what to do if
the configuration space is not parallelizable. For all these reasons
we transfer the construction to the bundle of orthonormal frames
itself. It is only then that the generator of the resulting
`constrained Brownian motion' is basis independent. 
This constrained Brownian motion has some interesting features:
\begin{itemize}
\item
To visualize it, we can think of a microscopic robot (or ball,
snakeboard, etc.) 
subject to 
molecular bombardment.
The robot thus experiences small 
impacts from all sides (isotropic in space) which force it to move around, but it still 
has to respect the constraints. 
\item
Now, it turns out, that the geometry of the constraints determines the 
probabilistic properties of the perturbed system. Indeed, if the 
constraints are integrable, then the robot's net drift will vanish. 
However, when the constraints are non-integrable and non-mechanical 
(which is the generic case) the Gaussian noise will induce a net drift on 
the robot. In Section~\ref{sec:4} we quantify this drift in terms of the 
geometry of the constraint distribution. \textit{Mechanical constraints}
are given, by definition as level sets of conserved quantities, such
as momentum maps. E.g., the constraints could be given by the
horizontal bundle of the mechanical connection, which is just
orthogonal to the vertical bundle in the case of a symmetry group action. 
\item
This leads to a dictionary between probabilistic aspects of the
perturbed system and classical properties of the original
(deterministic) non-holonomic system. 
See Theorem~\ref{thm:main} below for a preliminary statement of this
dictionary and Section~\ref{sec:4} for further details.
\end{itemize}

\subsection{Description of contents and results} 
Since this paper addresses both the 
geometric mechanics and the stochastic differential equations communities, 
we shall give the necessary background for all concepts and quote the
main results that are used later on. The paper is self contained. We 
briefly present the main results and the structure of the paper.

\subsubsection{Non-holonomic systems}
We start by recalling the necessary facts, concepts, and results of 
non-holonomic systems and their geometry. This includes a careful 
presentation of symmetries, reduction, and conditions for the existence 
of a (smooth) preserved measure. We will have to rephrase some of the 
existing results in view of applying them to our stochastic study
later on and develop the theory in the direction needed in subsequent
sections in the paper. 

Thus, we will have to give complete proofs not
only for some of the known results, due to our reformulation, but we
also need to establish new formulas. For example, the \textit{global} 
formula \eqref{Omega_expression} of the symplectic form on the tangent 
bundle given in terms of 
an underlying Riemannian metric on configuration space is new, as far 
as we know.
In \eqref{e:P} we introduce the above mentioned constraint force projection 
and explain its properties to prepare for Section~\ref{sec:4}. 
We also study Chaplygin systems, which are non-holonomic systems with
a particularly rich geometric structure, and the symmetry reduction of
such systems. 
One of
the main points of Section~\ref{sec:2} is the presentation of a
certain one-from $\beta$ which, according to Proposition~\ref{prop:pm},
characterizes the existence of a (smooth) preserved measure for a given
Chaplygin system.  
This result has been previously derived in \cite{CCLM02} but both our
proof and our interpretation of the relevant one-form $\beta$ are
different. In fact, our formulation of $\beta$ in \eqref{e:beta2} is a
prerequisite for Section~\ref{sec:4}.

\subsubsection{Stochastic dynamics on manifolds}
First, we recall some notions about manifold valued stochastic 
differential equations and diffusions from \cite{IW89,E89}.

Then we study symmetries of Stratonovich equations.
We consider a manifold $Q$ together with a proper action by a
Lie group $G$ and a diffusion $\Gamma^Q$ generated by a 
Stratonovich operator $\mathcal{S}$ from $T\mathbb{R}^{k+1}$ to
$TQ$ satisfying the equivariance relation \eqref{e:S-equiv}. 
In this setting, the Stratonovich operator does not (in general)
induce a Stratonovich operator on the base $Q/G$; however, the
diffusion $\Gamma^Q$ and its generator $A^Q$ are projectable to
$Q/G$. Thus, there is an induced diffusion $\Gamma^{Q/G}$ with induced 
generator $A^{Q/G}$  on the base space $Q/G$. See 
Theorem~\ref{prop:S-equiv}.

Two examples for this procedure of `equivariant reduction' 
are the Eells-Elworthy-Malliavin construction of Brownian motion 
(cf. equation~\eqref{e:SL})  on a 
Riemannian manifold and the stochastic Calogero-Moser systems (see 
\cite{H11}), as remarked in Subsection~\ref{sec:equiv-red}. 
In particular, we allow for non-free $G$-actions on $Q$ and hence $Q/G$
is, in general, not a smooth manifold but a stratified space. 
Thus, we extend the reduction theorem of \cite[Theorem~3.1]{LO08a} to 
the case when the Stratonovich operator on the total
space is not invariant but equivariant with respect to a symmetry
group action.

This naturally leads to the introduction, in 
Subsection~\ref{sec:equiv-diff}, of certain notions of equivariant diffusions,
previously studied in \cite{ELL04,ELL10}. The material of this
subsection will also be useful in Section~\ref{sec:5}.
In particular, we prove a mean reconstruction equation for diffusions in
principal bundles which is analogous to a concept by the same name in
mechanics  (see, e.g., \cite[\S4.3]{AbMa1978},
\cite[\S3]{MaMoRa1990}, \cite[Theorem~11.8]{Mon}) 
and uses that of \cite{ELL04,ELL10}.

\subsubsection{Non-holonomic diffusions}
This section contains the main results of the paper. We introduce 
constrained Brownian motion as motivated above.
This involves a careful analysis of the underlying geometry.
Then we study the  
generator and symmetry reduction of the resulting diffusion process. 
The reduction relies on Theorem~\ref{prop:S-equiv}.

The surprising fact in this regard, 
is that there is a very strong interrelation
of some probabilistic aspects of constrained Brownian motion and
certain deterministic properties of the original non-holonomic
system. A first instance of this relation is:

\begin{theorem}
Constrained
Brownian motion is a martingale with respect to the non-holonomic
connection on the configuration space. 
\end{theorem}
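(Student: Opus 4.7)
The plan is to use Emery's characterization: a $Q$-valued semimartingale $\gamma$ is a $\nabla^{\mathrm{nh}}$-martingale if and only if, for every $f\in C^\infty(Q)$,
\[
f(\gamma_t)-f(\gamma_0)-\tfrac12\int_0^t(\mathrm{Hess}^{\nabla^{\mathrm{nh}}}f)(d\gamma_s,d\gamma_s)
\]
is a local martingale. The task reduces to identifying the projection $\gamma:=\pi_Q\circ\Gamma$ of constrained Brownian motion $\Gamma$ as an Eells-Elworthy-Malliavin-type construction for $\nabla^{\mathrm{nh}}$.

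First I would show that $\gamma$ satisfies a Stratonovich SDE on $Q$ driven by noise that is isotropic on the constraint distribution $\D$ with respect to the kinetic metric $\mu$. The Hamiltonian vector field $X_{h^i}$ is $\pi_Q$-related to the left-invariant frame vector $u_i$ (since $h^i$ is fiberwise linear with dual $u_i$), and the constraint force projection $P$ of~\eqref{e:P} restricts along tangent directions to the $\mu$-orthogonal projection $P_\D:TQ\to\D$. Consequently,
\[
\delta\gamma=\sum_i (P_\D u_i)(\gamma)\,\delta W^i,
\]
and orthonormality of $\{u_i\}$ in $TQ$ yields $\sum_i P_\D u_i\otimes P_\D u_i=\mu^{\sharp}|_\D$; hence the generator of $\gamma$ is $f\mapsto\tfrac12\sum_i(P_\D u_i)^2 f$, which is $\tfrac12$ times the $\mu$-trace over $\D$ of a second-order operator that we must match to $\mathrm{Hess}^{\nabla^{\mathrm{nh}}}f$.

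Second, I would lift $\gamma$ horizontally to the principal bundle $\mathcal{O}_\mu(\D)\to Q$ of $\mu$-orthonormal frames of $\D$, using $\nabla^{\mathrm{nh}}$ as the principal connection. The lift $\rho$ satisfies $\delta\rho=\sum_i H_i(\rho)\,\delta W^i$, where $H_i$ is the $\nabla^{\mathrm{nh}}$-horizontal lift of the $i$-th frame vector; equivariance of the $H_i$ under the structure group together with Theorem~\ref{prop:S-equiv} guarantees $\pi\circ\rho=\gamma$. For $\tilde f:=f\circ\pi$, the standard EEM computation yields
\[
d\tilde f(\rho_t)=\sum_i H_i\tilde f(\rho_t)\,dW^i+\tfrac12(\mathrm{Hess}^{\nabla^{\mathrm{nh}}}f)(d\gamma_t,d\gamma_t),
\]
because horizontal vector fields have no vertical drift and their Stratonovich-to-It\^o correction is precisely the $\nabla^{\mathrm{nh}}$-Christoffel contribution. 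This is exactly Emery's condition.

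The main obstacle will be matching the Stratonovich-to-It\^o correction produced by $PX_{h^i}$ on $T^*Q$ with the $\nabla^{\mathrm{nh}}$-correction on $Q$, as opposed to the Levi-Civita correction carried by the uncorrected flows $X_{h^i}$. This is the analytic content of the geometric identity~\eqref{e:P}: unwound along a frame adapted to the splitting $TQ=\D\oplus\D^\perp$, the constraint force projection removes the $\D^\perp$-component of the Levi-Civita drift and leaves the non-holonomic Christoffel contribution, which is then absorbed into the horizontal lift of step two.
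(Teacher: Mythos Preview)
There is a genuine gap in your first step, and it propagates through the rest. You work with the Lie-group motivation \eqref{e:mot-PG} and assert that the projected process satisfies $\delta\gamma=\sum_i(\Pi u_i)(\gamma)\,\delta W^i$ for a \emph{fixed} orthonormal frame $(u_i)$ on $Q$, so that the generator is $\tfrac12\sum_i(\Pi u_i)(\Pi u_i)$. But constrained Brownian motion is \emph{not} defined this way; it is defined in Section~\ref{sec:CBM} via the Stratonovich equation on the orthonormal frame bundle $\mathfrak{F}$ with vector fields $\Pi_{\mathfrak{F}}L_i$ (equation~\eqref{e:SDF}). In that construction the frame $u$ is itself part of the state and evolves along the Levi-Civita connection $\omega$. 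The equivariant reduction to $Q$ (Theorem~\ref{prop:S-equiv}) produces the generator computed in Theorem~\ref{prop:Gamma-nh},
\[
A=\tfrac12\sum_i(\Pi u_i)(\Pi u_i)-\tfrac12\sum_i\Pi\nabla^{\mu}_{\Pi u_i}u_i,
\]
and the second term is precisely the contribution of the moving frame that your computation drops.

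This missing term is not cosmetic: it is what makes the result true. In an adapted frame $(u_a,u_\alpha)$ for $\mathcal{D}\oplus\mathcal{D}^{\bot}$ one has
\[
\tfrac12\sum_a u_au_a
=\tfrac12\sum_a\textup{Hess}^{\textup{nh}}(\,\cdot\,)(u_a,u_a)
+\tfrac12\sum_a\nabla^{\textup{nh}}_{u_a}u_a,
\]
so your operator carries the $\nabla^{\textup{nh}}$-drift $\tfrac12\sum_a\Pi\nabla^{\mu}_{u_a}u_a$, which does not vanish in general. The correction term from the frame bundle cancels it exactly, leaving $A=\tfrac12\sum_a\textup{Hess}^{\textup{nh}}(\,\cdot\,)(u_a,u_a)$, which is purely second order and hence, by Emery's criterion, yields a $\nabla^{\textup{nh}}$-martingale. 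That is the paper's argument (Corollary~\ref{cor:nh-diff}). Your proposed second step---lifting to the orthonormal frame bundle of $\mathcal{D}$ with connection $\nabla^{\textup{nh}}$---would in fact reproduce the correct generator, but you cannot feed it the SDE from step one, since that SDE does not describe constrained Brownian motion as defined here. The identity \eqref{e:Pi} you invoke only relates $P$ and $\Pi$ along the tangent projection; it does not supply the $\omega$-correction, which comes from differentiating the frame variable in the proof of Theorem~\ref{prop:Gamma-nh}.
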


A second result yields a probabilistic characterization of the
existence of a preserved measure which is a very important concept in
the theory of non-holonomic systems (see
\cite{AKN02,B03,C02,EKMR04,HG09,H09,K92}):

\begin{theorem}\label{thm:main}
Let $(Q,\mathcal{D},L)$ be a $G$-Chaplygin system such that the base 
$M:=Q/G$ is compact. Let $\Gamma^M$ be the non-holonomic diffusion in 
$M$ associated to these data. Then the following are equivalent:
\begin{enumerate}[\up (1)]
\item
$(Q,\mathcal{D},L)$ has a (smooth) preserved measure;
\item
$\Gamma^M$ is time-reversible;
\item
$\Gamma^M$ has vanishing entropy production rate.
\end{enumerate}
\end{theorem}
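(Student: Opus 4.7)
The overall plan is to identify the generator $A^M$ of the reduced diffusion $\Gamma^M$ on $M=Q/G$ explicitly, read off its drift one-form, and match this drift with the one-form $\beta$ from Proposition~\ref{prop:pm} that controls the existence of a smooth preserved measure. The three conditions then line up as three equivalent ways of saying that the drift of $A^M$ is ``gradient''.

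\textbf{Step 1: Compute $A^M$.} The constrained Brownian motion on $Q$ projects, by the equivariant reduction Theorem~\ref{prop:S-equiv}, to a diffusion $\Gamma^M$ with a well-defined generator $A^M$ on the reduced space $M$. In the $G$-Chaplygin case the reduced space carries a natural Riemannian metric (the quotient of $\mu$ restricted to $\D$), whose volume form I will denote by $\vol_M$. Using the local frame computation set up in Section~\ref{sec:2} and the formula \eqref{e:beta2} for $\beta$, I will show
\begin{equation*}
A^M \;=\; \tfrac{1}{2}\Delta_M \;+\; Y_\beta,
\end{equation*}
where $\Delta_M$ is the Laplace--Beltrami operator of the reduced metric and $Y_\beta$ is the vector field metrically dual to $\beta$. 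The key ingredient is that the same curvature/non-integrability term that appears when one computes the divergence of the non-holonomic vector field against $\vol_M$ appears as an Itô correction when the Stratonovich form of constrained Brownian motion is rewritten in Itô form on the base.

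\textbf{Step 2: (1) $\Leftrightarrow$ (2).} Given the formula of Step 1, the standard integration by parts argument shows that $A^M$ is symmetric in $L^2(e^{f}\vol_M)$ if and only if $\beta=\tfrac{1}{2}df$; in particular, $A^M$ admits a smooth symmetrizing measure iff $\beta$ is exact. On the other hand, by Proposition~\ref{prop:pm}, the $G$-Chaplygin system $(Q,\D,L)$ has a smooth preserved measure iff $\beta$ is exact (with the same primitive $f$ appearing as the logarithm of the density). Since for a uniformly elliptic (in fact hypoelliptic) diffusion on a compact manifold, existence of a smooth symmetrizing measure is equivalent to time-reversibility in the usual sense (invariance of the law under $t\mapsto T-t$), (1) and (2) are equivalent. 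Compactness of $M$ is used here to guarantee existence of the invariant measure without integrability issues and to upgrade pointwise symmetry to time-reversibility on path space.

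\textbf{Step 3: (2) $\Leftrightarrow$ (3).} This is a general fact about diffusions on compact manifolds and will not use anything specific to the non-holonomic setting beyond Step 1. The entropy production rate is the (nonnegative) relative entropy rate of the path measure with respect to its time-reversal; it vanishes precisely when the forward and backward generators coincide, i.e.\ when the diffusion is reversible with respect to its invariant measure. I will quote this from the stochastic background summarized earlier (or give a two-line derivation via the Girsanov decomposition of the time-reversed process, whose drift differs from the forward drift by twice the logarithmic gradient of the invariant density).

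\textbf{Main obstacle.} The nontrivial part is Step 1: producing the drift of $A^M$ in a form that \emph{literally} matches the one-form $\beta$ of \eqref{e:beta2}, rather than some cohomologous variant. This requires carefully tracking how the constraint projection $P$ and the non-holonomic connection enter the Itô correction when passing from the orthonormal frame bundle to $Q$ and then to $M$, and exploiting the reformulation of $\beta$ given in \eqref{e:beta2}, which was set up in Section~\ref{sec:2} precisely for this purpose. Once the drift is identified with $Y_\beta$, the equivalences (1)$\Leftrightarrow$(2)$\Leftrightarrow$(3) follow from standard diffusion theory.
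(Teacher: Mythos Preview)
Your proposal is correct and follows essentially the same route as the paper: compute the reduced generator as $\frac{1}{2}\Delta^{\mu_0}$ plus a drift, identify the drift with the metric dual of $\beta$ (this is the paper's Lemma~\ref{lem:b} and equation~\eqref{e:AM}), and then combine Proposition~\ref{prop:pm} with Kolmogorov's criterion (Theorem~\ref{thm:Kol}) and the entropy-production characterization cited from \cite{JQQ04}. One harmless normalization slip: the paper obtains $A^M=\tfrac{1}{2}\Delta^{\mu_0}+\tfrac{1}{2}\,\mu_0^{-1}\beta$, so the drift is $\tfrac{1}{2}Y_\beta$ rather than $Y_\beta$, and the symmetrizing density is $e^F$ with $\beta=\mathbf{d}F$ (matching Proposition~\ref{prop:pm}), not $\beta=\tfrac{1}{2}\mathbf{d}f$.
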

The compactness assumption on $M$ is met in all classical examples
such as the Chaplygin ball or the two-wheeled robot.  
This theorem sums up some of the results of Sections~\ref{sec:4} and
\ref{sec:trd}, where also the relevant notions are introduced.

\subsubsection{Examples}
As examples, we consider the two-wheeled robot and the snakeboard. 
The former is $G$-Chaplygin and does (in general) not allow for a
preserved measure. The latter is not a Chaplygin system but does fit
the general set-up of Section~\ref{sec:4}. For both of these examples
we consider also the stochastic perturbation of deterministic
trajectory planning. This emphasizes the way in which the noise
couples with the constraints to produce a non-trivial drift
vector field (the emergence of which is at the heart of the geometry of
Section~\ref{sec:4}); this is in sharp contrast to stochastic Hamiltonian
systems. Indeed, the Hamiltonian analogue of non-holonomic reduction
is reduction at the $0$-level set of the standard cotangent bundle 
momentum map, which 
reduces Brownian motion to Brownian motion in the base with respect to
the induced metric. 
This is a manifestation of the idea that the amount by
  which a non-holonomic system differs from a Hamiltonian one can be
  measured by the amount by which the induced diffusion differs from
  Brownian motion -- and vice versa.

However, in the non-holonomic setting, the
constraints induce a drift giving rise to drifted Brownian motion on
the base space. This drift is quantified in Section~\ref{sec:4} and we
use it to make the perturbed motion follow a given curve \emph{on
average}. We show how the explicit form of the drift allows, in
principle, for a simple numerical implementation to solve such a motion
planning problem. It should be noted, though, that we have made no
attempt to study stability or convergence properties of the resulting
numerical algorithm. 
Similar problems have been treated, from a different perspective, in
the engineering literature; see \cite{ABG08,ZC04} and the references
therein.

\section{Non-holonomic systems}\label{sec:2}

We recall some facts about non-holonomic and, 
specifically, $G$-Chaplygin systems. 
Then we give a necessary and sufficient condition
for the existence of a preserved measure that is 
suitable for our
applications in Section~\ref{sec:4}.

A \textit{non-holonomic system} is a triple 
$(Q,\mathcal{D},\mathcal{L})$ consisting of a 
$n$-dimensional configuration manifold $Q$, a 
constraint distribution $\mathcal{D}\subset TQ$
which is smooth and of constant rank $r<n$ (i.e., it is
a vector subbundle of $TQ$ of rank $r$), and a smooth Lagrangian 
function $\mathcal{L}:TQ\to\mathbb{R}$. The dynamics 
of $(Q,\mathcal{D},\mathcal{L})$ are given by the
Lagrange-d'Alembert principle; see 
\cite{AKN02,BS93,B03,CMR01,C02,HG09,K92}. 
Throughout this paper, we assume that $\mathcal{L}$ is the kinetic energy 
of a Riemannian metric $\mu$ on $Q$.

\subsection{Almost Hamiltonian formulation}
Since $TQ\ni u_q\mapsto \mu(q)(u_q,\cdot) \in T^*Q$ 
is a vector bundle isomorphism covering the identity 
on $Q$, we shall identify the vector bundles $TQ$ 
with $T^*Q$. We follow \cite{BS93} to give 
an almost Hamiltonian description of the dynamics of 
$(Q,\mathcal{D},\mathcal{L})$. Let $\tau_Q: TQ\to Q$
be the tangent bundle projection and $\iota: \mathcal{D}\hookto TQ$ the inclusion. Define 
\begin{equation}
\label{def_c}
\mathcal{C}:= \left\{X_{u_q} \in T \mathcal{D}\mid
u_q \in \mathcal{D},\; T_{u_q}(\tau_Q\circ\iota)\left(
X_{u_q}\right) \in \mathcal{D} \right\}
= \left(T(\tau_Q\circ\iota)\right)^{-1}(\mathcal{D}). 
\end{equation}
In standard vector bundle charts of $TQ$ and $TTQ$, we write
$u_q$ as $(q, \dot{q})$ and
$X_{u_q}$ as $(q, \dot{q}, \delta q, \delta\dot{q})$, respectively.
Since $(\tau_Q \circ \iota)(q, \dot{q}) = q$, it
follows that $T(\tau_Q\circ\iota)(q, \dot{q}, \delta q, 
\delta\dot{q})=(q, \delta q)$ and hence $\mathcal{C}=
\{(q, \dot{q}, \delta q, \delta\dot{q})\mid 
(q, \dot{q}), (q, \delta q)
\in \mathcal{D}\}$, 
$\ker \left(T(\tau_Q\circ\iota)(q, \dot{q}, \cdot ,\cdot)
\right)=\{(q, \dot{q}, 0 ,\delta\dot{q})\mid 
\delta\dot{q} \in \mathbb{R}^n\}$. Thus 
$\mathcal{C}$ is a vector subbundle of 
$T\mathcal{D}$ of rank $2r$. 
(If $\mathcal{D}$ is the horizontal subbundle of a 
principal connection of some proper and free $G$-action 
on $Q$  then $\mathcal{C}$ is the
horizontal space of the tangent lifted $G$-action 
on $\mathcal{D}$. See \eqref{e:C-G} below.) 
According to \cite[Section~5]{BS93}  we have
\begin{equation}\label{e:C}
 \big(T(TQ)\big)|\mathcal{D}= \C\oplus\C^{\Om}
\end{equation}
where $\C^{\Omega}:=\{X_{u_q} \in T_{u_q}(TQ) \mid 
u_q \in \mathcal{D},\,\Omega(u_q)(X_{u_q}, Y_{u_q})=0, 
\forall Y_{u_q} \in \mathcal{C}\}$  is the 
$\Omega$-orthogonal complement of $\mathcal{C}$ in 
$\big(T(TQ)\big)|\mathcal{D}$; $\Omega$ denotes the 
canonical symplectic form on $TQ\cong T ^\ast Q$. We
will prove identity \eqref{e:C} later on, after the proof of
Proposition \ref{prop_Omega_formula}.

For reasons that will become clear in Section~\ref{sec:4}, we elaborate 
on \eqref{e:C}. We use the Levi-Civita connection $\nabla^{\mu}$ on 
$TQ\to Q$ to decompose $TTQ = \hor^{\mu}\oplus\ver(\tau_Q)$, where
$\ver(\tau_Q)=\ker(T\tau_Q: TTQ\to TQ)$
is the 
\textit{vertical} and $\operatorname{Hor}^\mu \subset 
TTQ$ is the \textit{horizontal subbundle}. Recall that a curve $v(t)$
in 
$TQ$ is \textit{horizontal}
if its covariant derivative $\frac{Dv(t)}{Dt}: =
\left.\frac{d}{ds}\right|_{s=0} \mathbb{P}_t^{t+s}v(t+s)$
vanishes; here $\mathbb{P}_t^{t+s}:T_{q(t+s)}Q 
\rightarrow T_{q(t)}Q$ is the parallel transport operator
of the Levi-Civita connection $\nabla^\mu$ and $q(t):=
\tau_Q(v(t))$. Alternatively, since $\frac{Dv(t)}{Dt}
= \nabla^\mu_{dq(t)/dt} v(t)$, or in coordinates,
$\frac{Dv^i(t)}{Dt}=\frac{dv^i(t)}{dt} + 
\Gamma_{jk}^i(q(t))\frac{dq^j(t)}{dt}v^k(t)$, the 
curve $v(t)$ is horizontal if and only if in any 
standard tangent bundle chart 
\begin{equation}
\label{horiz_cond}
\frac{dv^i(t)}{dt} + 
\Gamma_{jk}^i(q(t))\frac{dq^j(t)}{dt}v^k(t)=0.
\end{equation} 
A vector $X_{u_q} \in T_{u_q}TQ$ is called
\textit{horizontal} if it is tangent to a horizontal 
curve. The \textit{horizontal space} 
$\operatorname{Hor}^\mu_{u_q} \subset T_{u_q}TQ$ is the vector 
subspace formed by all horizontal vectors.

If $u_q = \dot{q}^i\frac{\partial}{\partial q^i} \in 
T_qQ$, the decomposition of a vector $X_{u_q} = 
A^i \frac{\partial}{\partial q ^i} + 
B^i\frac{\partial}{\partial\dot{q}^i} \in T_{u_q}TQ$ in its horizontal and vertical part is
\begin{equation}
\label{hv_dec}
A^i\frac{\partial}{\partial q^i} + 
B^i\frac{\partial}{\partial \dot{q}^i} = 
\left(A^i\frac{\partial}{\partial q^i} -
\Gamma^i_{jk}\dot{q}^jA^k
\frac{\partial}{\partial \dot{q}^i}\right) +
\left(\Gamma^i_{jk}\dot{q}^jA^k +B^i\right)
\frac{\partial}{\partial \dot{q}^i}\,.
\end{equation}
Indeed, since $T_{u_q}\tau_Q\left(
R^i\frac{\partial}{\partial q^i} + 
S^i \frac{\partial}{\partial \dot{q}^i}\right) = 
R^i\frac{\partial}{\partial q^i}$ it follows that
\[
 \ker T_{u_q} \tau_Q = \left\{\left. S^i\frac{\partial}{\partial
  \dot{q}^i} \,\right|\, S^i 
\in \mathbb{R}
\right\}
\]
which shows
that the second summand in \eqref{hv_dec} is vertical.
The first summand is horizontal since it verifies the
horizontality condition \eqref{horiz_cond} (with 
$v^i=\dot{q}^i$, $A^i=\frac{dq^i}{dt}$, and 
$\frac{dv^i}{dt}= - \Gamma^i_{jk}\dot{q}^jA^k$). In
particular, note that $T_{u_q} \tau_Q: \mathcal{C}_{u_q}
\cap \operatorname{Hor}^\mu_{u_q} \rightarrow 
\mathcal{D}_q$ is an isomorphism: 
$A^i\frac{\partial}{\partial q^i} -
\Gamma^i_{jk}\dot{q}^jA^k
\frac{\partial}{\partial \dot{q}^i} \in 
\operatorname{Hor}^\mu_{u_q}$ maps to the given vector 
$A^i\frac{\partial}{\partial q^i} \in \mathcal{D}_q$. 
Similarly $T_{u_q}\tau_Q: \hor^{\mu}_{u_q}\to T_qQ$ 
is an isomorphism. Its inverse is the horizontal lift 
mapping which is often written as a map 
$\operatorname{hl}^{\mu}: TQ\times_QTQ\cong\hor^{\mu}$,
$(u_q,v_q)\mapsto(T_{u_q}\tau_Q|\hor^{\mu}_{u_q})^{-1}
(v_q)$. Interpreting $\operatorname{pr}_1: TQ \times_Q 
TQ \rightarrow TQ$ as a vector bundle over $TQ$ with 
base the first factor, makes $\operatorname{hl}^{\mu}: 
TQ\times_QTQ\stackrel{\sim}\rightarrow \hor^{\mu}$ into a 
vector bundle isomorphism covering the identity on $TQ$.

Let $K: \operatorname{Ver}( \tau_Q) \rightarrow TQ \times_Q TQ$ be 
the inverse to the vertical lift mapping 
$\textup{vl}:TQ\times_Q TQ\stackrel{\sim}\longrightarrow 
\ver(\tau_Q)$ defined by $\operatorname{vl}(u_q,v_q):= 
\left.\frac{d}{dt}\right|_{t=0}(u_q+tv_q)$, for all
$u_q, v_q \in T_qQ$. In standard coordinates, 
$K(q, \dot{q}, 0, \delta\dot{q}) = (q,\dot{q},q,\delta\dot{q})$. 
 In particular, $K(X_{u_q}) \in T_qQ$.
In addition, $T\tau_Q: \hor^{\mu}\to TQ$
and $K: \ver(\tau_Q)\to TQ$ restricted to each fiber
over $TQ$ are linear isomorphisms. Let $P_{\rm hor}$ and 
$P_{\rm ver}$ denote the horizontal and vertical 
projections associated to $\hor^{\mu}$. By abuse of notation, 
we sometimes write $K$ 
also for $K\circ P_{\rm ver}: TTQ\to\ver(\tau_Q)\to TQ$.
We have thus the vector bundle isomorphism over 
$\mathcal{D}$
\begin{align}\label{first_c}
\mathcal{C} 
 &\stackrel{\sim}\longrightarrow 
 (\mathcal{D}\times_Q\mathcal{D}) \oplus \ker T(\tau_Q\circ\iota), \\
\label{first_c_direct}
 X_{u_q}&\longmapsto
 \Big(u_q,T_{u_q}\tau_Q\left(X_{u_q}\right),
     K\left(P_{\rm ver}\left(X_{u_q}\right)\right)
 \Big),\\
 \label{first_c_inverse}
 \textup{hl}^{\mu}_{u_q}(v_q) + \operatorname{vl}(u_q,w_q)
 & \longleftmapsto 
 (u_q, v_q,w_q),
\end{align}
where we regard $\mathcal{D}\times_Q \mathcal{D}\ni(u_q,v_q) \mapsto
u_q \in \mathcal{D}$ 
as a vector
bundle over $\mathcal{D}$.
Notice also that $T\mathcal{D}\supset \ker T(\tau_Q\circ\iota)
= \bigsqcup_{(q,u)\in\mathcal{D}}\textup{vl}_{(q,u)}\mathcal{D}_q$.

\begin{proposition}
\label{prop_Omega_formula}
The canonical symplectic form $\Omega \in \Omega^2(TQ)$
has the expression
\begin{equation}
\label{Omega_expression}
\Omega(u_q)\left(X_{u_q},Y_{u_q}\right)
 = \mu(q)(T_{u_q}\tau_Q(X_{u_q}),K(Y_{u_q}))
 -\mu(q)(T_{u_q}\tau_Q(Y_{u_q}),K(X_{u_q})),
\end{equation}
for any $q \in Q$, $u_q \in T_qQ$, $X_{u_q},Y_{u_q} \in 
T_{u_q}(TQ)$.
\end{proposition}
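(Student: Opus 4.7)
The plan is to obtain \eqref{Omega_expression} by a direct comparison in a standard tangent bundle chart, exploiting the fact that under the musical isomorphism $\flat: u_q \mapsto \mu(q)(u_q,\cdot)$ the canonical 1-form on $T^*Q$ pulls back to $\theta := \mu_{ij}(q)\dot q^j\, dq^i$ on $TQ$, with $\Omega = -d\theta$. Since the right-hand side of \eqref{Omega_expression} is manifestly a skew-symmetric bilinear form on $T_{u_q}TQ$ built from tensorial data, it is enough to check the identity for coordinate basis combinations of $X_{u_q},Y_{u_q}$.

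Writing $X_{u_q} = A^i\partial_{q^i} + B^i\partial_{\dot q^i}$ and $Y_{u_q} = C^i\partial_{q^i} + D^i\partial_{\dot q^i}$, a straightforward expansion of $\Omega = -d\theta$, followed by antisymmetrization over the pair of indices $(a,c)$, yields
\begin{equation*}
\Omega(u_q)(X_{u_q},Y_{u_q}) = \mu_{ij}\bigl(A^iD^j - C^iB^j\bigr) + \bigl(\partial_c\mu_{aj}-\partial_a\mu_{cj}\bigr)\dot q^j A^a C^c.
\end{equation*}
On the other hand, the decomposition \eqref{hv_dec} says the vertical part of $X_{u_q}$ is $(\Gamma^l_{jk}\dot q^j A^k + B^l)\partial_{\dot q^l}$, so inverting the vertical lift gives $K(X_{u_q}) = (\Gamma^l_{jk}\dot q^j A^k + B^l)\partial_{q^l}$ in $T_qQ$, while $T_{u_q}\tau_Q(X_{u_q}) = A^i\partial_{q^i}$. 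Substituting these into the right-hand side of \eqref{Omega_expression} and antisymmetrizing again in $(a,c)$ gives
\begin{equation*}
\mu_{ij}\bigl(A^iD^j - C^iB^j\bigr) + \bigl(\mu_{ab}\Gamma^b_{jc}-\mu_{cb}\Gamma^b_{ja}\bigr)\dot q^j A^a C^c.
\end{equation*}

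Equality of the two expressions therefore reduces to the identity $\mu_{ab}\Gamma^b_{jc} - \mu_{cb}\Gamma^b_{ja} = \partial_c\mu_{aj} - \partial_a\mu_{cj}$, which is immediate from Koszul's formula $\Gamma^b_{jc}=\tfrac12\mu^{bd}(\partial_j\mu_{cd}+\partial_c\mu_{jd}-\partial_d\mu_{jc})$ after the symmetric $\partial_j\mu_{ac}$ contributions cancel. I do not anticipate any conceptual obstacle: the only care needed is the correct antisymmetrization over $(a,c)$, so that only the manifestly skew differences survive on each side. A coordinate-free alternative would verify the formula on each bi-homogeneous piece of the splitting $T_{u_q}TQ = \hor^\mu_{u_q}\oplus\ver(\tau_Q)_{u_q}$: both sides vanish on horizontal-horizontal pairs (where the cancellation is exactly the torsion-freeness of $\nabla^\mu$) and on vertical-vertical pairs (where $T\tau_Q$ kills both arguments), and on mixed pairs both sides collapse to $\mu$ evaluated on $T\tau_Q$ of the horizontal and $K$ of the vertical vector.
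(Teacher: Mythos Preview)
Your proof is correct and follows essentially the same route as the paper: both compute $\Omega=-\mathbf{d}\theta$ in a standard chart, express $K$ via the Christoffel symbols from \eqref{hv_dec}, and reduce the comparison to an identity among $\partial\mu$ and $\mu\Gamma$ terms. Your organization is slightly cleaner---by antisymmetrizing in $(a,c)$ at the outset you isolate the single identity $\mu_{ab}\Gamma^b_{jc}-\mu_{cb}\Gamma^b_{ja}=\partial_c\mu_{aj}-\partial_a\mu_{cj}$, whereas the paper carries all terms through and lets the symmetric $\Gamma^s_{ik}\mu_{rs}\dot q^r(A^iC^k-A^kC^i)$ contribution vanish at the end; but the substance is identical. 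The coordinate-free sketch you append (checking hor--hor, ver--ver, and mixed pairs separately, with torsion-freeness handling the hor--hor case) is a genuine alternative and would also work.
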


\begin{proof}
In an arbitrary standard tangent bundle chart, we have
\begin{equation}
\label{riemannian_omega}
\Omega = \frac{\partial \mu_{ik}}{ \partial q^j}\dot{q}^k
\mathbf{d}q^i \wedge \mathbf{d}q^j +
\mu_{ij}\mathbf{d}q^i \wedge \mathbf{d}\dot{q}^j,
\end{equation}
where the Riemannian metric is written as $\mu=
\mu_{ij} \mathbf{d}q^i \otimes\mathbf{d}q^j$, with 
$\mu_{ij} = \mu_{ji}$. Thus, if 
\[
X_{u_q} = A ^i\frac{\partial}{\partial q^i} + 
B^i\frac{\partial}{\partial \dot{q}^i}\,, \qquad 
Y_{u_q} = C^i\frac{\partial}{\partial q^i} + 
D^i\frac{\partial}{\partial \dot{q}^i}\,,
\]
 we get 
\begin{equation}
\label{formula}
\Omega(u_q)(X_{u_q}, Y_{u_q})=
\frac{\partial \mu_{ik}}{ \partial q^j}\dot{q}^k 
(A^iC^j - A^jC^i) + \mu_{ij}(A^iD^j - C^iB^j).
\end{equation}
On the other hand, $T_{u_q}\tau_Q(X_{u_q}) =
A^i\frac{\partial}{\partial q^i}$, 
$T_{u_q}\tau_Q(Y_{u_q}) = C^i\frac{\partial}{\partial q^i}$ and
\begin{align*}
K\left(A^i\frac{\partial}{\partial q^i} + 
B^i\frac{\partial}{\partial \dot{q}^i}\right)&=
\left(\Gamma^i_{jk}\dot{q}^jA^k +B^i\right)
\frac{\partial}{\partial q^i}, \\ 
K\left(C^i\frac{\partial}{\partial q^i} + 
D^i\frac{\partial}{\partial \dot{q}^i}\right)&=
\left(\Gamma^i_{jk}\dot{q}^jC^k +D^i\right)
\frac{\partial}{\partial q^i}
\end{align*}
by \eqref{hv_dec} and the definition of $K$. Therefore,
\begin{align*}
&\mu(q)(T_{u_q}\tau_Q(X_{u_q}),K(Y_{u_q}))
 -\mu(q)(T_{u_q}\tau_Q(Y_{u_q}),K(X_{u_q}))\\
 & \qquad= \mu_{ij}\Gamma^j_{rk}\dot{q}^r(A^iC^k-A^kC^i)
 + \mu_{ij}(A^iD^j-C^iB^j)\\
 & \qquad = \mu_{ij}\frac{1}{2}\mu^{js}\left(
 \frac{\partial \mu_{sk}}{\partial q^r} +
  \frac{\partial \mu_{sr}}{\partial q^k} - 
   \frac{\partial \mu_{rk}}{\partial q^s} \right)
   \dot{q}^r(A^iC^k-A^kC^i)
 + \mu_{ij}(A^iD^j-C^iB^j)\\
 & \qquad = \frac{1}{2}\left(
 \frac{\partial \mu_{ik}}{\partial q^r} +
  \frac{\partial \mu_{ir}}{\partial q^k} - 
   \frac{\partial \mu_{rk}}{\partial q^i} \right)
   \dot{q}^r(A^iC^k-A^kC^i)
 + \mu_{ij}(A^iD^j-C^iB^j)\\
& \qquad= \frac{1}{2}
\left(
 \frac{\partial \mu_{ik}}{\partial q^r} -
  \frac{\partial \mu_{ir}}{\partial q^k} - 
   \frac{\partial \mu_{rk}}{\partial q^i} \right)
   \dot{q}^r(A^iC^k-A^kC^i)+
 \frac{\partial \mu_{ir}}{\partial q^k}
  \dot{q}^r(A^iC^k-A^kC^i)\\
  & \qquad \qquad 
 + \mu_{ij}(A^iD^j-C^iB^j)\\
 &\qquad = -\Gamma_{ik}^sg_{rs}
 \dot{q}^r(A^iC^k-A^kC^i)+
 \frac{\partial \mu_{ir}}{\partial q^k}
  \dot{q}^r(A^iC^k-A^kC^i)
 + \mu_{ij}(A^iD^j-C^iB^j)\\
 &\qquad  = \frac{\partial \mu_{ir}}{\partial q^k}
  \dot{q}^r(A^iC^k-A^kC^i)
 + \mu_{ij}(A^iD^j-C^iB^j)
\end{align*}
because $\Gamma_{ik}^s$ is symmetric and
$(A^iC^k-A^kC^i)$ is skew-symmetric in $(i,k)$. 
However, this expression coincides with
\eqref{formula} which proves \eqref{Omega_expression}.  
\end{proof}

Thus by \eqref{Omega_expression} we get
\begin{align}
\label{second_c}
\mathcal{C}^\Omega_{u_q} 
& = \left\{X_{u_q}\in T_{u_q}(TQ) \mid u_q \in 
\mathcal{D}_q,\right. \\
& \qquad \left. \mu(q)(T_{u_q}\tau_Q(X_{u_q}),K(Y_{u_q}))
 -\mu(q)(T_{u_q}\tau_Q(Y_{u_q}),K(X_{u_q}))=0, \;
 \forall Y_{u_q} \in \mathcal{C}_{u_q}\right\} 
 \nonumber \\
 &= \left\{X_{u_q}\in T_{u_q}(TQ) \mid u_q \in 
\mathcal{D}_q,\; K(P_{\rm ver}(X_{u_q})) \in 
\mathcal{D}_q^\perp,\;
T_{u_q}\tau_Q(P_{\rm hor}(X_{u_q})) \in \mathcal{D}_q^\perp
\right\} \nonumber \\
&\cong (\mathcal{D}\times_Q\mathcal{D}^{\perp}) 
\oplus \bigsqcup_{u_q\in\mathcal{D}}
\operatorname{vl}_{u_q}(\mathcal{D}^{\bot}) \nonumber 
\end{align} 
since $K, T_{u_q} \tau_Q: 
\mathcal{C}_{u_q}\rightarrow \mathcal{D}_q$ 
are surjective, where 
$\mathcal{D}^{\perp}\subset TQ$ is the $\mu$-orthogonal of $\mathcal{D}$ and the vector bundle isomorphism in the last line of \eqref{second_c}
is given by $X_{u_q} \mapsto (u_q, T_{u_q} \tau_Q(X_{u_q}),
P_{\rm ver}(X_{u_q}))$. This expression of $\mathcal{C}^\Omega$
and \eqref{def_c} show that $\mathcal{C}\cap 
\mathcal{C}^ \Omega=\{0\}$ which proves \eqref{e:C}.

In particular, if 
\begin{equation}\label{e:P}
 P: (T(TQ))|\mathcal{D}=\C\oplus\C^{\Om}\to\C
\end{equation} 
is the projection along $\C^{\Om}$ and
$\Pi: TQ=\mathcal{D}\oplus\mathcal{D}^{\bot}\to\mathcal{D}$ is the orthogonal projection
then it follows that 
\begin{equation}\label{e:Pi}
 T(\tau_Q\circ \iota)\circ P = \Pi\circ T(\tau\circ \iota).
\end{equation}
Indeed,  using the above description of $\C$ and 
$\C^{\Om}$, this follows immediately by decomposing 
$\big(T(TQ)\big)|\D$ into its
horizontal and vertical parts. 

Let $\Ham$ be the kinetic energy Hamiltonian on $TQ$ 
which we regard as the Legendre
transform of $\mathcal{L}$. Then the dynamics of the 
non-holonomic system
$(Q,\mathcal{D},\mathcal{L})$ are given by the vector field 
\[
 X_{\mathcal{H}}^{\mathcal{C}} := PX_{\mathcal{H}} 
 \in\X(\mathcal{D})
\]
where $X_{\mathcal{H}}$ is the Hamiltonian vector field 
of $\Ham$. More generally, for a function $f\in\cinf(TQ)$ 
we regard $X_{f}^{\mathcal{C}} := PX_{f}\in
\mathfrak{X}(\mathcal{D})$  as the non-holonomic
vector field of $f$. Let $\Omega^{\mathcal{C}}$ denote 
the fiberwise restriction of $\iota^*\Omega$ to
$\C\times\C$.  Then \eqref{e:C} implies that 
$\Omega^{\mathcal{C}}$ is non-degenerate and 
we may rewrite the defining equation for 
$X^{\mathcal{C}}_f$ as
\[
\mathbf{i}_{X^{\mathcal{C}}_f}\Omega^{\mathcal{C}}
 = (\mathbf{d}f)^{\mathcal{C}}
\]
where $(\mathbf{d}f)^{\mathcal{C}}$ is the fiberwise 
restriction of $\iota^*(\mathbf{d}f)$ to $\C$.

\subsection{$G$-Chaplygin systems}
\label{sec_Chaplygin}
Now we shall consider the case when the non-holonomic 
system is invariant under a group action such that the 
constraints are given by a principal bundle connection. 
A \textit{$G$-Chaplygin system} consists of a Riemannian 
configuration space $(Q,\mu)$, a Lie group $G$ with Lie algebra $\mathfrak{g}$ which 
acts freely and properly on $(Q,\mu)$
by isometries, and a principal bundle connection 
$\mathcal{A}\in \Omega^1(Q; \mathfrak{g})$ on
$\pi: Q\toto Q/G =: M$. For $\xi\in \mathfrak{g}$ denote
by $\xi_Q \in \mathfrak{X}(Q)$ the infinitesimal generator defined by
\[
\xi_Q(q): = \left.\frac{d}{dt}\right|_{t=0}\exp (t \xi)
\cdot q
\]
for all $q \in Q$, where $\exp: \mathfrak{g}\rightarrow 
G $ is the exponential map.

The Lagrangian of this system is 
the kinetic energy $\mathcal{L} := 
\frac{1}{2}\|\cdot\|_{\mu}^2$. It is also assumed that 
the constraint distribution is the horizontal subbundle
of the connection $\mathcal{A}$, i.e., 
$\mathcal{D} := \ker\mathcal{A}\subset TQ$. Thus the 
system $(Q,\mathcal{D},\mathcal{L})$ is a
non-holonomic system and the dynamics are determined by 
the Lagrange-d'Alembert equations; see
\cite{AKN02,BS93,B03,CMR01,C02,HG09,K92}. 
It is \textit{not} assumed that $\mathcal{D}$ is
orthogonal to the vertical space $\ker T\pi$.

Since $\D$ is the horizontal subbundle, it is invariant 
with respect to the tangent lifted $G$-action on $TQ$. 
Thus we obtain a principal $G$-fiber bundle
$\D\toto\D/G = TM$. This bundle carries an induced connection
$\iota^*\tau^*\A$, where $\iota: \D\hookto TQ$ is
the inclusion and $\tau: TQ\to Q$ is the tangent
bundle projection. Its associated horizontal bundle is 
\begin{equation}\label{e:C-G}
\ker(\tau\circ\iota)^*\mathcal{A}=
\{u_q \in T\mathcal{D}\mid T(\tau\circ\iota)u_q\in\ker\mathcal{A}=\mathcal{D}\}
 =
 \mathcal{C}.
\end{equation}
Let $\mu_0$ denote the induced Riemannian metric on 
$M:=Q/G$. Then the isomorphism
\[
T_q \pi: \left(\D_q, \mu(q)|\D_q \right)
\rightarrow \left(T_{\pi(q)}M, \mu_0(\pi(q))\right)
\]
is an isometry for the indicated inner products for 
all $q \in Q$.

\subsection{The non-holonomic correction} In order to 
carry out non-holonomic reduction we need to introduce
a two-form on $TM$ induced by the momentum map and the
curvature $\operatorname{Curv}^\mathcal{A} \in 
\Omega^2(Q; \mathfrak{g})$ of the connection 
$\mathcal{A}$. As we shall see in the next subsection,
this form is the correction that one needs to subtract
from the canonical symplectic form in order to give
an almost Hamiltonian formulation of the reduced
non-holonomic system. To define this form, we need 
three ingredients:
\begin{itemize}
\item[(i)] The \textit{adjoint bundle}:
Let $G$ act on $Q \times \mathfrak{g}$ by the (free and
proper) action given by $g \cdot (q, \xi) := 
(g\cdot q,\operatorname{Ad}_g \xi)$, for all $g \in G$, 
$q \in Q$, $\xi\in \mathfrak{g}$, and let 
$\widetilde{\mathfrak{g}}:= Q \times_G \mathfrak{g} 
= (Q \times \mathfrak{g})/G$ be the orbit space. 
Elements of $\widetilde{\mathfrak{g}}$ are denoted
by $[q, \xi]_G$. The projection $\rho:
\widetilde{\mathfrak{g}}\ni [q, \xi]_G \mapsto \pi(q) 
\in M$ defines the \textit{adjoint vector bundle} 
whose fibers are Lie algebras.

\item[(ii)] The \textit{curvature on the base}:
$\operatorname{Curv}^\mathcal{A} \in 
\Omega^2(Q; \mathfrak{g})$ naturally induces a 
two-form on $\operatorname{Curv}_0^\mathcal{A} \in 
\Omega^2(M; \widetilde{\mathfrak{g}})$ on the base $M$ 
with values in the adjoint bundle 
$\widetilde{\mathfrak{g}}$ by
\[
\operatorname{Curv}_0^\mathcal{A}(\pi(q)) \left(
T_q \pi(u_q), T_q \pi(v_q) \right): = 
\left[q, \operatorname{Curv}^ \mathcal{A}(q)
(u_q,v_q) \right]_G
\]
for all $q \in Q$, $u_q, v_q \in T_qQ$.

\item[(iii)] The \textit{momentum map of the tangent 
lifted $G$-action}: $\mathbf{J}_G: TQ \rightarrow 
\mathfrak{g}^\ast$ is defined by 
$\left\langle\mathbf{J}_G(u_q), \xi\right\rangle
= \mu(q)(u_q, \xi_Q(q))$ 
for all $\xi\in \mathfrak{g}$, $u_q \in TQ$, and is
equivariant.
\end{itemize}

To get a grip on the non-holonomic correction two-form, 
we begin describing it if $G$ is a commutative group. 
Then the adjoint bundle is trivial:
$\rho: \widetilde{\mathfrak{g}} = M \times \mathfrak{g}
\rightarrow M$ is the projection on the first factor. 
Thus, $\operatorname{Curv}_0^ \mathcal{A} \in 
\Omega^2(M; \mathfrak{g})$ and we define the
\textit{non-holonomic correction two-form} $\Xi\in 
\Omega^2(TM)$ by $\Xi : = \left\langle 
 \mathbf{J}_G\circ\operatorname{hl}^{\mathcal{A}},
 \tau_M^*\operatorname{Curv}_0^{\mathcal{A}}
\right\rangle$, that is,
\begin{equation}
\label{correction_form_abelian}
\Xi(u_x) \left(X_{u_x}, Y_{u_x}\right) : = 
\left\langle \mathbf{J}_G\left(
{\operatorname{hl}^{\mathcal{A}}}_q(u_x) \right),
\operatorname{Curv}_0^{\mathcal{A}}(x)\left(
 T_{u_x} \tau_M(X_{u_x}), T_{u_x} \tau_M(Y_{u_x}) \right)
\right\rangle
\end{equation}
for all $x \in M$, $u_x \in T_xM$, $X_{u_x}, Y_{u_x}
\in T_{u_x}(TM)$,
where ${\operatorname{hl}^\mathcal{A}}_q:=\left(
T_q \pi| \mathcal{D}_q\right)^{-1}: T_xM
\rightarrow \mathcal{D}_q\subset TQ$, $x = \pi(q)$, is the 
horizontal lift operator associated to the connection
$\mathcal{A}$. The pairing on the right hand side of
this formula is between $\mathfrak{g}^\ast$ and 
$\mathfrak{g}$. The right hand side of this formula
seems to depend on $q \in Q$. However, this is not the
case because the horizontal lifts at two distinct
points in $Q$ are related by a group element and 
the momentum map is invariant under the $G$-action
(since $G$ is commutative).

As stated, this formula does not make sense for 
general Lie groups because the momentum map is 
$\mathfrak{g}^\ast$-valued and the curvature on the
base is $\widetilde{\mathfrak{g}}$-valued so the pairing
makes no sense. However, the idea for the general 
formula is based on \eqref{correction_form_abelian}.
We define $\Xi\in \Omega^2(TM)$ by
\begin{multline}\label{e:JK}
\Xi(u_x)(X_{u_x},Y_{u_x})\\
:=\vv<\mathbf{J}_G(\hl_q(u_x)),
\operatorname{Curv}^{\mathcal{A}}(q)\left(
{\operatorname{hl}^\mathcal{A}}_q\left(
T_{u_x}\tau_M(X_{u_x})\right),
{\operatorname{hl}^\mathcal{A}}_q\left(
T_{u_x}\tau_M(Y_{u_x}))\right)\right)>
\end{multline}
for $X_{(x,u)},Y_{(x,u)}\in T_{(x,u)}(TM)$ and 
$q\in\pi^{-1}(x)$; since both entries in this pairing 
are $G$-equivariant the ambiguity cancels out, that is, 
the right hand side in \eqref{e:JK} does not depend 
on $q$ but only on $\pi(q)=x$. 

Due to the importance of this formula we make a few
additional comments. Recall that the momentum map 
$\mathbf{J}_G:TQ \rightarrow \mathfrak{g}^\ast$ is
equivariant with respect to the coadjoint action on 
$\mathfrak{g}^\ast$. The tangent lifted $G$-action
restricts to an action on $\D\subset TQ$; indeed 
$\D=\ker\A$ is the horizontal subbundle and is hence  
$G$-invariant. Corresponding to the $G$-principal
bundle projection $\D\toto\D/G=TM$ there is a natural 
connection which is induced from the connection 
$\A$ on $Q\toto Q/G$, namely $\iota^*\tau^*\A$ where
$\iota: D\hookto TQ$ is the inclusion and 
$\tau: TQ\to Q$ is the tangent bundle projection. 
The curvature of $\iota^*\tau^*\A$ is 
$\iota^*\tau^*\curv^{\mathcal{A}}$ which is equivariant:
$l_g^*\iota^*\tau^*\curv^{\mathcal{A}} =
\Ad_g\circ\left(\iota^*\tau^*\curv^{\mathcal{A}}\right)$, 
where $l_g: \D\to\D$ is the action of $g\in G$.
Thus the two-form $\vv<\mathbf{J}_G,\iota^*\tau_Q^*
\curv^{\mathcal{A}}>$ defines a $G$-invariant 
two-form on $\D$. 
This two-form is, moreover, horizontal: since
$\iota^*\tau^*\curv^{\mathcal{A}}$ is a curvature form 
on $\D\toto\D/G$ it vanishes upon insertion of vertical 
vectors, whence the same holds also for
$\vv<\mathbf{J}_G,\iota^*\tau^*\curv^{\mathcal{A}}>$. 
Thus the two-form $\vv<\mathbf{J}_G,\iota^*\tau_Q^*
\curv^{\mathcal{A}}>$ is basic and hence drops to 
a well-defined two-form $\Xi$ on $\D/G=TM$. Implementing
the computations suggested above gives \eqref{e:JK}.

\subsection{Non-holonomic reduction} Identify $TQ$ with 
$T^*Q$ by the metric $\mu$ and $TM$ with $T^*M$ by the
metric $\mu_0$. Consider the orbit projection map
\[
 T\pi|\mathcal{D}: \mathcal{D}\toto \mathcal{D}/G = TM.
\]
We may also associate a fiberwise inverse to this
mapping which is given by the horizontal lift mapping
$\hl: Q\times_M TM\to \mathcal{D}$ associated to 
$\mathcal{A}$. The following statements are proved
in \cite{BS93,EKMR04,HG09}.

\begin{proposition}[Non-holonomic reduction]
\label{prop:comp}
The following hold.
\begin{enumerate}[\up (1)]
\item
$\Om^{\mathcal{C}}$ descends to a
non-degenerate two-form $\Omnh$ on $TM$.
\item
$\Omnh = \Om_M - \Xi\in \Omega^2(TM)$, 
where $\Omega_M = -\mathbf{d}\theta_M$ is the canonical symplectic form on $TM$ and $\Xi$ is the non-holonomic
correction two-form given by \eqref{e:JK}.
 
\item
Let $h: TQ\to\mathbb{R}$ be $G$-invariant. Then
the vector field
$X^{\mathcal{C}}_h$ is $T\pi|\mathcal{D}$-related
to the vector field $X^{\textup{nh}}_{h_0}$ on $TM$ defined by
\[
\mathbf{i}_{X^{\rm nh}_{h_0}}\Omnh = \mathbf{d}h_0
\] 
where $h_0:TM \rightarrow \mathbb{R}$ is the induced 
Hamiltonian.
\end{enumerate}
\end{proposition}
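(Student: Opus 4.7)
The plan is to prove (1) and (2) by a direct computation from Proposition~\ref{prop_Omega_formula} together with the structural identification of $\C$ given in \eqref{e:C-G}, and to obtain (3) as a formal consequence of (1) and non-degeneracy.

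For (1), the key observation is that \eqref{e:C-G} realizes $\C$ as the horizontal subbundle of the induced principal connection $\iota^*\tau^*\A$ on the $G$-bundle $\D\toto\D/G=TM$. Hence at each $u_q\in\D$, the map $T(T\pi|\D)$ restricts to a linear isomorphism $\C_{u_q}\xrightarrow{\sim}T_{u_x}(TM)$, where $u_x:=T\pi(u_q)$. Because $G$ acts on $(Q,\mu)$ by isometries, its tangent lift preserves $\Om$ on $TQ\cong T^*Q$; since both $\D$ and $\C$ are $G$-invariant, the fiberwise form $\Om^\C$ is $G$-invariant. Defining $\Omnh(u_x)$ by transporting $\Om^\C(u_q)$ through the above isomorphism is therefore independent of the choice of lift $u_q\in\pi^{-1}(u_x)$, and the non-degeneracy of $\Omnh$ is inherited from that of $\Om^\C$ on $\C$, already noted after \eqref{e:C}.

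For (2), I would evaluate both sides on $X,Y\in T_{u_x}(TM)$. Fix $q\in\pi^{-1}(x)$, set $u_q:={\operatorname{hl}^\A}_q(u_x)$, and let $\tilde X,\tilde Y\in\C_{u_q}$ be the $\iota^*\tau^*\A$-horizontal lifts of $X,Y$. Because $T\pi\circ T(\tau_Q\circ\iota)=T\tau_M\circ T(T\pi|\D)$, one checks that $T(\tau_Q\circ\iota)\tilde X$ is the $\A$-horizontal lift of $T\tau_M X$, hence lies in $\D_q$. Applying Proposition~\ref{prop_Omega_formula} on $TQ$, and its analogue on $TM$ (with $\mu_0$ and the Levi--Civita connector $K^M$ of $\mu_0$), and decomposing $K\tilde X=P(K\tilde X)+P^\perp(K\tilde X)$ along $T_qQ=\D_q\oplus\D_q^\perp$, the $\D_q$-components combine, via the isometry $T_q\pi|\D_q$, into exactly $\Om_M(u_x)(X,Y)$. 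The $\D_q^\perp$-components form the correction. A direct Koszul-formula calculation, performed on the $\A$-horizontal lift along a representing curve for $\tilde X$, identifies $P^\perp(K\tilde X)$ as the vector whose $\mu$-pairing with ${\operatorname{hl}^\A}_q(T\tau_M Y)\in\D_q$ equals
\[
\bigl\langle \mathbf{J}_G(u_q),\, \operatorname{Curv}^\A(q)\bigl({\operatorname{hl}^\A}_q(T\tau_M X),\,{\operatorname{hl}^\A}_q(T\tau_M Y)\bigr)\bigr\rangle,
\]
via the defining relation $\langle\mathbf{J}_G(u_q),\xi\rangle=\mu(q)(u_q,\xi_Q(q))$. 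Collecting terms and tracking signs yields $\Omnh=\Om_M-\Xi$ by \eqref{e:JK}.

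Part (3) is then formal. Since $h$ is $G$-invariant and $\D$ is $G$-invariant, $h|_\D$ descends to $h_0\in C^\infty(TM)$, and $(\mathbf d h)^\C(u_q)=\mathbf d h_0(u_x)\circ T(T\pi|\D)|_{\C_{u_q}}$. Because $T(T\pi|\D)|\C$ carries $\Om^\C$ to $\Omnh$ by (1), the equation $\mathbf i_{X^\C_h}\Om^\C=(\mathbf d h)^\C$ projects fiberwise to $\mathbf i_{T(T\pi|\D)X^\C_h}\Omnh=\mathbf d h_0$; non-degeneracy of $\Omnh$ uniquely characterizes $X^{\textup{nh}}_{h_0}$, and $T\pi|\D$-relatedness follows. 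The main obstacle is the curvature identification in (2): extracting the $\D_q^\perp$-component of $K\tilde X$ as the momentum-map pairing of $\operatorname{Curv}^\A$, with correct sign and no spurious prefactor. This bookkeeping is delicate because $\D$ need not equal $(\ker T\pi)^\perp$, so the Riemannian submersion O'Neill machinery does not apply off the shelf, and the argument must be run directly from the Koszul formula for $\nabla^\mu$.
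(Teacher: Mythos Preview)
The paper does not supply its own proof of this proposition; it simply cites \cite{BS93,EKMR04,HG09} and remarks that a proof ``following the above outline'' (namely, the realization of $\Xi$ as the drop of the basic form $\langle\mathbf{J}_G,\iota^*\tau_Q^*\curv^{\mathcal A}\rangle$) appears in \cite[Prop.~2.2]{HG09}. So there is no in-paper argument to compare against. Your strategy for (1) and (3) is standard and correct, and your idea for (2)---evaluate $\Om^{\C}$ on $\C$-horizontal lifts via Proposition~\ref{prop_Omega_formula}, then compare with $\Om_M$ via its analogue on $TM$---is exactly the kind of direct computation the paper alludes to.

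There is, however, a concrete error in how you locate the correction term. You decompose $K\tilde X$ along $T_qQ=\D_q\oplus\D_q^{\perp}$ and claim the $\D_q^{\perp}$-component produces the $\Xi$-term by its $\mu$-pairing with $\hl_q(T\tau_M Y)\in\D_q$. But that pairing is identically zero by orthogonality: in the formula $\Om(u_q)(\tilde X,\tilde Y)=\mu(T\tau_Q\tilde X,K\tilde Y)-\mu(T\tau_Q\tilde Y,K\tilde X)$ the first arguments $T\tau_Q\tilde X,\,T\tau_Q\tilde Y$ already lie in $\D_q$ (since $\tilde X,\tilde Y\in\C$), so only the $\D_q$-components $\Pi K\tilde X,\,\Pi K\tilde Y$ contribute. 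The correction therefore does \emph{not} come from $P^{\perp}(K\tilde X)$; it comes from the fact that $\Pi K\tilde Y$ is \emph{not} the $\A$-horizontal lift of $K_M Y$. The discrepancy between the connectors of $\nabla^{\mu}$ and $\nabla^{\mu_0}$ on horizontal lifts is exactly where $\curv^{\mathcal A}$ enters (and, because $\D$ need not be $(\ker T\pi)^{\perp}$, the momentum map $\mathbf{J}_G$ enters through $\mu(u_q,\xi_Q(q))$ rather than vanishing). Equivalently, the cleaner route is the one the paper sketches before the proposition: pull back $\Om$ along the $\iota^*\tau^*\A$-horizontal lift, write $\Om^{\C}=(T\pi|\D)^*\Om_M$ plus a horizontal, $G$-invariant remainder, and identify that remainder directly with the basic form $\langle\mathbf{J}_G,\iota^*\tau_Q^*\curv^{\mathcal A}\rangle$, which by construction drops to $\Xi$. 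Your Koszul computation can be made to work, but only after you move the bookkeeping from $P^{\perp}K$ to the comparison $\Pi K\tilde Y$ versus $\hl_q(K_M Y)$.
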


In general, $\Omnh$ is an almost symplectic form, that 
is, it is non-degenerate and non-closed.  
We will denote the reduced Hamiltonian  by $\Hamc$ and 
refer to the almost Hamiltonian system $(TM,\Omnh,\Hamc)$ 
as the \textit{reduced data}. The identity 
$\Omnh = \Om_M - \Xi$ appears for the first time, albeit not
completely explicitly,
in \cite{BS93}.
A proof  using moving frames is given in
\cite{EKMR04}
where it is also called the ``$\vv<J,K>$-formula". 
A different proof
following the above outline
is contained in \cite[Prop~2.2]{HG09}.

\subsection{The preserved measure}
Does $(TM,\Omnh,\Hamc)$ possess a preserved measure? 
This is an important question since it says something 
about the possible existence of asymptotic equilibria 
and also plays a prominent role in the theory of 
integration of non-holonomic systems. Correspondingly,
this topic is touched upon in all of 
\cite{AKN02,B03,C02,EKMR04,HG09,H09,K92}.
In \cite{CCLM02}, a necessary and sufficient condition 
for the existence of a preserved measure in terms of 
local coordinates on the base manifold $M$ is given. 
We derive derive below an equivalent formulation of
this result which is more closely adapted to the 
Riemannian structure on $M$.  This point of view will
then be exploited in Section~\ref{sec:4} in the stochastic context.

For brevity we will denote 
$X=X^{\textup{nh}}_{\mathcal{H}_{\textup{c}}}$ in 
this subsection. Let $\Omega^m$, $m=\dim M$, be the 
Liouville volume on $TM$.  Then \textit{there is a
preserved measure for the flow of $X$ if and only if 
there is a strictly positive function 
$\N: M\to\mathbb{R}$ such that $(\N\circ \tau_M)\Omega^m$ 
is  preserved, that is,}
\begin{equation}\label{e:N}
\boldsymbol{\pounds}_X((\N\circ \tau_M)\Omega^m) = 0;
\end{equation}
see \cite{CCLM02} for a proof. In such a case $\N$ is called the
\textit{density} 
of the preserved measure
with respect to the Liouville volume. As shown in
\cite[Remark~7.4]{CCLM02}, it suffices to consider density functions on $M$. 

To reformulate condition \eqref{e:N} we want to use the 
fact that $(M,\mu_0)$ is a Riemannian manifold. Hence 
we equip $TM$ with the Sasaki metric $\sigma$ 
associated to $\mu_0$ (see, e.g., \cite{GK02}),
\begin{equation}
\label{sasaki_metric}
 \sigma(u_x)(X_{u_x},Y_{u_x}):= 
 \mu_0(x)\left(T_x\tau_M(X_{u_x}),T_x\tau_M(Y_{u_x})\right)
 + \mu_0(x)\left(K_M(X_{u_x}), K_M(Y_{u_x})\right)
\end{equation}
for all $X_{u_x},Y_{u_x}\in T_{u_x}(TM)$, where 
$\tau_M:TM \rightarrow M$ is the tangent bundle 
projection and $K_M:\operatorname{Ver}(\tau_M) 
\rightarrow TM \times _M TM$ is the inverse of the 
vertical lift map $\operatorname{vl}_M:TM \times _M TM 
\stackrel{\sim}\rightarrow \operatorname{Ver}(\tau_M)$;
note, in particular that $K_M(X_{u_x}) \in T_xM$. 
We recall some of the key properties of the Sasaki
metric; see \cite{GK02} for proofs.
\begin{itemize}
\item[(i)] The Sasaki metric $\sigma$ is the unique 
Riemannian metric  on $TM$ such that $\tau_M: 
(TM, \sigma) \rightarrow (M, \mu_0)$ is a Riemannian 
submersion, that is, the isomorphism 
\[
T_{u_m} \tau_M: 
 \left(\left(\ker T_{u_m} \tau_M\right)^{\perp_\sigma}
 = \hor_{u_m}, \sigma(u_m) \right) 
 \longrightarrow 
 (T_mM, \mu_0(m))
\]
is an isometry (for the indicated inner products) for 
all $u_m \in TTM$, where $\perp_ \sigma$ denotes the perpendicular 
relative to the Sasaki inner product $\sigma(u_m)$ on $T_{u_m}(TM)$.
\item[(ii)] $\hor$ and $\ver$ are $\sigma$-perpendicular
complements of each other: $\operatorname{Hor}= \operatorname{Ver}^{\perp_ \sigma}$.
\item[(iii)] The vertical lift map $\textup{vl}: TM\times_M
  TM\to\ver\subset TTM$ 
is an isometry 
of vector bundles over $TM$, thinking
of the projection onto the first factor 
$\operatorname{pr}_1:TM\times_M TM \rightarrow TM$
as a vector bundle over $TM$ and $\mu_0$ as a vector
bundle metric.
\end{itemize}

Given a vector field $X$
on $M$ we shall denote its horizontal lift relative to
the Riemannian metric $\mu_0$ by $X^h\in\X(TM,\hor)$ 
and its vertical lift by $X^v\in\X(TM,\ver)$.

\begin{lemma}[The non-holonomic vector field]
\label{lm:nh-vf}
If $X_0=X_{\mathcal{H}_{\rm c}}=\Om_M^{-1}(\mathbf{d}\Hamc) \in
\mathfrak{X}(TM)$ is 
the standard Hamiltonian
vector field,
$X = \Om_{\textup{nh}}^{-1}(\mathbf{d}\Hamc)\in\mathfrak{X}(TM)$ is the
non-holonomic vector field,  
and $\{u_1,\dots,u_m\}$ is a local orthonormal frame on $M$,
then the following hold:
\begin{align}
 T_{u_x}\tau_M(X(u_x)) 
  &= T_{u_x}\tau_M(X_0(u_x)) 
   = u_x
  \textup{ for all }
  u_x\in TM,\\
 P_{\textup{ver}}\Big(X-X_0\Big)(u_x)
&=
  -\sum_{i=1}^m\Xi(u_x)\Big(X(u_x),u_i^h(u_x)\Big)\, u_i^v(u_x)\\
  &=
  -\sum_{i=1}^m\vv<(\mathbf{J}_G\circ\hl_q)(u_x),
       (\curv^{\mathcal{A}}_q\circ\wedge^2\hl_q)(u_x,u_i(x))>\, u_i^v(u_x)\notag
\end{align} 
where the second equation holds locally in the domain of definition of
the given frame,
and is well-defined independently of the choice of $q\in\pi^{-1}(x)$. 
\end{lemma}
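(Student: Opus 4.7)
The plan is to derive both parts of the lemma from the defining identities $i_{X_0}\Om_M = d\Hamc$ and $i_X\Omnh = d\Hamc$, testing them against suitably chosen vectors in $T(TM)$, and to use the decomposition $\Omnh = \Om_M - \Xi$ from Proposition~\ref{prop:comp} together with the Riemannian description of $\Om_M$ provided by Proposition~\ref{prop_Omega_formula} applied to $(TM,\mu_0)$. The key observation throughout is that $\Xi$ is ``basic'' in the sense of \eqref{e:JK}: both of its slots are felt only through $T\tau_M$, so $\Xi$ vanishes on any pair that contains a vertical vector.

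For the first assertion, I would contract $i_X\Omnh = d\Hamc$ with an arbitrary vertical $Y \in \operatorname{Ver}_{u_x}(\tau_M)$. Since $\Xi(X,Y)=0$, Proposition~\ref{prop_Omega_formula} applied to $(TM,\mu_0)$ gives
\begin{equation*}
\mu_0(x)\bigl(T_{u_x}\tau_M(X(u_x)),\, K_M(Y)\bigr) \;=\; d\Hamc(u_x)(Y),
\end{equation*}
and because $\Hamc = \tfrac12\|\cdot\|_{\mu_0}^2$ the right-hand side equals $\mu_0(u_x, K_M(Y))$. As $Y$ varies over $\operatorname{Ver}_{u_x}(\tau_M)$, the vector $K_M(Y)$ exhausts $T_xM$, forcing $T_{u_x}\tau_M(X(u_x))=u_x$. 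The same argument with $\Om_M$ (and no correction $\Xi$) in place of $\Omnh$ gives the identical conclusion for $X_0$.

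For the second assertion, part (1) implies $T\tau_M(X-X_0)=0$, so $X-X_0$ is already $\operatorname{Ver}(\tau_M)$-valued and $P_{\textup{ver}}(X-X_0)=X-X_0$. Subtracting the two defining identities yields $\Om_M(X-X_0,\cdot) = \Xi(X,\cdot)$ on $TTM$. Inserting a horizontal lift $u_i^h$ and using $T_{u_x}\tau_M(u_i^h(u_x)) = u_i(x)$ together with $K_M(u_i^h)=0$, Proposition~\ref{prop_Omega_formula} reduces the left-hand side to $-\mu_0(u_i,\, K_M(X-X_0))$; hence
\begin{equation*}
\mu_0\bigl(u_i,\, K_M(X-X_0)(u_x)\bigr) \;=\; -\,\Xi(u_x)\bigl(X(u_x),\, u_i^h(u_x)\bigr),
\end{equation*}
and orthonormality of the frame together with the vertical-lift isometry $\textup{vl}_M$ delivers the stated formula for $P_{\textup{ver}}(X-X_0)$.

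The third (curvature) expression is obtained by substituting $T_{u_x}\tau_M(X(u_x))=u_x$ and $T_{u_x}\tau_M(u_i^h(u_x))=u_i(x)$ into the definition \eqref{e:JK} of $\Xi$; the asserted $q$-independence is exactly the well-definedness already proved after that equation. The one subtlety to watch is keeping the two ``$K$''-operators and the two notions of horizontal lift separate -- the $\mathcal{A}$-horizontal lift $\hl_q: T_{\pi(q)}M \to \D_q \subset T_qQ$ on one hand, and the Levi-Civita lift of vector fields from $M$ to $TM$ on the other -- but apart from that bookkeeping the computation is routine and presents no real obstacle.
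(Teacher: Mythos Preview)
Your proposal is correct and follows essentially the same route as the paper's proof: both rely on Proposition~\ref{prop:comp} to write $\Om_M(X-X_0,\cdot)=\Xi(X,\cdot)$, on Proposition~\ref{prop_Omega_formula} (applied to $(TM,\mu_0)$) to express $\Om_M$ via $T\tau_M$ and $K_M$, and on the fact that $\Xi$ factors through $T\tau_M$ so that vertical arguments kill it. The only cosmetic difference is that you establish $T\tau_M X=u_x$ and $T\tau_M X_0=u_x$ separately by contracting each defining equation with a vertical vector, whereas the paper first records $\Om_M(X-X_0,\cdot)=\Xi(X,\cdot)$ and then reads off $T\tau_M(X-X_0)=0$ from that identity before invoking the standard fact $T\tau_M X_0=u_x$ for the kinetic-energy Hamiltonian; the second part is then computed in the paper via the Sasaki metric $\sigma$, which amounts to exactly the vertical-lift isometry you use.
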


\begin{proof}
We begin by noting that $\Om_M(X_0,Y) = 
\mathbf{d}\Hamc(Y) =
\Omnh(X,Y)$ for all $Y\in\X(TM)$. Hence by
Propositions~\ref{prop:comp} and \ref{prop_Omega_formula}
\begin{align*}
 \Xi(u_x)\Big(X(u_x),Y(u_x)\Big)
 &=
 \Om_M(X-X_0,Y)_{u_x}\\
 &=
 \mu_0(x)(T_{u_x}\tau_M((X-X_0)(u_x)),K_{u_x}Y(u_x))\\
 &\phantom{==}
 -
 \mu_0(x)(T_{u_x}\tau_M(Y(u_x)),K_{u_x}(X-X_0)(u_x)). 
\end{align*}
This implies, firstly, that 
$T_{u_x}\tau_M(X(u_x)) 
 = T_{u_x}\tau_M(X_0(u_x)) = u_x$
since $\Hamc$ is the kinetic energy Hamiltonian of the induced metric
$\mu_0$. Secondly, since
$K_{u_x}\left(u_i^v(u_x)\right)=u_i(x)=T_{u_x}\tau_M\left(u_i^h(u_x)\right)$,
we find locally
\begin{align*}
 P_{\textup{ver}}(X-X_0)(u_x)
 &=
  \sum\sigma\Big((X-X_0)(u_x),u_i^v(u_x)\Big) \,u_i^v(u_x)\\
 &=
  \sum\mu_0\Big(K_{u_x}(X-X_0)(u_x),u_i(u_x)\Big) \,u_i^v(u_x)\\
 &=
  -\sum\Om_M(X-X_0,u_i^h)_{u_x} \,u_i^v(u_x)\\
 &=
  -\sum\Xi(u_x)\Big(X,u_i^h\Big) \,u_i^h(u_x)\\ 
 &=
  -\sum\vv<(\mathbf{J}_G\circ\hl)_q(u_x),
       (\curv^{\mathcal{A}}_q\circ\wedge^2\hl_q)(u_x,u_i(x))> \,u_i^v(u_x)
\end{align*}
where we have used 
$T_{u_x}\tau_M\left(X(u_x)\right) = u_x$ in the last line.
\end{proof}

Let $\vol_{\sigma}$ be the volume form on $TM$ induced 
by the Riemannian metric $\sigma$. We shall prove
the following formula: 
\begin{equation}
\label{sigma_volume}
\vol_{\sigma}=\by{1}{m!}\Omega^m.
\end{equation}
Indeed by \eqref{riemannian_omega}, denoting by 
$\mathfrak{S}_m$ the permutation group of 
$\{1, \ldots, m\}$, we have
\begin{align*}
\Omega^m
&=
\left(\frac{\partial \mu_{ik}}{ \partial q^j}\dot{q}^k
\mathbf{d}q^i \wedge \mathbf{d}q^j +
\mu_{ij}\mathbf{d}q^i \wedge \mathbf{d}\dot{q}^j\right)^m\\
&=
(\mu_{ij}\mathbf{d}q^i \wedge \mathbf{d}\dot{q}^j)^m\\
&=
\sum_{\pi\in \mathfrak{S}_m}
\mu_{1\pi(1)}\cdots\mu_{m\pi(m)}
\mathbf{d}q^1\wedge \mathbf{d}\dot{q}^{\pi(1)}\wedge\ldots\wedge \mathbf{d}q^m\wedge 
\mathbf{d}\dot{q}^{\pi(m)}\\
&=
m!\left(\sum_{\pi\in \mathfrak{S}_m}
  (-1)^{\operatorname{sign}\pi}
  \mu_{1\pi(1)}\cdots\mu_{m\pi(m)}
\right)
\mathbf{d}q^1\wedge \mathbf{d}\dot{q}^1\wedge
\ldots\wedge \mathbf{d}q^m\wedge\mathbf{d}\dot{q}^m\\
&=
m!\det(\mu_{ij})
\mathbf{d}q^1\wedge \mathbf{d}\dot{q}^1\wedge
\ldots\wedge \mathbf{d}q^m\wedge\mathbf{d}\dot{q}^m.
\end{align*}
On the other hand, in the coordinates 
$(v^1, v^2, \ldots, v^{2m-1}, v^{2m})$ of $TM$, where
$v^{2i-1}=q^i$ and $v^{2i}=\dot{q}^i$ for $i=1,\ldots,m$,
we have by the usual formula of the Riemannian volume,
$\operatorname{vol}_ \sigma= 
\sqrt{\operatorname{det}(\sigma_{IJ})}
\mathbf{d}q^1\wedge \mathbf{d}\dot{q}^1\wedge
\ldots\wedge \mathbf{d}q^m\wedge\mathbf{d}\dot{q}^m$,
where $\sigma_{IJ}:=\sigma\left(
\frac{\partial}{\partial v^I},
\frac{\partial}{\partial v^J}\right)$. Since, by the 
definition \eqref{sasaki_metric} of the Sasaki metric 
we have $\mu_{ij}= \sigma\left(
\frac{\partial}{\partial q^i},
\frac{\partial}{\partial q^j}\right) =
\sigma\left(
\frac{\partial}{\partial \dot{q}^i},
\frac{\partial}{\partial \dot{q}^j}\right)$ and
$\sigma\left(
\frac{\partial}{\partial q^i},
\frac{\partial}{\partial \dot{q}^j}\right)=0$, it 
follows that the matrix $(\sigma_{IJ})$ is of the form
\[
(\sigma_{IJ}) = P
\begin{pmatrix}
 (\mu_{ij}) && 0 \\
 0 && (\mu_{ij})
\end{pmatrix}P^{-1}
\]
for a permutation matrix $P$. Therefore, 
$\sqrt{\operatorname{det}(\sigma_{IJ})} = 
\operatorname{det}(\mu_{ij})$ which proves 
\eqref{sigma_volume}.

By \eqref{sigma_volume}, condition
\eqref{e:N} holds if and only if
\begin{equation}\label{e:N1}
 \boldsymbol{\pounds}_{X}((\N\circ \tau_M) \vol_{\sigma})  =  0
 \;\Longleftrightarrow \;
 \left\langle 
  \mathbf{d}(\log\N\circ \tau_M),X
 \right\rangle 
 +
 \operatorname{div}_{{\rm vol}_{\sigma}}X  
 =  0.
\end{equation}

Let us define
\[
 L(TM) 
 :=
 \{
  l\in\cinf(TM): l_x:=l\mid T_xM: 
  T_xM\to\R
  \textup{ is linear for all }x\in M
 \}
\]
and consider the prescription
$\Phi: L(TM)\to\Om^1(M)$, $\Phi(l)(u_x) = l_x(u_x)$, 
$u_x \in T_xM$, which is an
isomorphism of $\cinf(M)$-modules. 

\begin{lemma}\label{lem:div}
The following statements hold.
\begin{enumerate}[\up (1)]
\item
$\textup{div}_{\textup{vol}_{\sigma}}X\in L(TM)$. 
\item
Let $\{u_i\mid i=1, \ldots, m\}$ denote a local orthonormal frame on $M$. Then
\begin{align}
\label{div_formula}
 \operatorname{div}_{{\rm vol}_{\sigma}}X (u_x)
 &= - \sum_{i=1}^m \Xi(u_i(x))\Big(X(u_x), u_i^h(u_x)\Big)\\
 &= - \sum_{i=1}^m 
 \left\langle 
  (\mathbf{J}_G\circ\hl_q)(u_i(x)), (\curv_q^{\mathcal{A}}\circ\wedge^2\hl_q)(u_x,u_i(x))
 \right\rangle. \nonumber 
\end{align}
\end{enumerate}
\end{lemma}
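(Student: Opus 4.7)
The plan is to decompose $X = X_0 + (X-X_0)$, where $X_0 = \Om_M^{-1}(\mathbf{d}\Hamc)$ is the standard Hamiltonian vector field on $TM$ introduced in Lemma~\ref{lm:nh-vf}. Since $X_0$ is Hamiltonian it preserves $\Om_M$, hence also $\Om_M^m$, and therefore the Sasaki volume $\vol_{\sigma} = \frac{1}{m!}\Om_M^m$ by \eqref{sigma_volume}. Consequently $\operatorname{div}_{\vol_{\sigma}} X_0 = 0$, and it suffices to compute $\operatorname{div}_{\vol_{\sigma}}(X - X_0)$. By Lemma~\ref{lm:nh-vf} the difference $X-X_0$ is a \emph{vertical} vector field, with the explicit expression
\[
 (X - X_0)(u_x) = -\sum_{j=1}^{m} F_j(u_x)\, u_j^v(u_x),
 \qquad
 F_j(u_x) := \Xi(u_x)\bigl(X(u_x),\, u_j^h(u_x)\bigr).
\]

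Next, I would work in standard tangent bundle coordinates $(q^i, v^i)$ on $TM$ adapted to a chart on $M$. The coordinate expression $\vol_\sigma = \det(\mu_{ij})\, \mathbf{d}q^1\wedge\mathbf{d}v^1\wedge\cdots\wedge\mathbf{d}q^m\wedge\mathbf{d}v^m$ and a short Lie derivative calculation show that for any vertical field $Z = Z^k(q,v)\,\partial/\partial v^k$ one has $\operatorname{div}_{\vol_\sigma} Z = \sum_k \partial Z^k/\partial v^k$. Writing the local orthonormal frame as $u_j = u_j^k(q)\,\partial/\partial q^k$, the vertical lift $u_j^v = u_j^k(q)\,\partial/\partial v^k$ has coefficients independent of $v$, whence $\operatorname{div}_{\vol_\sigma}(u_j^v) = 0$; intrinsically, the flow of $u_j^v$ is a fiberwise translation and so preserves $\vol_\sigma$. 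The Leibniz rule $\operatorname{div}_{\vol_\sigma}(fY) = f\operatorname{div}_{\vol_\sigma}Y + Y(f)$ then gives
\[
 \operatorname{div}_{\vol_\sigma} X(u_x)
  = \operatorname{div}_{\vol_\sigma}(X - X_0)(u_x)
  = -\sum_{j=1}^{m} u_j^v(F_j)(u_x).
\]

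The final step is to compute $u_j^v(F_j)(u_x) = \frac{d}{dt}\big|_{t=0} F_j\bigl(u_x + t\, u_j(x)\bigr)$. From the definition \eqref{e:JK},
\[
 F_j(u_x) = \bigl\langle \mathbf{J}_G(\hl_q u_x),\;
 \curv^{\mathcal{A}}(q)\bigl(\hl_q u_x,\, \hl_q u_j(x)\bigr)\bigr\rangle
\]
is a quadratic polynomial in $u_x$, since both $\mathbf{J}_G\circ\hl_q$ and $\curv^{\mathcal{A}}(q)(\hl_q\,\cdot\,,\hl_q u_j(x))$ are linear in their $TM$-argument. Differentiating at $t=0$ yields two contributions, one from each occurrence of $\hl_q u_x$; the one obtained by varying the first slot of $\curv^{\mathcal{A}}$ contains the factor $\curv^{\mathcal{A}}(q)\bigl(\hl_q u_j(x), \hl_q u_j(x)\bigr)$ and so vanishes by antisymmetry of the curvature, leaving
\[
 u_j^v(F_j)(u_x) = \bigl\langle \mathbf{J}_G(\hl_q u_j(x)),\;
 \curv^{\mathcal{A}}(q)\bigl(\hl_q u_x,\, \hl_q u_j(x)\bigr)\bigr\rangle.
\]
Summing over $j$ reproduces the displayed formula \eqref{div_formula}. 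The resulting expression is manifestly linear in $u_x$, which simultaneously establishes assertion (1). The main substantive step is the antisymmetry argument of the last paragraph: it is precisely this cancellation that reduces the a priori quadratic dependence of $F_j$ on $u_x$ to a linear dependence after differentiation, thereby producing the element of $L(TM)$ claimed in (1).
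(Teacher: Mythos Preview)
Your proof is correct and follows the same overall strategy as the paper: decompose $X = X_0 + (X-X_0)$, use that the Hamiltonian piece $X_0$ is volume-preserving, and reduce to computing the divergence of the vertical remainder, whose fiber derivative is then evaluated via the antisymmetry of $\curv^{\mathcal{A}}$.

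The only difference is technical. The paper computes $\operatorname{div}_{\vol_\sigma}$ as $\operatorname{Tr}\nabla^{\sigma}_{\cdot}X$ in the Sasaki-orthonormal frame $(u_i^h,u_i^v)$, invoking the standard identities $\nabla^{\sigma}_{u_i^h}u_i^h=(\nabla^{\mu_0}_{u_i}u_i)^h$ and $\nabla^{\sigma}_{u_i^v}u_i^v=0$ from the geometry of the Sasaki metric. You instead work in bundle coordinates, observe directly that $\operatorname{div}_{\vol_\sigma}(u_j^v)=0$ because the density $\det(\mu_{ij})$ and the coefficients of $u_j^v$ are fiberwise constant, and then apply the Leibniz rule. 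Your route is slightly more elementary in that it sidesteps the Sasaki Levi-Civita connection; the paper's route is more intrinsic. Both collapse to the identical final step, namely $u_j^v(F_j)$, and both use exactly the same antisymmetry cancellation there (the paper leaves that step implicit, whereas you spell it out).
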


\begin{proof}
Clearly (2) implies (1) so we shall prove (2) below.

We use the Levi-Civita connection $\nabla^{\mu_0}$ to 
split $TTM=\hor\oplus\ver$ where $\ver = \ker T(\tau: 
TM\to M)$.  Given a vector field $X$
on $M$ we shall, as before, denote its horizontal lift by
$X^h\in\X(TM,\hor)$ 
and its vertical
lift by $X^v\in\X(TM,\ver)$.

Let $\{u_i\in\X(M)\mid i=1,\dots,\dim M\}$, be a local orthonormal frame
for $TM$. Then $\{(u_i^h,u_i^v)\mid i=1,\dots,\dim M\}$ is a local
orthonormal frame for $TTM$ with respect to $\sigma$. 
By Lemma~\ref{lm:nh-vf}, if $X_0=X_{\mathcal{H}_{\rm c}}
=\Om_M^{-1}(\mathbf{d}\Hamc) \in \mathfrak{X}(TM)$
is the standard Hamiltonian vector field,
then we can locally express $X$ as
\[
 X = 
 X_0 - \sum\vv<(\mathbf{J}_G\circ\hl_q)(u_x),
       (\curv^{\mathcal{A}}_q\circ\wedge^2\hl_q)(u_x,u_i(x))> \,u_i^v(u_x).
\]
Notice that $X_0$ preserves $\Om^m$ whence $\textup{div}_{\textup{vol}_{\sigma}}X_0 =
0$.
According to, e.g.,
\cite[Proposition~7.2]{GK02},
it is true that 
$\nabla^{\sigma}_{u_i^h}u_i^h = (\nabla^{\mu_0}_{u_i}u_i)^h$
and $\nabla^{\sigma}_{u_i^v}u_i^v = 0$ where $\nabla^{\sigma}$ is the
Levi-Civita connection of $\sigma$. 
Therefore,
sometimes
suppressing the base point $u_x$ for readability,
\begin{align*}
 \textup{div}_{\textup{vol}_{\sigma}}X
 &= \textup{Tr}\,\nabla^{\sigma}_. X
  = \sum_{i=1}^m\big(\sigma(\nabla^{\sigma}_{u_i^h} X,u_i^h) 
     + \sigma(\nabla^{\sigma}_{u_i^v} X,u_i^v)\big)\\
 &= \sum_{i=1}^m\big(
     - \sigma(X,\nabla^{\sigma}_{u_i^h}u_i^h)
     + u_i^h\sigma(X,u_i^h)
     - \sigma(X,\nabla^{\sigma}_{u_i^v}u_i^v)
     + u_i^v\sigma(X,u_i^v)
     \big)\\
 &= \sum_{i=1}^m\big(
     - \sigma(X_0,\nabla^{\sigma}_{u_i^h}u_i^h)
     + u_i^h\sigma(X_0,u_i^h)
     + u_i^v\sigma(X_0,u_i^v)\\
 & \qquad
 - u_i^v \vv<(\mathbf{J}_G\circ\hl_q)(u_x), 
(\curv^{\mathcal{A}}_q\circ\wedge^2\hl_q)(u_x,u_i(x))>
     \big)\\
 &= \textup{div}_{\textup{vol}_{\sigma}}X_0 
     - \sum_{i=1}^m \left\langle 
             (\mathbf{J}_G\circ\hl)_q(u_i(x)), 
(\curv_q^{\mathcal{A}}\circ\wedge^2\hl_q)(u,u_i(x))
            \right\rangle\\
&= -\sum_{i=1}^m\left\langle 
(\mathbf{J}_G\circ\hl)_q(u_i(x)), 
(\curv_q^{\mathcal{A}}\circ\wedge^2\hl_q)(u,u_i(x))
   \right\rangle.
\end{align*}
To see that the formula does not depend on the particular choice of
$q\in\pi^{-1}(x)$ one uses equivariance of the involved expressions
together with the observation that any $G$-ambiguity cancels out in
the pairing.
\end{proof}

Therefore,
$\textup{div}_{\textup{vol}_{\sigma}}X$ can be turned into a one-form
on $M$ through the canonical isomorphism $\Phi: L(TM)\to\Om^1(M)$.
Let us define
\begin{equation}\label{e:beta1}
 \beta 
 := -\Phi\Big(\textup{div}_{\textup{vol}_{\sigma}}X\Big)
 \in\Omega^1(M)
\end{equation}
for the non-holonomic vector field $X=X^{\textup{nh}}_{\mathcal{H}_{\rm c}}$.

\begin{proposition}\label{prop:pm}
The system $(TM,\Omnh,\Hamc)$ admits a preserved measure 
if and only if $\beta\in\Omega^1(M)$ is exact. If 
$\beta=\mathbf{d}F$ for some function $F$ on
$M$ then $\N=e^F$ is the density of the preserved measure 
for the Liouville volume.
\end{proposition}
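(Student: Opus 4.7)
The plan is to rewrite condition \eqref{e:N1} as an equation of one-forms on $M$ via the isomorphism $\Phi: L(TM)\to\Omega^1(M)$ and then invoke the reformulation \eqref{e:N} (together with the fact, taken from \cite[Remark~7.4]{CCLM02} cited above, that it is enough to seek densities depending only on the base point) to conclude that the preserved-measure question reduces to the exactness of $\beta$.

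First I would fix $\N \in C^\infty(M)$ strictly positive and examine the left-hand side of \eqref{e:N1}. Since $\mathbf{d}(\log\N\circ\tau_M) = \tau_M^*(\mathbf{d}\log\N)$, one has
\[
 \langle\mathbf{d}(\log\N\circ\tau_M),X(u_x)\rangle
 = \langle\mathbf{d}\log\N|_x, T_{u_x}\tau_M(X(u_x))\rangle
 = \langle\mathbf{d}\log\N|_x, u_x\rangle,
\]
where the second equality is Lemma~\ref{lm:nh-vf}, which ensures that the non-holonomic vector field $X$ satisfies the second-order condition $T_{u_x}\tau_M(X(u_x))=u_x$. Hence the function $u_x\mapsto\langle\mathbf{d}(\log\N\circ\tau_M),X(u_x)\rangle$ lies in $L(TM)$, and under the isomorphism $\Phi$ it corresponds precisely to the one-form $\mathbf{d}\log\N \in \Omega^1(M)$.

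Next, Lemma~\ref{lem:div} tells me that $\textup{div}_{\textup{vol}_{\sigma}}X \in L(TM)$ as well, and by the very definition \eqref{e:beta1} it corresponds under $\Phi$ to $-\beta$. Applying the $\cinf(M)$-linear isomorphism $\Phi$ to \eqref{e:N1} therefore turns that identity into the equation of one-forms on $M$:
\[
 \mathbf{d}\log\N - \beta = 0,
\]
or equivalently $\beta = \mathbf{d}\log\N$.

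Finally I would assemble the two implications. If $\beta = \mathbf{d}F$ is exact, then $\log\N := F$, i.e., $\N = e^F$, solves \eqref{e:N1}, so by \eqref{e:N} (which is equivalent to \eqref{e:N1} via \eqref{sigma_volume}) the measure $(e^F\!\circ\tau_M)\Omega^m$ is preserved by the flow of $X$. Conversely, if a preserved measure exists, then by \eqref{e:N} there is a strictly positive $\N:M\to\R$ with $(\N\circ\tau_M)\Omega^m$ preserved, so \eqref{e:N1} holds for this $\N$, and the preceding reduction yields $\beta = \mathbf{d}\log\N$, exhibiting $\beta$ as exact. The only delicate point in this argument is the legitimacy of restricting attention to densities that are pulled back from $M$, but this is exactly the content of \eqref{e:N} as recorded (following \cite{CCLM02}) in the paragraph preceding the statement; everything else is a direct translation of \eqref{e:N1} through $\Phi$ using Lemmas \ref{lm:nh-vf} and \ref{lem:div}.
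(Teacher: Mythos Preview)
Your proof is correct and follows essentially the same route as the paper's own argument: use Lemma~\ref{lm:nh-vf} to recognize $\langle\mathbf{d}(\log\N\circ\tau_M),X\rangle$ as the element of $L(TM)$ corresponding to $\mathbf{d}\log\N$, use the definition \eqref{e:beta1} (backed by Lemma~\ref{lem:div}) to recognize $\operatorname{div}_{\vol_\sigma}X$ as $-\beta$, and then read \eqref{e:N1} through $\Phi$ as $\mathbf{d}\log\N=\beta$. Your version is slightly more explicit in assembling the two implications and in noting that the restriction to base-dependent densities is justified by the paragraph containing \eqref{e:N}, but the substance is the same.
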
 

\begin{proof}
By Lemma~\ref{lm:nh-vf}, 
$\vv<\mathbf{d}(\log\N \circ \tau_M),X>(u_x) = \mathbf{d}(\log\N)(u_x)$ for $u_x\in
TM$ and hence $\Phi\Big(\vv<\mathbf{d}(\log\N \circ \tau_M),X>\Big) =
\mathbf{d}(\log\N)$. Now by \eqref{e:N1} a preserved measure $(\N\circ
\tau_M)\Om_M^m$ 
exists
if and only if
\[
 \mathbf{d}(\log\N)
 = 
 \Phi\Big(\vv<\mathbf{d}(\log\N),X>\Big) 
 =
 -\Phi(\textup{div}_{\textup{vol}_{\sigma}}X)
 =
 \beta,
\]
i.e., $\beta\in\Om^1(M)$ is exact.
\end{proof} 

As stated above, this result is proved in 
\cite[Theorem~7.5]{CCLM02} but
our interpretation of the form $\beta$ is slightly different.
The formula 
\begin{align}\label{e:beta2}
 \beta(x)(u_x) &= 
  \sum_{i=1}^m\Xi(u_i(x))\Big(X(u_x), u_i^h(u_x)\Big)\\
  &=
 \sum_{i=1}^m 
 \left\langle 
  (\mathbf{J}_G\circ\hl_q)(u_i(x)), (\curv_q^{\mathcal{A}}\circ\wedge^2\hl_q)(u_x,u_i(x))
 \right\rangle, \nonumber 
\end{align}
$u_x \in T_xM$, in a local orthonormal frame $\{u_1, \ldots, u_m\}$
will be useful in Section~\ref{sec:4} below.

\section{Stochastic dynamics on manifolds}\label{sec:3}

\subsection{Diffusions on manifolds}
This subsection is a review of some necessary definitions and results which
are all contained in the books \cite{IW89,E89}.

A diffusion is a continuous stochastic process which has the strong
Markov property. This is a concept which can be formulated in any
topological space.

\subsubsection{Diffusion processes}
Let $X$ be a locally compact topological space with one-point
compactification $\dot{X}=X\cup\set{\infty}$
and endow $\dot{X}$ with its Borel $\sigma$-algebra $\mathcal{B}(X)$.
Define
$W(X)$ to be the set of all maps $w: [0,\infty)\to\dot{X}$ such that
there is a $\zeta(w)\in [0,\infty]$ satisfying
\begin{enumerate}[\up(1)]
\item
$w(t)\in X$ for all $t\in[0,\zeta(w))$ and $w: [0,\zeta(w))\to X$ is
    continuous;
\item
$w(t)=\infty$ for all $t\ge\zeta(w)$.
\end{enumerate}

Let $l\in\mathbb{N}$, $0\le t_1<\ldots<t_l\in\R_+$, 
$A\subset\Pi_{i=1}^l\dot{X}$ a Borel set, and
consider the evaluation mapping $\ev(t_1,\dots,t_l):
W(X)\to\Pi_{i=1}^l\dot{X}$, $w\mapsto(w(t_1),\dots,w(t_l))$. Then 
\[
 S = \ev(t_1,\dots,t_l)^{-1}(A)
\]
is called a \emph{Borel cylinder set} in $W(X)$. If $t\ge0$ and
$t_l\le t$ then $S$ is a \emph{Borel cylinder set up to time} $t$.

The set $W(X)$ is equipped with the $\sigma$-algebra $\mathcal{B}(W(X))$ generated by all
Borel cylinder sets in $W(X)$. This $\sigma$-algebra has a natural
filtration given by the family of 
\[
 (\mathcal{B}_t(W(X)))_{t\ge0}
\] 
which are the $\sigma$-algebras generated by
Borel cylinder sets up to time $t$. 

A family of probabilities $(P_x)_{x\in\dot{X}}$ on
$(W(X),\mathcal{B}(W(X)))$ is said to be a
\emph{system of diffusion measures} on
$(W(X),\mathcal{B}(W(X)),\mathcal{B}_t(W(X)))$ if it has the strong
Markov property, the definition of which we will give shortly.

A $(\mathcal{B}_t(W(X)))_t$-\emph{stopping time}
is a random variable $\tau: W(X)\to\dot{\R}_+ = \R_+\cup\{\infty\}$ such that 
$\set{w \in W(X) \mid\tau(w)\leq
  t}\in\mathcal{B}_t(W(X))$ for all $t\in\R_+$. 
  
For $s\in\R_+$ we define the \textit{time shift operator} 
\begin{equation}\label{e:time-shift}
 \Sigma_s: W(X)\longto W(X),\quad
 w\longmapsto(\Sigma_sw: t\mapsto w(s+t)).
\end{equation}
A family of probabilities $(P_x)_{x\in\dot{X}}$ on
$(W(X),\mathcal{B}(W(X)))$ satisfies the \emph{strong Markov property}
if, for all $x\in X$, $(\mathcal{B}_t(W(X)))_t$-stopping times $\tau$,
bounded $\mathcal{B}_{\tau}(W(X))\times\mathcal{B}(W(X))$-measurable 
functions $F: W(X)\times W(X)\to\R$, and $s\in\R_+$, we have
\begin{equation}\label{e:strong-markov}
 \int_{\tau(w)<\infty}F(w,\Sigma_{\tau(w)}w)\,P_x(dw)
 =
 \int_{\tau(u)<\infty}\left(
   \int_{W(X)}F(u,w)\,P_{u(\tau(u))}(dw)
\right)\,P_x(du).
\end{equation}
See \cite[p.~249]{Str}.

Let $(\Om,\F,P)$ be a probability space and $\Gamma:
\Om\times\mathbb{R}_+\to\dot{X}$ a map. 
Then $\Gamma$ is called a \emph{stochastic process} 
if $\Gamma_t:\Om\to\dot{X}$, $\om\mapsto\Gamma_t(\om)$, is a random variable for all $t\in\R_+$. 
Define $\check{\Gamma}: \om\mapsto(t\mapsto\Gamma_t(\om))$. 
Then $\Gamma$ is said to be a \emph{continuous stochastic process} in $X$ if
$\check{\Gamma}: (\Om,\F)\to(W(X),\mathcal{B}(W(X)))$ is a random
variable. (Below, when $X$ is a manifold, we will only be
  dealing with continuous processes.) Note that, for all 
$\omega\in \Omega$, the paths $[0, \zeta(\check{\Gamma}(\om)))\ni t 
\mapsto \Gamma_t(\omega)\in X$ are continuous; the map $\zeta:
W(X)\to[0,\infty]$ was part of the definition of $W(X)$.

The \textit{law} of $\Gamma$ is, by definition, the push-forward probability
$\check{\Gamma}_*P$ on $(W(X),\mathcal{B}(W(X)))$, i.e., $\check{\Gamma}_*P(S) =
P(\check{\Gamma}^{-1}(S))$ for all $S\in\mathcal{B}(W(X))$. 

The process
$\Gamma: \Om\times\R_+\to\dot{X}$ defined on the probability space
$(\Om,\F,P)$ 
is a \emph{diffusion} in $X$ if it is a continuous process
and
there is a system
 of diffusion measures $(P_x)_{x\in\dot{X}}$ such that 
$\check{\Gamma}_*P = P_{\mu}$ as probability laws on
 $(W(X),\mathcal{B}(W(X)))$;
here 
\[
 P_{\mu}(S) := \int_{\dot{X}}P_x(S)\mu(dx)\quad
 \textup{ for all }\quad
 S\in\mathcal{B}(W(X))
\]
and $\mu = (\Gamma_0)_*P: \mathcal{B}(X)\to[0,1]$ is the initial
distribution of $\Gamma$. 

A diffusion $\Gamma$ in $X$ with associated system of diffusion
measures $(P_x)_x$  is said to be
\emph{generated} by a linear operator $A$ on
the Banach space of continuous functions
$C(\dot{X})$ with domain of definition $\A\subset C(\dot{X})$ if, 
for all $x\in X$, $t\ge0$,
and $f\in\A$, the stochastic process $M_t^f: W(X)\to\R$,
\[
 M_t^f(w)
 :=
 f(w(t)) - f(w(0)) - \int_0^t(Af)(w(s))\,ds,
\]
is a $P_x$-martingale on  $(W(X),\mathcal{B}(W(X)))$ for the
filtration $(\mathcal{B}_t(W(X)))_{t\ge0}$. In this case, $A$ is called
the \emph{generator} of $\Gamma$. See 
\cite[Defs.~IV.5.3 and IV.6.2]{IW89}. The definition of a martingale is recalled below.

Let $(\Om,\F,P)$ be a probability space. A family
$(\F_t)_{t\in\mathbb{R_+}}$ of sub-$\sigma$-algebras $\F_t\subset\F$ 
is called a \emph{reference family} if it
is increasing, i.e., $\F_t\subset\F_s$ for $0\le t\le s$, and
right-continuous, i.e., $\cap_{\epsilon>0}\F_{t+\epsilon}=\F_t$ for
all $t\in\R_+$. Whenever we mention $(\F_t)$ we will suppress the
index set $\R_+$, 
tacitly assume
that it is a reference family, and refer to it as the filtration of
$\F$ so that $(\Om,\F,(\F_t),P)$ becomes a filtered probability space.

A stochastic process $M: \Om\times\R_+\to\dot{\R}_+$ 
is called a \textit{martingale} on $(\Om,\F,(\F_t),P)$ 
if the following conditions are met:
\begin{enumerate}[\up (1)]
\item
$M_t: \Om\to\R$ is integrable for all $t\in\R_+$;
\item
$M_t: \Om\to\R$ is $\F_t$-measurable for all $t\in\R_+$, 
i.e., $M$ is $(\F_t)$-\textit{adapted};
\item
$E[M_t|\F_s] = M_s$ for all $t\ge s \ge 0$, i.e.,  
$E[(M_t-M_s) \chi_F]=0$ for all 
$t\geq s\geq 0$ and all $F \in \mathcal{F}_s$, where $\chi_F$ is the 
characteristic function of the set $F$.
\end{enumerate}

If $W: \Om\times\R_+\to\R^k$ is
the (a fortiori continuous) 
diffusion
defined on the filtered probability space
$(\Om,\F,(\F_t),P)$ 
with initial condition $W_0=0$ a.s.\ 
and with generator $\by{1}{2}\Delta =
\by{1}{2}\sum\del_i\del_i$, then $W$ is called
a$(\F_t)$-\textit{adapted Brownian motion}. See 
\cite[Example~IV.5.2]{IW89} or
\cite[Remark~7.1.23]{Str} for this characterization 
of Brownian motion. Below, we will be concerned with 
Brownian motion on a Riemannian manifold and then this 
aforementioned characterization will be taken
to be the definition of Brownian motion.

\subsubsection{Diffusions via Stratonovich equations}
Let $(\Om,\F,(\F_t),P)$ be a filtered probability space as above and
suppose
now that $X=Q$ is a manifold. 
From now on, all stochastic processes will be assumed to be
continuous.

If $N$ is manifold then a
\emph{Stratonovich operator} $\mathcal{S}$ from $TN$ to $TQ$ is a
section of $T^*N\otimes TQ\to N\times Q$. 
Equivalently, we can view $\mathcal{S}$ as a smooth map 
$\mathcal{S}: Q\times TN\to TQ$ 
which is linear in the fibers and covers  the identity on
$Q$. 
Let $X_0,X_1,\dots,X_k$ be
vector fields on $Q$ and define the associated Stratonovich 
operator $\mathcal{S}: Q\times T\mathbb{R}^{k+1}\longto TQ$ by 
\[
 \mathcal{S}(x,w,w'):=\sum_{i=0}^k X_i(x)\vv<e_i,w'>,
\]
where $x \in Q$, $w\in \mathbb{R}^{k+1}$, $(w,w') \in 
T_w \mathbb{R}^{k+1} = \{w\} \times \mathbb{R}^{k+1}$, $\{e_i\mid
i=0,1,\ldots, k\}$ 
is the orthonormal standard basis in 
$\mathbb{R}^{k+1}$, and $\left\langle\,, \right\rangle$ is the standard inner
product in $\mathbb{R}^{k+1}$.
We note that the number $k$ is not related to the dimension of $Q$.

Consider the stochastic process $Y: \Om\times\mathbb{R}_+\to\mathbb{R}^{k+1}$,
$(t,\om)\mapsto(t,W_t(\om))$ where $W$ denotes 
$(\F_t)$-adapted Brownian motion in $\mathbb{R}^{k}$.

We will be concerned with Stratonovich equations of the form 
\begin{equation}\label{e:S1}
 \delta\Gamma = \mathcal{S}(Y,\Gamma)\delta Y;
\end{equation}
a continuous $(\F_t)$-adapted process $\Gamma: \Om\times\mathbb{R}_+\to Q$  is
called a \textit{solution} to 
\eqref{e:S1} if there is a $(\F_t)$-adapted Brownian motion $W = (W^i)$ 
in $\mathbb{R}^{k}$ such that, in the
Stratonovich sense,
\begin{equation}\label{e:def-of-sol}
 f(\Gamma_t) - f(\Gamma_0)
 = \int_0^t(X_0f)(\Gamma_s)ds
   + \sum_{i=1}^k\int_0^t(X_if)(\Gamma_s)\delta W_s^i
\end{equation}
for all smooth functions $f\in C^{\infty}(Q)$ with 
compact support.

A few comments are in order. The definition of 
\eqref{e:S1} is \eqref{e:def-of-sol}. The second 
integral in \eqref{e:def-of-sol} is the Stratonovich
integral (signaled by saying that the equation 
is to hold ``in the Stratonovich sense"). We will not
go into the elaborate definition and construction of
the Stratonovich integral here and refer to 
\cite{Oks07,IW89,E89} for the definition and
an in depth study of the Stratonovich integral.

For the readers more familiar with It\^o calculus we
add the following remarks. It\^o integrals are defined 
by a Riemann sum approximation with the rather essential 
difference that, in the sum, one evaluates 
the integrand 
\emph{at the left end-points} of the
partition intervals.  
The Stratonovich 
integral, on the other hand, can be obtained by evaluating
the integrand \emph{at the mid-points} of the 
sub-intervals. While It\^o integrals give rise to 
a new transformation rule (the It\^o formula),
transformed Stratonovich integrals obey the same 
change of variables formula as Riemann integrals. This is
the essential reason why, on manifolds, It\^o 
calculus is replaced by Stratonovich calculus. 
Concretely, the Stratonovich integral is characterized
in \cite[Thm.~III.1.4]{IW89} by
\[
\int_0^tY\,\delta X =
\textup{l.i.p.}_{|\Lambda|\to0}\sum_{i=1}^n\by{Y(t_i)
-Y(T_{i-1})}{2}(X(t_i)-X(t_{i-1}))
\]
where $\Lambda$ is a partition $0=t_0<t_1<\ldots<t_n=t$ 
with maximal step size $|\Lambda|$ and $\textup{l.i.p.}$ 
is ``limit in probability"; here $X$, $Y$ are 
quasimartingales (a general class of semi-martingales).
We do not go into more details of the definition of
the Stratonovich integral here and refer the reader
to the above mentioned books.

Suppose $\Gamma$ is a solution to \eqref{e:S1} such that $\Gamma_0=x$
a.s.\ and $\Gamma$ satisfies \eqref{e:def-of-sol} with respect to an
$\R^k$-valued Brownian motion $W$ defined on a filtered probability
space $(\Om,\F,(\F_t),P)$.
Then we will write $\Gamma=\Gamma^{x,W}$ to remember
these data.
The explosion time $\zeta$ of a solution $\Gamma^{x,W}$ is a stopping
time on $(\Om,\F,(\F_t))$ with the following property: 
the path $\Gamma^{x,W}_{[0,T]}(\om)$ is
contained in $Q$ for all $T<\zeta(\om)$ but
if
$\zeta(\om)<\infty$ then
$\Gamma^{x,W}_{[0,\zeta(\om))}(\om)$ is not contained in any compact
  subset of $Q$. 
The following is a partial account of \cite[Theorems~V.1.1 and
  V.1.2]{IW89} and \cite[Theorem~(7.21)]{E89} that is sufficient for
our purposes. 

\begin{theorem}
Let the assumptions be as above and consider equation \eqref{e:S1}.
\begin{enumerate}[\up (1)]
\item
For each initial condition, $\Gamma_0=x$ a.s.,  
and continuous $(\F_t)$-adapted Brownian motion $W$,
a solution $\Gamma^{x,W}$ exists and is unique up to explosion time. 
\item
Let $P_x := \check{\Gamma}^{x,W}_*P$. Then $P_x$ is independent of $W$
and $(P_x)$ is a system of diffusion measures generated by the second
order differential operator
\begin{equation}\label{e:A-thm}
 A = X_0 + \by{1}{2}\sum_{i=1}^k X_ix_i.
\end{equation}
which acts on the space $\cinf(Q)_0$ of smooth functions with compact support. 
\end{enumerate}
\end{theorem}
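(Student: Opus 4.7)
The plan is to reduce the manifold case to classical Euclidean stochastic analysis. First, using the Whitney embedding theorem, I would embed $Q$ as a closed submanifold of some $\mathbb{R}^N$ and extend each vector field $X_i$ to a compactly supported smooth vector field $\widetilde X_i$ on $\mathbb{R}^N$. The extended Stratonovich equation on $\mathbb{R}^N$ has smooth, globally Lipschitz coefficients, hence admits a unique strong solution $\widetilde\Gamma^{x,W}$ defined for all $t$ by the classical Euclidean existence and uniqueness result \cite[Theorem~V.1.1]{IW89}. For $x\in Q$, one verifies that this solution stays on $Q$: applying \eqref{e:def-of-sol} to any $f\in\cinf(\mathbb{R}^N)$ vanishing on $Q$ and using that the $\widetilde X_i$ are tangent to $Q$ forces $f(\widetilde\Gamma^{x,W}_t)=0$ a.s. Restricting to $Q$ and letting the explosion time $\zeta$ be the supremum of first exit times from an exhaustion $(U_n)$ of $Q$ by relatively compact open sets yields the desired $\Gamma^{x,W}$. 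Uniqueness up to $\zeta$ reduces, after localization by stopping at $\partial U_n$, to pathwise uniqueness in $\mathbb{R}^N$.

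For part (2), the key computation is the Stratonovich--It\^o conversion applied to \eqref{e:def-of-sol}. This rewrites the defining identity as
\begin{equation*}
 f(\Gamma_t) - f(\Gamma_0)
 = \int_0^t (X_0 f)(\Gamma_s)\,ds
 + \by{1}{2}\sum_{i=1}^k \int_0^t (X_iX_if)(\Gamma_s)\,ds
 + \sum_{i=1}^k \int_0^t (X_if)(\Gamma_s)\,dW^i_s,
\end{equation*}
in which the last term is an It\^o integral and therefore a local martingale with zero expectation. This identifies the generator as $A = X_0 + \by{1}{2}\sum_{i=1}^k X_iX_i$ on $\cinf(Q)_0$ and, after localization by the stopping times exhausting $Q$ and passage to the limit, shows that the process $M^f_t$ defined in the text is a $P_x$-local-martingale, hence a $P_x$-martingale for $f$ compactly supported.

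The independence of the law $P_x$ from the particular driving Brownian motion $W$ then follows from pathwise uniqueness via the standard Yamada--Watanabe argument: any two solutions solve the same martingale problem for $A$, and well-posedness of this martingale problem (again a local statement, inherited from smoothness of the $X_i$ in charts) forces them to have the same law on $W(Q)$. The strong Markov property \eqref{e:strong-markov} is then a consequence of the corresponding property for Brownian motion: after an $(\F_t)$-stopping time $\tau$, the shifted process $W_{\tau+\cdot}-W_\tau$ is a Brownian motion independent of $\F_\tau$, and pathwise uniqueness applied after time $\tau$ forces $\Sigma_\tau\Gamma^{x,W}$ to have conditional law $P_{\Gamma_\tau}$ given $\F_\tau$, which is precisely \eqref{e:strong-markov}.

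The main technical obstacle is the careful bookkeeping around the explosion time: every identity above must first be established for the stopped processes $\Gamma^{x,W}_{t\wedge\tau_n}$, where $\tau_n$ is the first exit time from $U_n$, and only then passed to the limit $n\to\infty$ to recover the statements on $[0,\zeta)$. None of this is genuinely difficult given the Euclidean theory; the crux is simply to verify that the local-to-global patching is compatible with the manifold structure and independent of the chosen embedding, which is precisely what the references \cite[Theorems~V.1.1, V.1.2]{IW89} and \cite[Theorem~(7.21)]{E89} codify.
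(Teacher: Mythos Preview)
The paper does not actually prove this theorem: it is stated without proof as a quotation of standard results, with the sentence immediately preceding it reading ``The following is a partial account of \cite[Theorems~V.1.1 and V.1.2]{IW89} and \cite[Theorem~(7.21)]{E89} that is sufficient for our purposes.'' So there is no proof in the paper to compare against.

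That said, your sketch is a faithful outline of the standard argument found in those references. The Whitney embedding reduction to $\mathbb{R}^N$ with compactly supported extensions is exactly Emery's approach, the Stratonovich--It\^o conversion is the right way to identify the generator, and deriving the strong Markov property from pathwise uniqueness together with the strong Markov property of the driving Brownian motion is the route taken in Ikeda--Watanabe. One small caveat: your argument that the solution stays on $Q$ (``applying \eqref{e:def-of-sol} to any $f$ vanishing on $Q$'') is a bit too quick as stated, since $X_if$ need not vanish on $Q$ for such $f$; the clean way is to apply It\^o's formula to $\sum_j \phi_j^2$ for a finite family of functions $\phi_j$ cutting out $Q$ locally, or to invoke the Stratonovich chain rule together with tangency of the $\widetilde X_i$ to $Q$ to see that $d(\phi_j\circ\widetilde\Gamma)=0$. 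With that adjustment, your proposal matches what the cited references do.
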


Assume that $Q$ is endowed with a linear connection 
$\nabla: \mathfrak{X}(Q) \times \mathfrak{X}(Q)
\rightarrow \mathfrak{X}(Q)$, where $\mathfrak{X}(Q)$ denotes the
Lie algebra of smooth vector fields on $Q$.

If $f\in C^{\infty}(Q)$,
its \textit{Hessian} is defined by 
$\operatorname{Hess}(f):=\nabla\nabla f$, i.e., 
$\operatorname{Hess}(f)(X,Y)= X(Y(f)) -(\nabla_XY)(f)$
for any $X,Y \in \mathfrak{X}(Q)$.
The Hessian is bilinear in $X$ and $Y$ but is not symmetric, unless 
$\nabla$ is torsion-free. 

Let $\Gamma$ be a diffusion in $Q$ with generator $A$. 
Then the \emph{drift} of
$\Gamma$ with respect to $\nabla$ is defined to be the first order
part of $A$ which is determined by $\nabla$. If $A$ is of the form
\eqref{e:A-thm} then this is $X_0+\by{1}{2}\sum\nabla_{X_i}X_i$.

According to \cite[Theorem~7.31]{E89}, \emph{the $A$-diffusion
$\Gamma$ is a martingale 
in $(Q,\nabla)$ if and only if 
$A$
is purely second order with
respect to $\nabla$}, i.e., the $\nabla$-drift vanishes. 
In \cite{E89} this is stated for torsion-free connections but it is
noted that one can use the same definition for connections with
torsion.

If $(Q,\mu)$ is a Riemannian manifold then an 
$A$-diffusion is called \textit{Brownian motion} if 
$A = \frac{1}{2}\Delta$ where $\Delta
:=
\operatorname{div} \operatorname{grad} = 
- \delta\mathbf{d}$ is the
metric Laplacian; see \cite[Def.~V.4.2]{IW89} or \cite[Def.~5.16]{E89}.

To construct Brownian motion in $(Q,\mu)$, we need the
principal connection 
\[
 \omega \in\Omega^1\left(\mathfrak{F}; 
 \mathfrak{so}(d)\right)
\]
on the orthonormal frame bundle 
$\rho: \mathfrak{F}\to Q$ 
over 
$(Q,\mu)$, uniquely induced by the Levi-Civita connection 
$\nabla^\mu$ on $Q$ (whose Christoffel symbols in a chart
are denoted by $\Gamma_{kl}^i$). We recall its construction 
and basic properties. Let $u\in\mathfrak{F}$ with base point 
$\rho(u)=q\in Q$. Then we define the horizontal bundle as 
$\hor^{\om}=\bigsqcup_{u\in\mathcal{F}}\hor^{\om}_u$, where
the horizontal space at $u$, a vector subspace of 
$T_u \mathfrak{F}$,  is given by
\begin{equation}\label{e:hor-om}
\hor_u^{\om}:= T_q\sigma\left(T_qQ\right);
\end{equation}
here $\sigma$ is local section of $\rho:\mathfrak{F}\to Q$ such that 
$\sigma(q) = u$ and $\nabla^{\mu}_X\sigma_i = 0$ for all $X\in T_qQ$ and 
local vector fields $\sigma_i:=\sigma(e_i)$
with $\{e_1,\dots,e_d\}$ being the standard basis in $\R^d$.   
We may express \eqref{e:hor-om} in
local coordinates $(q^i,u^i_j)$ defined on a bundle coordinate
patch $U\times V\hookto\mathfrak{F}$ as 
\begin{equation}\label{e:hor-om-coord}
 \hor^{\om}_u
 =
 \left.\left\{ A^k\frac{\del}{\del q^k} 
    -   \Gamma^i_{kl}u^l_j A^k\frac{\del}{\del u^i_j} \,\right|\, A ^k\in \mathbb{R}
 \right\}.
\end{equation}
Restricting $\om$ to this coordinate patch, it can be written 
as $\pr_2 + \om_{\textup{loc}}: 
T(U\times V)\to\mathfrak{so}(d)$, where 
$\om_{\textup{loc}} =
(\om_i^j)\in\Omega^1(U;\mathfrak{so}(d))$.
It follows that, in terms of the Christoffel symbols,
\begin{equation}\label{e:om-coord}
 \om^i_j = \Gamma^i_{kl}u^l_jdq^k.
\end{equation}
Thus, the local expression of the horizontal lift defined by
$\omega$ is
${\textup{hl}^{\om}}_u(\frac{\del}{\del q^k})
= 
\frac{\del}{\del q^k} 
   -  \Gamma^i_{kl}u^l_j \frac{\del}{\del u^i_j}$.

An orthonormal frame $u\in\mathfrak{F}$ can
be regarded as an isometry $u: \mathbb{R}^d\to T_{\rho(u)}Q$,
where $d=\dim Q$. Define
the canonical horizontal vector fields $L_i\in
\mathfrak{X}(\mathfrak{F},\hor^{\om})$, $i=1,\dots,d$, by
\begin{equation}\label{e:L}
 L_i(u) = {\textup{hl}^{\omega}}_u(u(e_i)),
\end{equation}
where $\textup{hl}^{\omega}: \mathfrak{X}(Q)\to\mathfrak{X}(\mathfrak{F})$ 
is the horizontal lift map of $\omega$.
If $(W^i)$ is Brownian motion in $\mathbb{R}^d$ and $\Gamma$ solves the
Stratonovich equation
\begin{equation}\label{e:SL}
 \delta\Gamma = \sum L_i(\Gamma)\delta W^i
\end{equation}
then $\rho\circ\Gamma$ is a diffusion in $Q$ with generator
$\by{1}{2}\Delta^{\mu}$, that is, a Brownian motion. This is explained
in \cite[Chapter~V.4]{IW89} and follows also from
Theorem~\ref{prop:S-equiv} below; the essential observation in
this context is that the Stratonovich operator $\mathcal{S}(u,w,w') =
\mathcal{S}(u,w)w' = \sum L_i(u)\vv<e_i,w'>$ 
of \eqref{e:SL} is equivariant:
\[
 \mathcal{S}(ug,g^{-1}w,g^{-1}w') 
 = \sum\big({\textup{hl}^{\om}}_u(u(ge_i))\vv<e_i,g^{-1}w'>\big)g 
 = \mathcal{S}(u,w,w')g
\] 
for the principal right action of the
structure group, i.e., $g\in {\rm O}(d)$. 
Indeed, this follows since $\textup{hl}^{\omega}: \mathfrak{F}\times_Q
TQ\to\ker\om=\hor^{\om}\subset T\mathfrak{F}$ is ${\rm O}(d)$-equivariant.

To connect with Theorem~\ref{prop:S-equiv}, the principal right action
can be turned to a left action via inversion in the group.

\subsection{Equivariant reduction}\label{sec:equiv-red}
Equivariant reduction is a natural extension of the reduction 
theory of \cite[Theorem~3.1]{LO08a}. While the results of 
\cite{LO08a} are stronger, in the sense that they provide a 
Stratonovich equation on the base space, they are only 
applicable when the original Stratonovich operator is 
$G$-invariant (i.e., equivariant with respect to the trivial 
action on the source space). By contrast, the observation in
equivariant reduction is that although the upstairs Stratonovich
operator is not projectable, the diffusion still factors to a 
diffusion in the base and the downstairs generator is induced 
from that of the original diffusion on the total space. 

Two immediate examples are the construction of Brownian 
motion on a general Riemannian manifold as well as a
stochastic version of Calogero-Moser systems (see below). 
For both of these cases, the diffusion upstairs is defined in 
terms of a Stratonovich operator which is equivariant but not 
projectable.

Let $(\Om,\F,(\F_t),P)$, $Q$, $X_0,X_1,\dots,X_k\in\X(Q)$ and
$\delta\Gamma = \mathcal{S}(Y,\Gamma)\delta Y$ be as before. 
Suppose there is a Lie group $G$ which acts smoothly and properly on $Q$ 
from the left. We continuously extend this action to the one point compactification $\dot{Q}$ 
by requiring $\infty$ to be a fixed point. Let $\pi: Q\toto Q/G$ 
be the projection and $\cinf(Q)^G$ denote the subspace of 
$G$-invariant smooth functions on $Q$.
Note that $Q/G$ need not be a manifold; in general $Q/G$ is a
topological space which is naturally stratified by smooth 
manifolds (see, e.g. \cite[Chapter 2]{DuKo2000}).

In the following all actions are tangent lifted, where appropriate, without further notice.
Generally, for Lie group actions, we will interchangeably use the notation $g\cdot q$
and $gq$.

\begin{theorem}\label{prop:S-equiv}
Given is a group representation $\rho: G\to\textup{O}(k)$ 
and let $\textup{O}(k)$ act on
$\mathbb{R}^{k+1}=\mathbb{R}\times\mathbb{R}^k$ 
such that the first
factor is acted upon trivially.
If the Stratonovich operator
$\mathcal{S}$ satisfies the equivariance property
\begin{equation}\label{e:S-equiv}
 \mathcal{S}(gx,\rho(g)y,\rho(g)y')
 = g\mathcal{S}(x,y,y')
\end{equation}
for all $(x,y,y')\in Q\times T\mathbb{R}^{k+1}$, then the diffusion 
$\Gamma$ induces a diffusion $\pi\circ\Gamma$ in $Q/G$. Moreover, 
the generator $A$ of the diffusion $\Gamma$ on $Q$ preserves 
$\cinf(Q)^G$ and the induced generator $A_0$ of the diffusion 
$\pi\circ\Gamma$ on $Q/G$ is characterized by 
\begin{equation}\label{e:A_0}
 \pi^*(A_0f) = A(\pi^*f)
\end{equation}
for all $f\in\cinf(Q/G) := \set{f\in C(Q/G): \pi^*f\in\cinf(Q)^G}$.
\end{theorem}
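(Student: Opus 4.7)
The plan is to first translate the equivariance of the Stratonovich operator into equivariance of the underlying vector fields, next show that the resulting second-order operator $A$ preserves $G$-invariant functions so that a reduced operator $A_0$ is well-defined, and finally promote the pointwise reduction of the generator to a reduction of the whole diffusion via the martingale problem characterization.

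First I would unpack \eqref{e:S-equiv}. Writing $\mathcal{S}(x,y,y')=\sum_{i=0}^k X_i(x)\langle e_i,y'\rangle$ and setting $y'=e_j$, the equivariance reads $\sum_i X_i(gx)\,\rho(g)_{ij} = T_xl_g\,X_j(x)$. Since $O(k)$ fixes $e_0$ pointwise, the column $\rho(g)_{i0}=\delta_{i0}$ forces $X_0(gx)=T_xl_g X_0(x)$, i.e.\ $\phi_g^*X_0=X_0$, while for $j\ge1$ the family $(X_1,\dots,X_k)$ transforms by the orthogonal representation: $\phi_g^*X_j = \sum_i \rho(g^{-1})_{ji}X_i$, where $\phi_g(x)=g\cdot x$.

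Next I would show that the generator $A=X_0+\tfrac12\sum_{i=1}^k X_i^2$ preserves $C^\infty(Q)^G$. For $f\in C^\infty(Q)^G$, $\phi_g^*(X_0 f)=(\phi_g^*X_0)f=X_0 f$, so the drift part is fine. For the second-order part, using $\phi_g^*(X Y f)=(\phi_g^*X)(\phi_g^*Y)(\phi_g^*f)$ and the relation above,
\begin{align*}
\phi_g^*\Bigl(\sum_j X_j^2 f\Bigr)
&= \sum_j\bigl(\phi_g^*X_j\bigr)^2 f
= \sum_{j,k,l}\rho(g^{-1})_{jk}\rho(g^{-1})_{jl}\,X_k X_l f\\
&= \sum_{k,l}\bigl(\rho(g^{-1})^T\rho(g^{-1})\bigr)_{kl}X_kX_l f
= \sum_k X_k^2 f
\end{align*}
by the orthogonality $\rho(g^{-1})\in O(k)$. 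Hence $Af\in C^\infty(Q)^G$, and since $\pi^{\ast}\colon C^\infty(Q/G)\to C^\infty(Q)^G$ is bijective by definition, there is a unique operator $A_0$ on $C^\infty(Q/G)$ satisfying \eqref{e:A_0}.

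For the final step, the same equivariance yields $P_{gx}=(l_g)_*P_x$ at the level of diffusion measures, so $\pi_*P_{gx}=\pi_*P_x$ and $\bar P_{[x]}:=\pi_*P_x$ is a well-defined family of probabilities on $W(Q/G)$ indexed by the quotient. Continuity of $\pi$ makes $\pi\circ\Gamma$ a continuous process, and the strong Markov property \eqref{e:strong-markov} for $\bar P$ follows by pulling back each stopping time $\bar\tau$ and bounded $\bar F$ along the measurable map $\pi_*\colon W(Q)\to W(Q/G)$ (which intertwines the time-shift $\Sigma_s$) and invoking the strong Markov property for $P_x$. To identify $A_0$ as the generator, take $f\in C^\infty(Q/G)$ of compact support and observe that the $P_x$-martingale $M^{\pi^*f}_t$ satisfies $M^{\pi^*f}_t=\bar M^{f}_t\circ \pi_*$ by \eqref{e:A_0}; pushing forward with $\pi_*$ shows $\bar M^f_t$ is a $\bar P_{[x]}$-martingale with respect to the filtration generated by the Borel cylinder sets of $W(Q/G)$, which is exactly the defining property of the $A_0$-diffusion.

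The main obstacle I foresee is not algebraic but topological: because the $G$-action need not be free, $Q/G$ is only a stratified space and one cannot invoke ordinary manifold SDE theory on the base. The paper's general topological definition of a diffusion avoids this, but one must verify carefully that the pushforward $\pi_*\colon W(Q)\to W(Q/G)$ is measurable for the Borel cylinder $\sigma$-algebras, that compactly supported elements of $C^\infty(Q/G)$ lift to compactly supported $G$-invariant smooth functions on $Q$ (using properness), and that the explosion time behaves correctly under projection so that the martingale characterization of $A_0$-diffusions extends up to explosion in the stratified setting.
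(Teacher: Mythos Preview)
Your proposal is correct and follows essentially the same route as the paper: extract the equivariance of the vector fields from \eqref{e:S-equiv}, use orthogonality of $\rho(g)$ to show $A$ preserves $C^\infty(Q)^G$, push the diffusion measures forward along $\pi_*\colon W(Q)\to W(Q/G)$, and verify the strong Markov and martingale properties by pulling back stopping times and test functions. The only place where you are slightly elliptic compared to the paper is the sentence ``the same equivariance yields $P_{gx}=(l_g)_*P_x$'': the paper makes explicit the intermediate step $g\cdot\Gamma^{x,W}=\Gamma^{gx,\rho(g)W}$ and then invokes that the law $P_x$ is independent of the driving Brownian motion, which in turn uses that $\rho(g)W$ is again Brownian motion because $\rho(g)\in O(k)$---so orthogonality is needed twice, once for the generator and once for the law.
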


\begin{proof}
Let us begin by noting that $g\Gamma^{x,W}=\Gamma^{gx,\rho(g)W}$. Indeed,
\[
 \delta(g\Gamma^{x,W})
 = g\mathcal{S}(Y,\Gamma^{x,W})\delta Y
 = \mathcal{S}(\rho(g)Y,g\Gamma^{x,W})\delta(\rho(g)Y)
\]
whence $\tilde{\Gamma}:=g\Gamma^{x,W}$ satisfies
$\tilde{\Gamma}_0=gx$ a.s.\ and
$\delta\tilde{\Gamma}=\mathcal{S}(\rho(g)Y,\tilde{\Gamma})\delta(\rho(g)Y)$. By
existence and uniqueness of solutions the claim follows. 
In particular, we have 
$\pi\circ\Gamma^{x,W} = \pi\circ\Gamma^{gx,\rho(g)W}$. 

Claim:
\begin{equation}\label{e:claim1}
 P_{gx} = g_*P_x 
\end{equation}
where $G$ acts on $W(Q)$ as $g: w\mapsto(t\mapsto gw(t))$. 
To see this, let $S\subset W(Q)$ be a Borel cylinder set. This means
that there are $l\in\mathbb{N}$, $0\le t_1<\ldots<t_l\in
\mathbb{R}_+$, and a Borel set
$A\subset\Pi^l\dot{Q}$ such that 
$S = \ev(t_1,\dots,t_l)^{-1}(A)$,
where $\ev(t_1,\dots,t_l): W(Q)\to\Pi^l\dot{Q}$,
$w\mapsto(w(t_i))_{i=1}^l$. From the identity
$(\Gamma^{x,\rho(g)W})^{\check{}}_*P = (\Gamma^{x,W})^{\check{}}_*P$
we find
\begin{align*}
 P_{gx}(S)
 &= (\Gamma^{gx,\rho(g)W})^{\check{}}_*P(S)
  = P\set{\om: (\Gamma^{gx,\rho(g)W}_{t_i}(\om))_{i=1}^l \in A}\\
 &= P_x(\ev(t_1,\dots,t_l)^{-1}(g^{-1}A))
  = P_x(g^{-1}S)
\end{align*}
which proves \eqref{e:claim1}.

Consider the push forward map $\pi_*: W(Q)\to W(Q/G)$,
$w\mapsto\pi\circ w$. It is straightforward to see that
$\mathcal{B}(W(Q/G)) = \pi_*\mathcal{B}(W(Q))$. For $S_0 =
\pi_*(S)\in\mathcal{B}(W(Q/G))$ we may write the law 
$\left(P_{[x]}\right)_{[x]\in\dot{Q}/G}$ of $\pi\circ\Gamma$ as
\[
 P_{[x]}(S_0)
 = (\pi\circ\Gamma^{gx,\rho(g)W})^{\check{}}_*P(S_0)
 = P_{gx}(\pi_*^{-1}(S_0)).
\]
By \eqref{e:claim1} this does not depend on $g\in G$. 

Let us show that the system $\left(P_{[x]}\right)_{[x]\in\dot{Q}/G}$ 
satisfies the strong Markov property. Let $p:=\pi_*: W(Q)\to W(Q/G)$, 
$[x]\in Q/G$, $\tau: W(Q/G)\to\mathbb{R}_+$ be a
$(\mathcal{B}_t(W(Q/G)))_t$-stopping time, and $F: W(Q/G)\times
W(Q/G)\to\mathbb{R}$ a bounded
$\mathcal{B}_{\tau}(W(Q/G))\times\mathcal{B}(W(Q/G))$ measurable
function. Then 
$p^{-1}(\mathcal{B}_t(W(Q/G))) \subset \mathcal{B}_t(W(Q))$ 
and
$p^*\tau = \tau\circ p: W(Q)\to\mathbb{R}_+$ is a
$(\mathcal{B}_t(W(Q)))_t$-stopping time.
For $s\in\mathbb{R}_+$ let $\Sigma_s$ be the time shift operator
defined in \eqref{e:time-shift}
and observe that $(\Sigma_spw)(t) = pw(s+t) = \pi(w(s+t)) =
(p\Sigma_sw)(t)$. 
(We use the same notation for the time-shift on $W(Q)$ and that on
$W(Q/G)$.) 
Now, since $P_{[x]}$ is the push forward of $P_x$ via $p$, we can use
the strong Markov property of $(P_x)_x$ to conclude that
\begin{align*}
 &\int_{\{w\in W(Q/G): \tau(w)<\infty\} }F(w,\Sigma_{\tau(w)}w)\,P_{[x]}(dw)\\
 &\qquad \qquad \qquad \qquad\qquad=
 \int_{\{u\in W(Q): p^*\tau(u)<\infty\}}p^*F(u,\Sigma_{p^*\tau(u)}u)\,P_{x}(du)\\
 &\qquad \qquad \qquad \qquad\qquad=
 \int_{\{p^*\tau(u)<\infty\}}
   \Big(
   \int_{W(Q)}p^*F(u,v)\,P_{u(p^*\tau(u))}(dv)
   \Big)P_x(du)\\
 &\qquad \qquad \qquad \qquad\qquad =
 \int_{\{p^*\tau(u)<\infty\}}
   \Big(   
   \int_{W(Q/G)}F(pu,w)\,P_{[u(p^*\tau(u))]}(dw)
   \Big)P_x(du)\\
  &\qquad \qquad \qquad \qquad\qquad =
 \int_{\{\tau(u_0)<\infty\}}
   \Big(   
   \int_{W(Q/G)}F(u_0,w)\,P_{u_0(\tau(u_0))}(dw)
   \Big)P_{[x]}(du_0)
\end{align*}
which, according to \eqref{e:strong-markov}, 
shows that $(P_{[x]})_{[x]}$ is strong Markov. 

To show that $\sum X_iX_i f\in\cinf(Q)^G$ for all $f\in\cinf(Q)^G$ 
consider the standard basis $\{e_0,e_1,\dots,e_k\}$ of 
$\mathbb{R}\times\mathbb{R}^k$. For $j=1,\dots,k$ we find
\[
 g\cdot X_j(x)
 =
 g\cdot \mathcal{S}(x,y,e_j)
 =
 \mathcal{S}(gx,\rho(g)y,\rho(g)e_j)
 =
 \sum_k g_{kj}X_k(gx),
\]
where $g_{kj} := \vv<e_k,\rho(g)e_j>$ is independent of 
$x\in Q$. Since $\sum_j g_{ij}g_{kj} = \delta_{ik}$,
\[
 X_i(gx) 
 = \sum_{j,k} g_{ij}g_{kj}X_k(gx)
 = \sum_j g_{ij}g\cdot X_j(x).
\] 
Thus $\Big(df(X_i)\Big)(gx)=\sum_j g_{ij}\Big(df(X_j)\Big)(x)$ for 
$f\in\cinf(Q)^G$ and also 
\[
 d\Big(df(X_i)\Big)(gx)\circ T_xg
 = d\Big(\sum_j g_{ij}df(X_j)\Big)(x)
 = \sum_j g_{ij}d\Big(df(X_j)\Big)(x).
\]
This implies that 
\begin{align*}
 \sum_i\Big(X_iX_if\Big)(gx)
 &= \sum_i\left\langle d \Big(df(X_i)\Big)(gx), X_i(gx)\right\rangle\\
 %&= \sum_id\Big(df\cdot X_i\Big)(gx)\cdot X_i(gx)\\ 
 &= \sum_{i,j,k} 
    \left\langle 
      g_{ij}d\Big(df(X_j)\Big)(x)\circ (T_xg)^{-1} , g_{ik}(T_xg)\cdot X_k(x)
    \right\rangle\\
 &= \sum_i\Big(X_iX_if\Big)(x).
\end{align*}
Similarly, it is also easy to see that $X_0$ is $G$-invariant. Thus the
generator $A = X_0+\by{1}{2}\sum X_iX_i$ acts on $\cinf(Q)^G$,
whence it induces a projected operator $A_0$ characterized by
$A\circ\pi^* = \pi^*\circ A_0$.

Finally, 
to see that $A_0$ is the generator of $\pi\circ\Gamma$
we need to show that, for all $t\in\mathbb{R}_+$, $[x]\in Q/G$, 
and $f\in\cinf(Q/G)_0$, the $\mathbb{R}$-valued process 
\begin{align*}
 & M_t^f: W(Q/G)\longto\mathbb{R},\\
 & M_t^f(w) 
   := f(w(t))-f(w(0))-\int_0^t(A_0f)(w(s))\,ds
\end{align*}
is a $P_{[x]}$-martingale on $(W(Q/G),\mathcal{B}(W(Q/G))$ for the filtration
$(\mathcal{B}_t(W(Q/G)))_t$. 
See \cite[Def.~IV.5.3]{IW89}.
This means that 
for all $t\ge0$,  $s\in[0,t]$, and $A\in\mathcal{B}_s(W(Q/G))$
we should check that (see \cite[Chapter~V]{Str})
\[
 \int_A E^{P_{[x]}}\Big[M_t^f \Big| \mathcal{B}_s(W(Q/G))\Big](w)\,P_{[x]}(dw)
 = \int_A M_s^f(w)\,P_{[x]}(dw);
\]
$E^{P_{[x]}}$ denotes the expectation on  
$\Big(W(Q/G),\mathcal{B}(W(Q/G))\Big)$ with respect to $P_{[x]}$.
Indeed,
\begin{align*}
 \int_A E^{P_{[x]}}\Big[M_t^f \Big| \mathcal{B}_s(W(Q/G))\Big](w)\,P_{[x]}(dw)
 &=  
 \int_A M_t^f(w)\,P_{[x]}(dw)\\
 &= 
 \int_{p^{-1}A}(p^*M_t^f)(u)\,P_x(du)\\
 &= 
 \int_{p^{-1}A}(\hat{M}_t^{\pi^*f})(u)\,P_x(du)\\
 &=
 \int_{p^{-1}A} E^{P_x}\Big[\hat{M}_t^{\pi^*f} \Big| \mathcal{B}_s(W(Q))\Big](u)\,P_x(du)\\
 &=
 \int_{p^{-1}A} \hat{M}_s^{\pi^*f}(u)\,P_x(du)\\
 &=
 \int_{p^{-1}A}(p^*M_s^f)(u)\,P_x(du)\\
 &=
 \int_A M_s^f(w)\,P_{[x]}(dw).
\end{align*}
Here, $\hat{M}_t^{\pi^*f}: W(Q)\to\mathbb{R}$ is analogously 
defined to $M_t^{f}$. We have used that $\hat{M}_t^{\pi^*f}$ 
is a $P_x$-martingale with respect to $(\mathcal{B}_t(W(Q)))_t$ 
for all $x\in Q$ and that $p^*M_t^f = \hat{M}_t^{\pi^*f}$ 
which holds because of $(A_0f)\circ\pi= A(\pi^*f)$.
\end{proof}

\subsubsection{Stochastic Calogero-Moser systems}
To construct classical trigonometric or rational Calogero-Moser 
models one can take the configurations space $Q$ to be a 
(real or complex) semisimple Lie group $G$ or a semisimple Lie 
algebra $\gu$, respectively. The metric $\mu$ on $Q$ is then 
accordingly given by the (essentially unique)
bi-invariant (pseudo-)metric 
in the group or the $\Ad$-invariant non-degenerate bilinear form 
in the Lie algebra. Thus one obtains a $G$-invariant 
Hamiltonian system $(T^*Q,\Om^Q,\Ham)$ where $\Om^Q$ is the 
canonical symplectic form on $T^*Q$, $\Ham$ is the kinetic 
energy Hamiltonian, and $G$ acts by the cotangent lift of
the conjugation action or the adjoint action, respectively. The
resulting Calogero-Moser system is then realized by passing to 
the (singular) symplectic quotient of  $(T^*Q,\Om^Q,\Ham)$
with respect to the $G$-action. See \cite{KKS78,FP06,H04}.
In other words, Calogero-Moser systems are obtained by reducing 
the Hamiltonian description of 
geodesic motion on the Riemannian manifold $(Q,\mu)$ with 
respect to its obvious symmetry group. 

Here we propose the stochastic analogue of this construction 
which should consist of reducing the Hamiltonian construction of 
Brownian motion on $(Q,\mu)$ with respect to the $G$-action. 
To this end, we consider the Hamiltonian version in \cite{LO08} 
of \eqref{e:SL}.  Using the left trivialization we may write 
$TQ=Q\times\gu$ (recall that $Q = G$ or $Q =\mathfrak{g}$) and 
choose an orthonormal basis $L_i$ of $\gu$ 
with respect to the $\Ad$-invariant inner product 
$\vv<\cdot ,\cdot >$; suppose from now on, for simplicity of exposition,
that $G$ is compact. We obtain a $\gu$-valued Hamiltonian
\[
 H  = (H^i): T^*Q = Q\times\gu^*\longto\gu,
 \quad
 (q,p)\longmapsto\sum\vv<p,L_i>L_i.
\]
The Hamiltonian version of Brownian motion is determined by 
the associated Stratonovich equation
\[
 \delta\Gamma
 = \sum_i X_{H^i}(\Gamma)\vv<L_i,\delta W>,
\]
where $W$ is Brownian motion in $\gu\cong\mathbb{R}^n$. It is 
shown in \cite{LO08} that $\tau\circ\Gamma$ is Brownian motion 
in $(Q,\mu)$, where  $\tau: T^*Q\to Q$ is the projection. 
In the left trivialization $T^*Q = Q\times\gu^*$ the 
Hamiltonian $H$ is nothing but the projection onto the second 
factor when $\gu$ and $\gu^*$ are identified. Clearly, $H$ is 
not $G$-invariant but it is $G$-equivariant for the 
$\Ad$-action on $\gu$. It is easy to see that the same is 
true for the Stratonovich operator
$(q,p;w,w')\mapsto\sum X_{H^i}(q,p)\vv<L_i,w'>$.
In fact, we are ultimately concerned with the Stratonovich 
equation $\delta(\tau\circ\Gamma) = 
\mathcal{S}(W,\Gamma)\delta W 
= \sum\delta W^i L_i = \delta W$ and now it is evident that
$\mathcal{S}(g\cdot q,g\cdot (w,w')) = \Ad(g)w'$ whence we need 
the $\Ad(G)$-action on $(w,w')$ to make the Stratonovich 
operator $\mathcal{S}: Q\times T\gu\to TQ$ equivariant for 
the respective actions. Thus the above theorem applies and 
we obtain a diffusion $\pi\circ\tau\circ\Gamma$ in the 
(singular) space $Q/G$ when $\pi: Q\toto Q/G$ is the projection.

This construction has been carried out in \cite{H11} where it is shown that the associated 
stochastic Hamilton-Jacobi equation of \cite{LO09} is related to the 
quantum Calogero-Moser Schr\"odinger equation of \cite{OP78, OP83}.

The issue of equivariant reduction leads immediately to the 
setting of \cite{ELL04,ELL10}. There, one of the topics treated 
is that of a diffusion on the total space
of a principal bundle such that the diffusion factors through 
the projection and the generator induces also a generator on the 
base.

\subsection{Reconstruction of an equivariant diffusion}
\label{sec:equiv-diff}
Proposition~\ref{prop:mean-re} that we shall state and prove in 
this subsection will be used in the examples considered later 
on. Before we can state it, we need to recall some
notions of \cite{ELL04,ELL10}.

Let $\pi: Q\toto Q/G=:M$ be a left $G$-principal bundle with 
connection form $\A\in\Om^1(Q;\gu)$. Denote the horizontal and 
vertical spaces by $\hor$ and $\ver$, respectively. Assume 
that $\Gamma$ is a diffusion in $Q$ generated by a Stratonovich 
equation 
\begin{equation}\label{e:G_ELL}
 \delta\Gamma
 = X_0(\Gamma)\delta t + \sum_{a=1}^m X_a(\Gamma)\delta W^a
   + v_0(\Gamma)\delta t + \sum_{\alpha=1}^k v_{\alpha}(\Gamma)
   \delta B^{\alpha}
\end{equation}
such that $X_0,X_1,\dots,X_m$ are basic, $v_0,v_1,\dots,v_k$ are
vertical vector fields, and $(W,B)$ is Brownian motion in $\mathbb{R}^{m+k}$ 
with respect to the underlying filtered probability space. 
Thus the generator of $\Gamma$ is 
\[
 A^Q = X_0 + \by{1}{2}\sum X_aX_a + v_0 + 
 \by{1}{2}\sum v_{\alpha}v_{\alpha}.
\] 
By construction, this generator can be decomposed as follows. 
There are $Y_0,Y_1,\dots,Y_m\in\X(M)$ such that
$X_0=\hl(Y_0),\dots,X_m=\hl(Y_m)$ and $x_t := \pi\circ\Gamma_t$ 
is a diffusion in $M$ with generator $A^M =  Y_0 + 
\by{1}{2}\sum Y_aY_a$. Note that $\pi^*\circ A^M = 
A^Q\circ \pi^*$, that is, $A^Q$ is projectable. Moreover, 
$A^Q$ decomposes into a horizontal part $A^h =
X_0 + \by{1}{2}\sum X_aX_a$ and a vertical part $A^v =
v_0(\Gamma)\delta t + \sum_{\alpha=1}^k v_{\alpha}
\delta B^{\alpha}$. 

In \cite{ELL04,ELL10} one of the main points is that,
assuming a non-degeneracy condition, the induced
operator $A^M$ gives rise to a connection in $\pi: Q\to M$ 
with respect to which the operator $A^Q$ can be decomposed. 
In our applications the connection is given by the
problem and the decomposition into horizontal and vertical part 
arises naturally. 

We are going to use the observation of \cite{ELL04,ELL10}
that, for $q\in Q$ and $\Gamma_0=q$
a.s., the diffusion $\Gamma$ can be written as 
\begin{equation}\label{e:re}
 \Gamma_t = g_t^{x^h}\cdot x^h_t.
\end{equation}
Here $x^h_t$ is the diffusion in $Q$ with generator $A^h$ and
$x_0^h=q$ a.s. That is, $x^h_t$ is the horizontal lift of the 
$A^M$ diffusion $x_t$. The process $g_t^{x^h}$ in $G$ with 
$g_0^{x^h}=e$ a.s.\ can be written as the solution to a
time-dependent Stratonovich equation: for $w\in W(Q)$ we 
define
\begin{equation}\label{e:g^w}
 \delta g^w_t 
 = 
 T_eR_{g_t^w}
 \Big(
   \A_{g_t^w\cdot w_t}v_0(g_t^w\cdot w_t)
   + \sum \A_{g_t^w\cdot w_t}v_{\alpha}(g_t^w\cdot x^w_t)
 \Big).
\end{equation}
Here $R_g: G\to G$ is the action by right multiplication of 
$G$ on itself. Equation~\eqref{e:re} is reminiscent of a 
well-known concept in mechanics and can be viewed as a 
reconstruction equation (see, e.g., \cite[\S4.3]{AbMa1978},
\cite[\S3]{MaMoRa1990}, \cite[Theorem~11.8]{Mon}).

Let $Q_e(w)$ be the law on $W(G)$ of \eqref{e:g^w}. This 
depends only on $\pi\circ w\in W(M)$. Let $x_t$ be an 
$A^M$-diffusion path in $M$ with horizontal lift $x^h$. 
Consider the evaluation map $\ev_t: W(G)\to G$, 
$g(\cdot )\mapsto g(t)$. We call
\[
  E^{Q_e(x^h)}[\ev]\cdot x^h: 
  t\longmapsto E^{Q_e(x^h)}[\ev_t]\cdot x_t^h
\]
the \emph{mean reconstruction} of the sample path $x_t$.

From now on, we shall assume that $G$ can be realized as a 
matrix group $G\subset\GL(N)\subset\mathbb{R}^{N^2}$. The 
flat connection on $\mathbb{R}^{N^2}$ thus induces a 
connection $\nabla$ on $G$. For $X\in\gu\subset\gl(N)$ we
denote the associated left- and right-invariant vector field by 
$L_*X (g) = T_eL_g(X) = g X\in\gl(N)$ 
and $R_*X (g) = T_eR_g(X) = Xg\in\gl(N)$.    

\begin{proposition}\label{prop:mean-re}
If $x\in W(M)$ is an $A^M$-sample path 
and $v_0,\dots,v_k$ are $G$-invariant vector fields then
the expectation $E^{Q_e(x^h)}[\ev_t] =: c(t)$ associated to the mean
reconstruction 
of an $A^M$-diffusion path $x$ in $M$
is given as the solution to
the left-invariant time-dependent ODE
\[
 T_eL_{c(t)}^{-1}\left(c'(t)\right)
 = 
 \A_{x^h_t}v_0(x^h_t) 
 + \by{1}{2}\sum_{\alpha=1}^k\nabla_{L_*\big(\A_{x^h_t} v_{\alpha}(x^h_t)\big)}
 L_*(\A_{x^h_t} v_{\alpha}(x^h_t)) (e).
\]
\end{proposition}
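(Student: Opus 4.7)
The plan is to rewrite \eqref{e:g^w}, along the fixed horizontal sample path $x^h$, as a \emph{left-invariant} time-dependent Stratonovich SDE in $G\subset\GL(N)\subset\mathbb{R}^{N^2}$, convert to It\^o form, and take Euclidean expectation. Abbreviate $\xi_0(t):=\mathcal{A}_{x^h_t}v_0(x^h_t)$ and $\xi_\alpha(t):=\mathcal{A}_{x^h_t}v_\alpha(x^h_t)$ in $\gu$; under the conditional law $Q_e(x^h)$ these are deterministic continuous functions of $t$. Using the $G$-invariance of the $v_\alpha$ together with the equivariance $\mathcal{A}_{gq}(T_q\Phi_g v)=\Ad(g)\mathcal{A}_q(v)$ of a left principal connection, one gets $\mathcal{A}_{g_t\cdot x^h_t}v_\alpha(g_t\cdot x^h_t)=\Ad(g_t)\xi_\alpha(t)$, and likewise for $v_0$. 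The matrix identity $T_eR_g\circ\Ad(g)=T_eL_g$ (immediate, since $(g\xi g^{-1})g=g\xi$) then reduces \eqref{e:g^w} to the $\GL(N)$-valued equation
\[
\delta g_t \;=\; g_t\xi_0(t)\,\delta t \;+\; \sum_{\alpha=1}^k g_t\xi_\alpha(t)\,\delta B^\alpha.
\]

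Since the diffusion coefficients $\sigma_\alpha(g,t):=g\xi_\alpha(t)$ are linear in $g$, the Stratonovich--It\^o correction is immediate: $(D_g\sigma_\alpha)\sigma_\alpha=g\xi_\alpha(t)^2$, so the It\^o form reads $dg_t=g_t\xi_0(t)\,dt+\tfrac{1}{2}\sum_\alpha g_t\xi_\alpha(t)^2\,dt+\sum_\alpha g_t\xi_\alpha(t)\,dB^\alpha$. Under $Q_e(x^h)$ the coefficients $\xi_\bullet(t)$ are deterministic and locally bounded, so standard $L^2$-estimates for linear SDEs give finite moments for $g_t$, and the It\^o stochastic integrals have vanishing conditional expectation. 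Taking $c(t):=E^{Q_e(x^h)}[\ev_t]\in\GL(N)$ and interchanging expectation with the time integral therefore yields the matrix-valued linear ODE
\[
c'(t) \;=\; c(t)\xi_0(t) \;+\; \tfrac{1}{2}\sum_{\alpha=1}^k c(t)\xi_\alpha(t)^2,\qquad c(0)=e.
\]

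Multiplying by $c(t)^{-1}=T_eL_{c(t)}^{-1}$ produces the left-invariant form. To identify the right-hand side intrinsically, note that for any $\xi\in\gu$ the vector field $L_*\xi(g)=g\xi$ is linear in $g$, so the flat ambient connection gives $\nabla_{L_*\xi}L_*\xi(e)=D(L_*\xi)(e)[\xi]=\xi^2$; substituting $\xi=\xi_\alpha(t)$ recovers precisely the expression in the statement. The only delicate point is the very first step --- correctly tracking the left-action convention and verifying the matrix identity $T_eR_g\circ\Ad(g)=T_eL_g$ together with the transformation rule for $\mathcal{A}$ under the left $G$-action; everything else is the standard Stratonovich--It\^o calculus for linear SDEs combined with linearity of Euclidean expectation.
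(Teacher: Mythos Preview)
Your argument is correct and follows essentially the same route as the paper: use $G$-invariance of the $v_\alpha$ and equivariance of $\mathcal{A}$ to turn \eqref{e:g^w} into the left-invariant matrix Stratonovich equation $\delta g_t=g_t\xi_0(t)\,\delta t+\sum_\alpha g_t\xi_\alpha(t)\,\delta B^\alpha$, pass to It\^o form, and take expectation of the resulting linear SDE. Your presentation is in fact slightly more careful than the paper's (you note the finite-moment justification for dropping the martingale term and make the identity $T_eR_g\circ\Ad(g)=T_eL_g$ explicit), but the substance is the same.
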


\begin{proof}
We can use $G$-invariance of the vector fields together with the
equivariance property $\A_{gq}(gu_q)=\Ad(g)\A_q u_q$, 
$q\in Q$, $u_q\in
T_qQ$, of the principal bundle connection form $\A$ to
rewrite the defining equation \eqref{e:g^w} as 
\[
 \delta g_t^{x^h}
 =
 T_eR_{g_t^{x^h}}\left(T_e(R_{g_t^{x^h}}^{-1}\circ 
 L_{g_t^{x^h}})\left(\A_{x^h_t}
 \Big(v_0(x^h_t)\delta t 
     + \sum v_{\alpha}(x^h_t)\delta B^{\alpha} \Big)
     \right)\right).
\]
Letting
\begin{align*}
 g(t) &= (g(t)^l_n)_{nl} := g_t^{x^h},\\
 a(t)_{\alpha}
 &= (a(t)_{\alpha}^{mn})_{mn} := \A_{x^h_t}v_{\alpha}(x^h_t)\in\gl(N),\\
 b(t) 
 &= (b(t)^{mn})_{mn}
 := \A_{x^h_t}v_0(x^h_t)\in\gl(N),
\end{align*}
this becomes with the summation convention, for $l=1,\dots,N$, the  Stratonovich
equation 
\[
 \delta g^l
 = 
 \Big(g^l_m a_{\alpha}^{mn}\delta B^{\alpha}  
      + g^l_m b^{mn}\delta t
 \Big)_{n=1}^N
\]
in $\mathbb{R}^N$ when we think of $g^l$ as a column vector
and suppress the time-dependency.
The associated It\^o equation in $\mathbb{R}^N$ is, for $l=1,\dots,N$,
\[
 d g^l
 = 
 \Big(
   g^l_m a_{\alpha}^{mn} dB^{\alpha}  
   + ( g^l_m b^{mn} 
   +   \by{1}{2}a_{\alpha}^{rn} g^l_m a_{\alpha}^{mr}) dt
 \Big)_{n=1}^N.
\]
(See e.g.\ \cite[Equ.~(6.1.3)]{Oks07} for the conversion rule of Stratonovich equations
to It\^o equations.)
This is a linear time-dependent It\^o equation in $\mathbb{R}^N$. Hence, the mean
motion is found by erasing the martingale term in the corresponding
integral equation. This implies that the expected motion of $g$ is
given by
\[
 c'(t)
 = \by{d}{dt}E[g](t)
 = E[g](t)\left(b(t) + \by{1}{2}a(t)_{\alpha}a(t)_{\alpha}\right)
\]
which is an equation in $\GL(N)$. Since $a(t)_{\alpha}a(t)_{\alpha} =
\nabla_{L_*a(t)_{\alpha}} (L_*a(t)_{\alpha})(e)$ the claim follows. 
\end{proof}

\subsection{Time reversible diffusions}\label{sec:trd}

The references for this section are \cite{Kol36,Ken78,IW89}.
Let $(M,\mu)$ be a Riemannian manifold and $\Gamma$ an 
$A$-diffusion in $M$ where 
\begin{equation}\label{e:trdA}
 A = \by{1}{2}\Delta + \by{1}{2}b
\end{equation}
with $\Delta$ the Laplace-Beltrami operator and $b$ a vector 
field. Let $p(t,x,y)$ denote the transition probability 
density of $\Gamma$ (the minimal fundamental solution -- 
see \cite{Ken78}). If $\vol_{\mu}$ is the Riemannian volume 
form on $M$ then, for $(t,x,S)\in\mathbb{R}_+\times M\times
\mathcal{B}(M)$, the transition probability of $\Gamma$ is 
\[
 P(t,x,S) = \int_S p(t,x,y)\vol_{\mu}(y).
\]
This quantifies the probability that a diffusion path starting 
at $x$ is in $S$ after time $t$. The diffusion $\Gamma$ is 
said to be \textit{symmetrizable} if there is a smooth function 
$\phi>0$ such that 
\begin{equation}\label{e:symm}
 p(t,x,y)\phi(x) = p(t,y,x)\phi(y)\;
 \textup{ for all } t,x,y\in\mathbb{R}_+\times M\times M
\end{equation}
in which case $\Gamma$ is called $\phi$-\textit{symmetric}.

A probability measure $\nu$ on $M$ is an 
\textit{equilibrium measure} if $\int_M\nu = 1$ and
\[
P(t,x,S)\longto\nu(S) 
\textup{ as }
t\longto\infty 
\]
for all 
$(x,S)\in M\times\mathcal{B}(M)$. 
Equilibrium measures, if they exist, are unique.
If $\nu = \phi\vol_{\mu}$ is an equilibrium measure 
then we refer to $\phi$ as the
\textit{equilibrium distribution}.

The diffusion  $\Gamma$ is called \textit{time-reversible} 
if its law coincides with that of the time-reversed
process; this means that for each $T>0$ the law $P_{[0,T]}$ of $[0,T]\times\Om\to M$, 
$(t,\om)\mapsto\Gamma_t(\om)$ is the
same as the law $P_{[0,T]}^-$ of $[0,T]\times\Om\to M$,
$(t,\om)\mapsto\Gamma_{T-t}(\om)$. 

The adjoint operator $A^*$ associated to $A$ is given by 
\[
 A^*f = \by{1}{2}\Delta f  - \by{1}{2}\textup{div}_{\mu}(fb)
\]
where $f\in\cinf(M)$.  
Here, the adjoint is with respect to the $L^2$ inner product
$\vv<f,g> = \int_M fg\,\vol_{\mu}$.
The following result is essentially due to Kolmogorov.

\begin{theorem}[\cite{Kol36,Ken78,IW89}]\label{thm:Kol}
With notation as above the following are true. 
\begin{enumerate}[\up (1)]
\item
The $A$-diffusion $\Gamma$ is
symmetric if and only if $b$ is a gradient. Moreover, if 
$b=\textup{grad}(\log\phi)$ then $\Gamma$ is $\phi$-symmetric 
and $A^*\phi=0$. 
\item
$\Gamma$ is time reversible if and only if it is symmetric and 
has an equilibrium distribution $\phi$, in which case 
$\Gamma$ is $\phi$-symmetric.
\item
If $M$ is compact then an equilibrium distribution always exists.  
\item
If $M$ is compact 
then the unique equilibrium distribution $\phi$ is  characterized by
the equations 
$\int_M\phi\vol_{\mu}=1$ and $A^*\phi=0$.
\end{enumerate}
\end{theorem}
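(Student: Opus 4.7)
The plan is to treat the four statements in order, reducing everything to a functional-analytic characterization of $\phi$-symmetry in terms of self-adjointness of $A$ on the weighted space $L^2(M,\phi\vol_{\mu})$, and to appeal to elliptic theory on compact manifolds for existence/uniqueness of the equilibrium distribution.

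For (1), the first move is to reformulate \eqref{e:symm}: writing the semigroup $T_tf(x)=\int p(t,x,y)f(y)\vol_{\mu}(y)$, the density identity $p(t,x,y)\phi(x)=p(t,y,x)\phi(y)$ is equivalent to $T_t$ being self-adjoint on $L^2(M,\phi\vol_{\mu})$ for every $t\geq 0$, which in turn is equivalent to $A$ being symmetric on this weighted space. I would then compute, by integration by parts on $(M,\mu)$, for $f,g\in C_c^{\infty}(M)$,
\[
\int_M\bigl((Af)g-(A g)f\bigr)\phi\,\vol_{\mu}
 =\tfrac{1}{2}\int_M\bigl(f\langle\nabla g,\nabla\phi\rangle-g\langle\nabla f,\nabla\phi\rangle\bigr)\vol_{\mu}+\tfrac{1}{2}\int_M\bigl(g\,b(f)-f\,b(g)\bigr)\phi\,\vol_{\mu},
\]
using self-adjointness of $\Delta$ and the product rule. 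Choosing $f$ and $g$ freely shows that this expression vanishes for all $f,g$ if and only if $\phi b=\nabla\phi$, i.e.\ $b=\operatorname{grad}(\log\phi)$ is a gradient. The last sentence of (1) follows by direct substitution: with $b=\operatorname{grad}(\log\phi)$ one has $\phi b=\nabla\phi$, so $A^*\phi=\tfrac{1}{2}\Delta\phi-\tfrac{1}{2}\operatorname{div}_{\mu}(\phi b)=\tfrac{1}{2}\Delta\phi-\tfrac{1}{2}\operatorname{div}_{\mu}(\nabla\phi)=0$.

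For (2), I would argue that time reversibility of $\Gamma$ forces stationarity of its one-dimensional marginal, so that the common law $\nu$ is an equilibrium measure; writing $\nu=\phi\vol_{\mu}$ and expanding the two-dimensional marginals of $P_{[0,T]}$ and $P_{[0,T]}^-$ in terms of the transition density $p(t,\cdot,\cdot)$ reduces time reversibility precisely to \eqref{e:symm}, i.e.\ to $\phi$-symmetry. Conversely, if $\Gamma$ is $\phi$-symmetric and $\phi\vol_{\mu}$ is an equilibrium distribution, then starting from $\phi\vol_{\mu}$ makes the process stationary, and the finite-dimensional distributions, being determined by $p$ and $\phi$, are invariant under time reversal thanks to iterated application of \eqref{e:symm} in the Chapman--Kolmogorov representation; this propagates to equality of the laws on path space by Kolmogorov's extension theorem.

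For (3) and (4), I would use compactness of $M$ together with ellipticity of $A$. Existence of an equilibrium distribution on compact $M$ is the Krylov--Bogoliubov argument: time averages of transition probabilities on the compact Polish space $M$ have weak limit points, and any such limit is invariant under the Feller semigroup generated by $A$; smoothness and positivity of the density $\phi$ follow from hypoellipticity of the elliptic operator $A^*$ and the strong maximum principle. The characterization in (4) is obtained by testing: if $\nu=\phi\vol_{\mu}$ is invariant and $f\in C^{\infty}(M)$, then $\frac{d}{dt}\bigr|_{t=0}\int T_tf\,\phi\,\vol_{\mu}=\int(Af)\phi\,\vol_{\mu}=\int f\,A^*\phi\,\vol_{\mu}=0$, so $A^*\phi=0$; normalization gives $\int_M\phi\,\vol_{\mu}=1$. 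Uniqueness finally follows from the fact that the kernel of the elliptic operator $A^*$ acting on $C^{\infty}(M)$ is one-dimensional on compact connected $M$ (again by the strong maximum principle applied to a positive element of the kernel).

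I expect the main obstacle to be bookkeeping in step (1): the computation is a standard integration-by-parts argument, but one must keep track of the factor of $\tfrac{1}{2}$ in front of $b$ in \eqref{e:trdA} and of the distinction between the formal adjoint with respect to $\vol_{\mu}$ (which appears in the definition of $A^*$) and self-adjointness with respect to $\phi\vol_{\mu}$ (which characterizes $\phi$-symmetry). Once the identity $A^*\phi=0\iff \phi b=\nabla\phi$ is in hand, the rest of the theorem is structural and parallels the classical proofs in \cite{Kol36,Ken78,IW89}, which I would cite for the purely stochastic steps (existence of the transition density, equivalence between pathwise and kernel-level time reversibility).
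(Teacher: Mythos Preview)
The paper does not prove Theorem~\ref{thm:Kol} at all: it is stated with attribution to \cite{Kol36,Ken78,IW89} and used as a black box (the only place it is invoked is in the two-line proof of Theorem~\ref{thm:chap-diff}). So there is no ``paper's own proof'' to compare against.

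Your outline is a correct and standard reconstruction of the cited results. The integration-by-parts identity in step~(1) is right, and from it the equivalence $\phi b=\nabla\phi\iff A$ is symmetric on $L^2(\phi\vol_{\mu})$ does drop out by separating first- and zeroth-order terms in $g$ (the zeroth-order identity $\Delta\phi=\operatorname{div}_{\mu}(\phi b)$ is then automatic once $\nabla\phi=\phi b$, which is exactly your computation $A^*\phi=0$). Steps~(2)--(4) are the classical arguments from \cite{Ken78} and \cite[Ch.~V]{IW89}: stationarity from reversibility, the Chapman--Kolmogorov iteration of \eqref{e:symm} to get equality of finite-dimensional distributions, Krylov--Bogoliubov plus ellipticity on compact $M$ for existence, and the maximum principle for uniqueness. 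One small caveat: in step~(2) you should be explicit that the equilibrium distribution is used as the \emph{initial} distribution, since time reversibility as defined in the paper compares $P_{[0,T]}$ and $P_{[0,T]}^-$, and without stationarity these laws need not even have the same one-dimensional marginals. Otherwise nothing is missing.
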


Compactness of $M$ is satisfied in important examples such the
Chaplygin ball or the two-wheeled carriage studied in 
Section~\ref{sec:5}.

Assuming that $M$ is compact, \cite[Chapter~5]{JQQ04} give 
various equivalent conditions for a diffusion of the 
form \eqref{e:trdA} to be time-reversible. One such condition 
is that the diffusion have vanishing entropy production rate 
\begin{equation}\label{epr}
 \lim_{T\to0}\by{1}{T}H(P_{[0,T]},P^-_{[0,T]}).
\end{equation}
The \textit{relative entropy} $H(\mu,\nu)$ of two probability measures 
$\mu,\nu$ on a measure space $(\mathcal{W},\mathcal{B})$ is defined as (see \cite[Definition~1.4.3]{JQQ04})
\[
 H(\mu,\nu)
 :=
 \left\{
 \begin{matrix}
  \int_{\Om}\log\by{d\mu}{d\nu}\mu(d\om)
  && \textup{if }\mu<<\nu \textup{ and }\log\by{d\mu}{d\nu}\in L^1(d\mu);\\
  +\infty && \textup{otherwise}.
 \end{matrix}
\right.
\]

\section{Non-holonomic diffusions}\label{sec:4}

Consider a non-holonomic system $(Q,\mathcal{D},\mathcal{L})$ as in
Section~\ref{sec:2}.
This section is concerned with the study of non-holonomic diffusions on
$\mathcal{D}$ which should be given by a Stratonovich equation of the form
\begin{equation}\label{e:nh-diff}
 \delta\Gamma 
 = 
 X^{\mathcal{C}}_{\mathcal{H}} \delta t 
  + \mathcal{S}^{\mathcal{C}}(\Gamma,W)\delta W.
\end{equation}
Here $X^{\mathcal{C}}_{\mathcal{H}}$ describes the
dynamics of the deterministic system,
$W$ is Brownian motion in $\mathbb{R}^d$, $d=\dim Q$,  
and 
$\mathcal{S}^{\mathcal{C}}(\Gamma,W)\delta W$ should be interpreted as a
noise term that stems from
\emph{constrained Brownian motion}.
This is in analogy to \cite[Section~3.1]{LO08} and \cite{Bis81} where
Hamiltonian diffusions are introduced. 
However, equation \eqref{e:nh-diff} does not make sense, in general,
unless the configuration space is parallelizable.
The problem to be considered below is to make this equation precise 
and to study the notion of constrained Brownian motion on manifolds.

\subsection{Constrained Brownian motion}\label{sec:CBM}
Let $(Q,\mathcal{D},\mathcal{L})$ be a non-holonomic system with symmetry group $G$
as in Section~\ref{sec:2}.

Let $\rho: \mathfrak{F}\to Q$ be the orthonormal frame bundle over
$(Q,\mu)$ and denote its structure group by $K:=\textup{O}(d)$ and
its Lie algebra by $\mathfrak{k}: = \mathfrak{so}(d)$. 
The Levi-Civita connection $\nabla^{\mu}$ on $TQ$ gives rise to a
uniquely determined principal connection $\omega\in 
\Omega^1\left(\mathfrak{F};\mathfrak{k}\right)$ on the principal
bundle $\rho: \mathfrak{F}\to Q$; denote by $\hor^{\omega} = 
\ker\omega\subset T\mathfrak{F}$ its horizontal subbundle.
Equip $\mathfrak{F}$ with a $K$-invariant metric $\nu$ such that 
$\rho$ becomes a Riemannian submersion and $\hor^{\omega}$ and 
$\ver(\rho) := \ker T\rho$ are
perpendicular. Since $G$ acts by isometries on $(Q,\mu)$, it lifts to
an action on $\mathfrak{F}$ and we may assume $\nu$ to be 
$G$-invariant; e.g., we could take $\nu$ to be the Sasaki-Mok metric 
(see the survey \cite{Sek08}). 
We can use the connection to lift the constraints to a subbundle
$\mathcal{D}\mathfrak{F}\subset T\mathfrak{F}$ defined via the 
natural $\omega$-dependent vector bundle isomorphism
\[
 \mathcal{D}\mathfrak{F} 
 \cong_{\om} (\mathfrak{F}\times_Q\mathcal{D})\oplus\ver(\rho).
\]
To understand this definition and the 
isomorphism consider the bundle morphism over $\mathfrak{F}$
defined by
\begin{align*}
 \mathfrak{F}\times_Q\mathcal{D}\oplus\ver(\rho)
 &\hookto
 \hor^{\om}\oplus\ver(\rho)
 \cong_{\om}T\mathfrak{F},\\
 (u_q,X_q;\eta_{u_q})
 &\longmapsto
 \textup{hl}^{\om}_{u_q}(X_q)+\eta(u_q).
\end{align*}
As before, $\textup{hl}^{\om}: \mathfrak{F}\times_Q TQ\to\hor^{\om}$ 
is the horizontal lift mapping associated to $\om$.
Now the subbundle $\mathcal{D}\mathfrak{F}$ is defined as the image of this morphism.

Thus $(\mathfrak{F},\mathcal{D}\mathfrak{F},\by{1}{2}||\cdot||_{\nu})$ 
is a new $G$-invariant non-holonomic system covering 
$(Q,\mathcal{D},\mathcal{L})$ in the following sense: 
The $G$-action lifts to an action on $\mathcal{D}\mathfrak{F}$
and there is an induced space $\C\mathfrak{F}$ 
defined by
\[
 \C\mathfrak{F}: = \set{\xi\in T(\mathcal{D}\mathfrak{F})\mid 
  T\tau_{\mathfrak{F}}(\xi)\in\mathcal{D}\mathfrak{F}},
\]
where $\tau_{\mathfrak{F}}: T\mathfrak{F}\to\mathfrak{F}$ is the 
tangent bundle projection. Again, we split 
$T(T\mathfrak{F})|(\mathcal{D}\mathfrak{F}) = \C\mathfrak{F}\oplus(\C\mathfrak{F})^{\Om_{\mathfrak{F}}}$,
where $\Omega_{\mathfrak{F}}$ is now the canonical symplectic
form on $T\mathfrak{F}\cong_{\nu}T^*\mathfrak{F}$ (the tangent and
cotangent bundles of $\mathfrak{F}$ are identified via the Riemannian
metric $\nu$ on $Q$), and $P_{\mathfrak{F}}: \C\mathfrak{F}\oplus
(\C\mathfrak{F})^{\Omega_{\mathfrak{F}}}\to\C\mathfrak{F}$ denotes the
associated projection. The situation is summarized in the following 
commutative diagram:
\begin{equation}\label{e:diag}
\xymatrix{
 {T(T\mathfrak{F})|(\mathcal{D}\mathfrak{F})}
  \ar @{=}[r]
  \ar @{->}[d]_{TT\rho}
 &
 {\C\mathfrak{F}\oplus(\C\mathfrak{F})^{\Omega_{\mathfrak{F}}}}
  \ar @{->}[r]^-{P_{\mathfrak{F}}}
  \ar[d]
 & 
 {\C\mathfrak{F}}
  \ar[d]\\
 {TTQ|\mathcal{D}}\ar @{=}[r]
 &
 {\C\oplus\C^{\Omega}}\ar @{->}[r]^-{P}
 &
 {\C}.
}
\end{equation}
Indeed, this diagram is commutative since $TT\rho(\C\mathfrak{F})=\C$ and 
we may regard $(TQ,\Omega)$ as the
symplectic reduction of $(T\mathfrak{F},\Omega_{\mathfrak{F}})$ with 
respect to the $K$-action at $0$. 
In particular, 
\[
TT\rho \left(P_{\mathfrak{F}}\left(X_{\rho^*f}\right)\right) 
= P\left(TT\rho\left(X_{\rho^*f}\right)\right) 
= X^{\mathcal{C}}_f\in\X(\mathcal{D})
\] 
for any $f\in\cinf(TQ)$.

According to Section~\ref{sec:3} we can construct Brownian motion on
$(Q,\mu)$ by fixing Brownian motion $W=(W_i)$ in $\mathbb{R}^d$, $d=\dim Q$,
and the Hamiltonian
\begin{equation}\label{e:H}
 H: T\mathfrak{F}\longto\mathbb{R}^d,\quad
 (u,\eta)\longmapsto\big(\nu(\eta,L_i(u))\big) = \big(H^i(u,\eta)\big),
\end{equation}
where the $L_i$ are defined by \eqref{e:L}.
This gives rise to the Stratonovich operator 
\[
 \mathcal{S}^H: 
 T\mathfrak{F}\times T\mathbb{R}^d\longto TT\mathfrak{F},\quad
 (u,\eta,x,w)\longmapsto\sum X_{H^i}(u,\eta)\vv<e_i,w>,
\]
where $\{e_1, \ldots, e_d\}$ is the standard basis in $\mathbb{R}^d$. 
If $\Gamma^H$ solves $\delta\Gamma^H 
= \mathcal{S}^H(W,\Gamma^H)\delta W$ then
$\tau_{\mathfrak{F}}\circ\Gamma^H$ solves \eqref{e:SL} and
$\rho\circ\tau_{\mathfrak{F}}\circ\Gamma^H$ is Brownian motion in
$(Q,\mu)$.

\begin{definition}
We define \emph{constrained Brownian motion} to be the
process 
\[
 \Gamma^{\textup{nh}} := 
 \rho\circ\tau_{\mathfrak{F}}\circ\Gamma^{\mathcal{C\mathfrak{F}}}
\]
in $Q$, where $\Gamma^{\mathcal{C}\mathfrak{F}}$ is a process in 
$\mathcal{D}\mathfrak{F}$ solving the Stratonovich
equation
\begin{equation}
 \delta\Gamma^{\mathcal{C}\mathfrak{F}}
 =
 P_{\mathfrak{F}}(\Gamma^{\mathcal{C}\mathfrak{F}})\mathcal{S}^H(W,\Gamma^{\mathcal{C}\mathfrak{F}})\delta W
 = \sum X^{\mathcal{C}\mathfrak{F}}_{H^i}(
 \Gamma^{\mathcal{C}\mathfrak{F}}) \delta W^i.
\end{equation}
\end{definition}

Let $(\mathcal{D}\mathfrak{F})^{\bot}$ be the $\nu$-orthogonal of 
$\mathcal{D}\mathfrak{F}$ and
$\Pi_{\mathfrak{F}}: T\mathfrak{F} = \mathcal{D}\mathfrak{F}\oplus(\mathcal{D}\mathfrak{F})^{\bot} \to\mathcal{D}\mathfrak{F}$ be the
orthogonal projection.
Similarly, we define $\Pi: TQ=\mathcal{D}\oplus\mathcal{D}^{\bot}\to\mathcal{D}$
and we note that 
\begin{equation}\label{e:Pi-Pi}
 \Pi\circ T\rho = T\rho\circ\Pi_{\mathfrak{F}}.
\end{equation}
Equation \eqref{e:Pi} implies that
$\tau_{\mathfrak{F}}\circ\Gamma^{\mathcal{C}\mathfrak{F}}$ is a
diffusion in $\mathfrak{F}$ generated by the Stratonovich equation
\[
 \delta(\tau_{\mathfrak{F}}\circ\Gamma^{\mathcal{C}\mathfrak{F}})
 =
 \sum\Pi_{\mathfrak{F}}(\tau_{\mathfrak{F}}\circ
 \Gamma^{\mathcal{C}\mathfrak{F}})
 L_i(\tau_{\mathfrak{F}}\circ\Gamma^{\mathcal{C}\mathfrak{F}})\delta W^i. 
\]
The associated Stratonovich operator will be denoted by
$\mathcal{S}^{\mathcal{D}\mathfrak{F}}$. It is explicitly given by 
\begin{equation}\label{e:SDF}
 \mathcal{S}^{\mathcal{D}\mathfrak{F}}:
 \mathfrak{F}\times T\mathbb{R}^d\longto\mathcal{D}\mathfrak{F},\quad
 (u;w,w')\longmapsto\sum\Pi_{\mathfrak{F}}(u)L_i(u)\vv<e_i,w'>
\end{equation}
where $\{e_1, \ldots, e_d\}$ is the standard basis in $\mathbb{R}^d$. 

Henceforth, local orthonormal frames on $(Q,\mu)$ and local
sections of $\rho: \mathfrak{F}\to Q$ will be identified.

\begin{theorem}\label{prop:Gamma-nh}
The process $\Gamma^{\textup{nh}}$ is a diffusion in $Q$ (in the sense
of Section~\ref{sec:3}) and its generator $A$ has the form 
\[
 A = \by{1}{2}\sum(\Pi u_i)(\Pi u_i)  
      - \by{1}{2}\sum\Pi\nabla^{\mu}_{\Pi u_i}u_i
\]
in a local orthonormal frame $u=(u_i)$ on $(Q,\mu)$.
\end{theorem}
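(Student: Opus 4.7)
The plan is to apply the equivariant reduction Theorem~\ref{prop:S-equiv} to the process $\tau_{\mathfrak{F}}\circ\Gamma^{\mathcal{C}\mathfrak{F}}$ on $\mathfrak{F}$, with structure group $K=\textup{O}(d)$ and quotient $\mathfrak{F}/K=Q$, and then compute the induced generator by localizing in a section. First I would verify that the Stratonovich operator $\mathcal{S}^{\mathcal{D}\mathfrak{F}}$ of \eqref{e:SDF} is $K$-equivariant for the defining representation of $\textup{O}(d)$ on $\mathbb{R}^d$. This is routine given three ingredients: $K$-equivariance of the horizontal lift of the principal connection $\omega$ (yielding $L_i(u\cdot k)=TR_k\sum_j k_{ji}L_j(u)$), $K$-invariance of both $\mathcal{D}\mathfrak{F}=\textup{hl}^\omega(\mathcal{D})\oplus\ver(\rho)$ and its $\nu$-orthogonal complement $\textup{hl}^\omega(\mathcal{D}^\perp)$ (so that $\Pi_{\mathfrak{F}}$ commutes with the $K$-action), and the orthogonality of the representation $k^{-1}=k^T$. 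Granting this, Theorem~\ref{prop:S-equiv} (with $X_0=0$) yields that $\Gamma^{\textup{nh}}$ is a diffusion on $Q$ whose generator $A$ satisfies $\rho^*(Af)=A^{\mathfrak{F}}(\rho^*f)$ for $f\in C^\infty(Q)$, where $A^{\mathfrak{F}}=\tfrac12\sum_i Y_iY_i$ and $Y_i:=\Pi_{\mathfrak{F}}L_i$.

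Next I would identify the vector fields $Y_i$ and reduce the computation to a local section. Since $\rho$ is a Riemannian submersion with $\hor^\omega\perp_\nu\ver(\rho)$, the decomposition $\mathcal{D}\mathfrak{F}=\textup{hl}^\omega(\mathcal{D})\oplus\ver(\rho)$ together with $(\mathcal{D}\mathfrak{F})^\perp=\textup{hl}^\omega(\mathcal{D}^\perp)$ gives fiberwise $\Pi_{\mathfrak{F}}\circ\textup{hl}^\omega=\textup{hl}^\omega\circ\Pi$, the frame bundle lift of \eqref{e:Pi-Pi}. Hence $Y_i(u)=\textup{hl}^\omega_u(\Pi\,u(e_i))$, which is $\omega$-horizontal with $T_u\rho\,Y_i(u)=\Pi\,u(e_i)$. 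Fix a local orthonormal frame $(u_1,\dots,u_d)$ on $(Q,\mu)$ with associated section $s:Q\to\mathfrak{F}$, and take the integral curve $\gamma$ of $Y_i$ starting at $s(q_0)$. Since $\gamma$ is $\omega$-horizontal, $q(t):=\rho(\gamma(t))$ has $\dot q(0)=\Pi u_i(q_0)$ and the moving frame $u^j_t:=\gamma(t)(e_j)$ is the parallel transport of $u_j(q_0)$ along $q$. Using $Y_i(\rho^*f)(u)=df_{\rho(u)}(\Pi\,u(e_i))$ and differentiating along $\gamma$ yields
\[
 Y_iY_i(\rho^*f)(s(q_0)) = \left.\frac{d}{dt}\right|_{0}df_{q(t)}\bigl(\Pi_{q(t)}u_t^i\bigr).
\]

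Finally I would split $u^i_t=u_i(q(t))+\bigl(u^i_t-u_i(q(t))\bigr)$ and evaluate the two pieces. The first piece produces the derivative of the function $(\Pi u_i)(f)\in C^\infty(Q)$ along $\dot q(0)=\Pi u_i(q_0)$, giving $\bigl((\Pi u_i)(\Pi u_i)f\bigr)(q_0)$. For the second, $u^i_0-u_i(q_0)=0$, and a covariant differentiation at $t=0$ gives $\tfrac{D}{Dt}|_0(u^i_t-u_i(q(t)))=0-\nabla^\mu_{\Pi u_i(q_0)}u_i$, because $u^i_t$ is parallel-transported while $u_i(q(\cdot))$ is the reference frame field; this contributes $-(\Pi\nabla^\mu_{\Pi u_i}u_i)(f)(q_0)$. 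Summing over $i$, halving, and invoking $\rho^*(Af)=A^{\mathfrak{F}}(\rho^*f)$ produces the stated formula. The main obstacle is this very second piece: tracking correctly that the connection correction has the form $-\tfrac12\sum\Pi\nabla^\mu_{\Pi u_i}u_i$, with the Levi-Civita connection and the orthogonal projection $\Pi$ appearing precisely once each, requires care about which frame is parallel-transported and rests crucially on the horizontality of $Y_i$ with respect to $\omega$, which in turn rests on $\Pi_{\mathfrak{F}}$ preserving $\hor^\omega$.
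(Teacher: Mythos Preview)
Your proof is correct. The equivariance check is essentially the paper's: both reduce to $\Pi_{\mathfrak{F}}\circ\textup{hl}^{\omega}=\textup{hl}^{\omega}\circ\Pi$ together with $K$-equivariance of the horizontal lift, and then invoke Theorem~\ref{prop:S-equiv}.

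Where you diverge is in the generator computation. The paper proceeds in two ways: first an intrinsic calculation that writes $\Pi_{\mathfrak{F}}L_i$ in a section-trivialization as the pair $(\Pi u_i,-\omega(\Pi u_i))$, differentiates the frame variable via the fundamental vector field $\frac{d}{dt}\big|_0(e^{t\omega(\Pi u_i)}u)e_i$, and identifies the result with $\nabla^{\mu}_{\Pi u_i}u_i$; second, an independent check in bundle coordinates $(x^i,e^i_j)$ using \eqref{e:hor-om-coord} and the Christoffel symbols. Your route instead exploits that $Y_i=\Pi_{\mathfrak{F}}L_i$ is $\omega$-horizontal, runs the integral curve $\gamma$ of $Y_i$ through $s(q_0)$, interprets $\gamma(t)(e_j)$ as the parallel frame along $q(t)=\rho(\gamma(t))$, and then splits $u^i_t=u_i(q(t))+(u^i_t-u_i(q(t)))$. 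The key step---that the $t$-derivative of $df_{q(t)}(\Pi_{q(t)}v_t)$ at $t=0$ with $v_0=0$ equals $df_{q_0}(\Pi_{q_0}\tfrac{D}{Dt}|_0 v_t)$---is exactly right because the Christoffel terms vanish against $v_0=0$. Your argument is a bit more geometric and makes transparent why the Levi-Civita connection and a single $\Pi$ appear in the drift term; the paper's coordinate computation is more mechanical but gives an explicit sanity check. Either way one lands on $A=\tfrac12\sum(\Pi u_i)(\Pi u_i)-\tfrac12\sum\Pi\nabla^{\mu}_{\Pi u_i}u_i$.
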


\begin{proof}
By Proposition~\ref{prop:S-equiv}, 
to see that $\Gamma^{\textup{nh}}$ is a diffusion we need to show
that 
\begin{equation}\label{e:equ}
 \mathcal{S}^{\mathcal{D}\mathfrak{F}}(u\cdot k,w,w') 
 = \mathcal{S}^{\mathcal{D}\mathfrak{F}}(u,kw,kw')\cdot k
\end{equation}
for all $k\in K$. Here, $u\cdot k$ denotes the principal right 
action of $k\in K$ on $u\in\mathfrak{F}$ 
and condition \eqref{e:equ} is equivalent to
\eqref{e:S-equiv} since one can invert a right action to 
obtain a left action. Indeed, $\mathcal{D}\mathfrak{F} \cong_{\omega} 
(\mathfrak{F}\times_Q\mathcal{D})\oplus\ver(\rho)$ and the
definition of $\nu$ imply 
$\Pi_{\mathfrak{F}}\circ\textup{hl}^{\om} = \textup{hl}^{\om}\circ\Pi$
and therefore
\begin{align*}
 \mathcal{S}^{\mathcal{D}\mathfrak{F}}(u\cdot k,w,w')
 &=
 \sum_{i=1}^d\Pi_{\mathfrak{F}}(u\cdot k)\textup{hl}^{\omega}_{u\cdot k}(u(ke_i))\vv<e_i,w'>\\
 &=
\left(\sum_{i=1}^d\textup{hl}^{\om}_{u}(\Pi(\rho(u))u(e_i))\vv<e_i,kw'>\right)\cdot k
\end{align*}
which proves \eqref{e:equ}.

In order to calculate the generator $A$, let $f\in\cinf(Q)_0$ and 
$u\in\mathfrak{F}$. Then
\begin{align*}
 A(\rho^*f)(u)
 &= \by{1}{2}\sum_{i=1}^d(\Pi_{\mathfrak{F}}L_i)
 (\Pi_{\mathfrak{F}}L_i)(\rho^*f)(u)\\
 &= \by{1}{2}\sum_{i=1}^d(\Pi u_i,-\om(\Pi u_i))(\Pi u_i,-\om(\Pi u_i))\rho^*f\\
 &= \by{1}{2}\sum_{i=1}^d(\Pi u_i,-\om(\Pi u_i))(\Pi u_if)\\
 &= \by{1}{2}\sum_{i=1}^d \Pi u_i (x\mapsto(\Pi u_i f)(x))
    - \by{1}{2}\sum_{i=1}^d\om(\Pi u_i)(v\mapsto (\Pi v(e_i)f))|_{v=u}\\
 &= \by{1}{2}\sum_{i=1}^d(\Pi u_i)(\Pi u_i)f
    - \by{1}{2}\sum_{i=1}^d\left(\Pi\left.\frac{\partial}{\partial t}\right|_{t=0}\left(e^{t\om(\Pi u_i)}u\right)e_i\right)f\\
 &= \by{1}{2}\sum_{i=1}^d(\Pi u_i)(\Pi u_i)f
    - \by{1}{2}\sum_{i=1}^d(\Pi \om(\Pi u_i)u_i)f\\
 &=  \by{1}{2}\sum_{i=1}^d(\Pi u_i)(\Pi u_i)f
    - \by{1}{2}\sum_{i=1}^d(\Pi \nabla^{\mu}_{\Pi u_i}u_i)f,
\end{align*}
where we have frequently dropped the base points to simplify the
notation.

Alternatively, the above calculation can be done in local coordinates
$u = (x^i,e^i_j)$ on $\mathfrak{F}$. Then, using the summation 
convention, $u_r =
e^m_r\del_m$ and $\Pi\del_m = \Pi_m^n\del_n$ with $\del_m =
\dd{x^m}{}$. The Christoffel symbols are given as usual by
$\nabla^{\mu}_{\del_k}\del_j = \Gamma^i_{kj}\del_i$. 
Using now the local coordinate description \eqref{e:hor-om-coord} of
$\hor^{\om}$ it follows that
\[
 \Pi_{\mathfrak{F}}L_r(u)
 =
 \textup{hl}^{\om}(\Pi u_r)
 =
 \textup{hl}^{\om}\Pi_m^n e^m_r\del_n
 =
 \Pi_m^n e^m_r
 \Big(
  \del_n - \Gamma^i_{nl}e^l_j\dd{e^i_j}{}
 \Big).
\]
This yields
\begin{align*}
 (\Pi_{\mathfrak{F}}L_r)(\Pi_{\mathfrak{F}}L_r)f
 &=
 (\Pi u_r)(\Pi u_r)f
 -
 \Pi^n_me^m_r\Gamma^i_{nl}e^l_j\dd{e^i_j}{}(\Pi^b_ae^a_r\del_b f)\\
 &=
 (\Pi u_r)(\Pi u_r)f - \Pi^n_me^m_r\Gamma^a_{nl}e^l_r\Pi^b_a\del_b f\\
 &=
 (\Pi u_r)(\Pi u_r)f - \Pi\nabla^{\mu}_{\Pi_m^ne^m_r\del_n}e^l_r\del_l f
 =  
 (\Pi u_r)(\Pi u_r)f - \Pi\nabla^{\mu}_{\Pi u_r}u_r f
\end{align*}
which immediately yields the formula for $A$ in the statement of
the theorem.
\end{proof}

The second term in the above formula for $A$ is reminiscent of the
non-holonomic connection $\nabla^{\textup{nh}}$. This is defined as
the linear connection
\begin{equation}\label{e:nabla-nh}
 \nabla^{\textup{nh}}: \mathfrak{X}(Q)\times \mathfrak{X}(Q)\longto \mathfrak{X}(Q),\quad
 (X,Y)\longmapsto
  \nabla^{\mu}_XY - (\nabla^{\mu}_X\Pi)Y.
\end{equation}
If $Y$ is a section of $\mathcal{D}$, one obtains the useful identity 
$\nabla^{\textup{nh}}_XY = \Pi\nabla^{\mu}_XY$. Let
$\textup{Hess}^{\textup{nh}}$ be the Hessian of
$\nabla^{\textup{nh}}$.

\begin{corollary}\label{cor:nh-diff}
The non-holonomic diffusion $\Gamma^{\textup{nh}}$ is a martingale in
$Q$ with respect to the non-holonomic connection
$\nabla^{\textup{nh}}$. 
\end{corollary}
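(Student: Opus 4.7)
The plan is to apply the martingale criterion quoted earlier from \cite[Theorem~7.31]{E89}: a diffusion with generator $A$ is a martingale with respect to a linear connection $\nabla$ (possibly with torsion) if and only if $A$ has no first-order part with respect to $\nabla$, i.e.\ its $\nabla$-drift vanishes. Combined with the fact, also recalled before Theorem~\ref{prop:Gamma-nh}, that the $\nabla$-drift of a generator of the form $A = X_0 + \tfrac12\sum_i X_iX_i$ equals $X_0 + \tfrac12\sum_i \nabla_{X_i}X_i$, the corollary will follow from a direct computation.

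Concretely, starting from the formula
\[
 A = \tfrac{1}{2}\sum_{i=1}^d(\Pi u_i)(\Pi u_i) - \tfrac{1}{2}\sum_{i=1}^d\Pi\nabla^{\mu}_{\Pi u_i}u_i
\]
of Theorem~\ref{prop:Gamma-nh}, I would read off $X_i = \Pi u_i$ and $X_0 = -\tfrac{1}{2}\sum_i \Pi\nabla^{\mu}_{\Pi u_i}u_i$, and compute the $\nabla^{\textup{nh}}$-drift as
\[
 X_0 + \tfrac{1}{2}\sum_{i=1}^d \nabla^{\textup{nh}}_{\Pi u_i}(\Pi u_i).
\]
Since $\Pi u_i$ is a section of $\mathcal{D}$, the identity $\nabla^{\textup{nh}}_X Y = \Pi\nabla^{\mu}_X Y$ noted right after the definition \eqref{e:nabla-nh} gives $\nabla^{\textup{nh}}_{\Pi u_i}(\Pi u_i) = \Pi\nabla^{\mu}_{\Pi u_i}(\Pi u_i)$, so the drift reduces to $\tfrac{1}{2}\sum_i \Pi\nabla^{\mu}_{\Pi u_i}(\Pi u_i - u_i) = -\tfrac{1}{2}\sum_i \Pi\nabla^{\mu}_{\Pi u_i}(\Pi^{\perp} u_i)$.

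To see that this expression vanishes it is cleanest to exploit the freedom in the choice of local orthonormal frame: since the generator $A$ and the drift are intrinsic objects on $Q$, I may choose $\{u_1,\dots,u_d\}$ adapted to the splitting $TQ = \mathcal{D}\oplus\mathcal{D}^{\perp}$, with $\{u_1,\dots,u_r\}\subset\Gamma(\mathcal{D})$ and $\{u_{r+1},\dots,u_d\}\subset\Gamma(\mathcal{D}^{\perp})$. Then $\Pi u_i = u_i$ for $i\le r$ and $\Pi u_i = 0$ for $i>r$, so $\Pi^{\perp} u_i$ vanishes whenever $\Pi u_i\neq 0$ and vice versa, making each summand zero. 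Equivalently, in such a frame the generator simplifies to $A = \tfrac12\sum_{i=1}^r u_iu_i - \tfrac12\sum_{i=1}^r \Pi\nabla^{\mu}_{u_i}u_i$, which one recognizes as the trace $\tfrac12\sum_{i=1}^r\operatorname{Hess}^{\textup{nh}}(f)(u_i,u_i)$ of the non-holonomic Hessian, manifestly second-order with respect to $\nabla^{\textup{nh}}$.

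The only mild subtlety, rather than a real obstacle, is that $\nabla^{\textup{nh}}$ carries torsion, so the Emery criterion is being used in its extension to connections with torsion; this extension is precisely the one acknowledged in the discussion preceding the statement, so no additional work is needed. An adapted orthonormal frame always exists locally (by Gram--Schmidt applied to $\mathcal{D}$ and $\mathcal{D}^{\perp}$ separately), and the computation is genuinely pointwise, so no global obstruction arises.
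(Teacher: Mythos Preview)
Your proof is correct and follows essentially the same route as the paper's: both pass to a local orthonormal frame adapted to the splitting $TQ=\mathcal{D}\oplus\mathcal{D}^{\bot}$ and recognize that in such a frame the generator becomes $A=\tfrac12\sum_a(u_au_a-\nabla^{\textup{nh}}_{u_a}u_a)=\tfrac12\sum_a\textup{Hess}^{\textup{nh}}(\cdot)(u_a,u_a)$, hence purely second order with respect to $\nabla^{\textup{nh}}$. Your preliminary computation of the drift in a general frame as $-\tfrac12\sum_i\Pi\nabla^{\mu}_{\Pi u_i}(\Pi^{\perp}u_i)$ is a harmless extra step; the paper goes directly to the adapted frame, but the argument is otherwise identical.
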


\begin{proof}
Since $A(\rho^*f)(u)$ depends only on $\rho(u)$ 
(this follows from Theorem~\ref{prop:S-equiv} 
but can also be checked directly),
we may take a local frame
$u=(u_i)=(u_a,u_{\alpha})$ which is adapted to the decomposition
$\mathcal{D}\oplus\mathcal{D}^{\bot}$ such that $u_a$ are local
sections of $\mathcal{D}$ and $u_{\alpha}$ are local
sections of $\mathcal{D}^{\bot}$. Then 
$A = \by{1}{2}\sum (u_a u_a - \nabla^{\textup{nh}}_{u_a}u_a) 
= \by{1}{2}\sum \textup{Hess}^{\textup{nh}}(\cdot)(u_a,u_a)$
which is purely second order, by definition.
\end{proof}

\subsection{$G$-Chaplygin diffusions and stochastic non-holonomic reduction}\label{sec:G-Chap-diff}
Continue to assume that the non-holonomic system
$(Q,\mathcal{D},\mathcal{L})$ is invariant with respect to a
free and proper action of a Lie group $G$.
Denote the projection
by $\pi: Q\toto Q/G = M$. 

The $G$- and $K$-action on $\mathfrak{F}$ commute. Thus, we may form the
product action of $G\times K$ on $\mathfrak{F}$. Since the $H^i$ from 
\eqref{e:H} are $G$-invariant, it follows that the Stratonovich operator
\eqref{e:SDF} satisfies condition \eqref{e:S-equiv} with respect to
the trivial $G$-representation on $\mathbb{R}^d$. Therefore,
$\Gamma^{\mathcal{C}\mathfrak{F}}$ induces a diffusion 
\[
 \pi\circ\rho\circ\tau_{\mathfrak{F}}\circ\Gamma^{\mathcal{C}\mathfrak{F}}
 = \pi\circ\Gamma^{\textup{nh}} 
 =: \Gamma^M
\]
on $M := \mathfrak{F}/(G\times K)$.

Now we make the additional assumption that the constraints are of
Chaplygin type, i.e., $\mathcal{D}$ is the kernel of a principal 
connection one-form $\mathcal{A} \in \Omega^1(Q; \mathfrak{g})$.
The non-holonomic connection \eqref{e:nabla-nh} on $Q$ induces
a connection on $M$ which will be referred to as the non-holonomic
connection $\nabla^{M}$ on $M$; it is given by 
\[
 \nabla^M: \mathfrak{X}(M)\times \mathfrak{X}(M)\longto \mathfrak{X}(M),
 \quad 
 (X,Y)\longmapsto T\pi\left(\Pi\nabla^{\mu}_{\textup{hl}^{\mathcal{A}} X}(\hl Y) \right)
\]
where $\hl: \X(M)\to\X(Q,\mathcal{D})$ (the space of vector fields on
$Q$ with values in the vector subbundle $\mathcal{D} \subset TQ$) is the 
horizontal lift map of $\mathcal{A}$. Recall from \S\ref{sec_Chaplygin} 
that the Riemannian metric $\mu$ on $Q$ naturally induces a Riemannian 
metric $\mu_0$ on the quotient $M:=Q/G$. To calculate the generator $A^M$ 
of $\Gamma^M$, take a local orthonormal frame $u=(u_a)$ on $M$. 
Similarly as in the proof of Corollary~\ref{cor:nh-diff}, the generator 
becomes
\begin{equation}\label{e:AM}
 A^M 
 = \by{1}{2}\sum(u_au_a - \nabla^{\mu_0}_{u_a}u_a)
    + \by{1}{2}\sum(\nabla^{\mu_0}_{u_a}u_a - \nabla^{M}_{u_a}u_a)
 = \by{1}{2}\Delta^{\mu_0} + \by{1}{2}b
\end{equation}
where $b = \sum(\nabla^{\mu_0}_{u_a}u_a - \nabla^{M}_{u_a}u_a)$. 

\begin{lemma}\label{lem:b}
$b = \mu_0^{-1}\beta$ where $\beta$ is defined by \eqref{e:beta1}.
\end{lemma}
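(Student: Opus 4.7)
The assertion $b=\mu_0^{-1}\beta$ is equivalent to $\mu_0(b,Z)=\beta(Z)$ for every $Z\in\X(M)$, and I would prove it pointwise by a Koszul-type calculation that transports everything to horizontal lifts in $Q$.

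First I would record two key identities. Since $\hl Y,\hl Z\in\D$ and $T\pi|\D$ is an isometry onto $TM$ by the construction of $\mu_0$,
\[
 \mu_0(Y,Z)\circ\pi=\mu(\hl Y,\hl Z),
 \qquad
 \mu_0(\nabla^M_X Y,Z)\circ\pi=\mu\big(\nabla^{\mu}_{\hl X}(\hl Y),\,\hl Z\big),
\]
the second equality because $\Pi$ is the $\mu$-orthogonal projection onto $\D$ and therefore drops out when paired with $\hl Z\in\D$. Moreover, the non-horizontal part of $[\hl X,\hl Y]$ is $\mathcal{A}$-vertical: if $V_{XY}:=[\hl X,\hl Y]-\hl[X,Y]$, then the standard curvature identity gives $\mathcal{A}(V_{XY})=-\curv^{\mathcal{A}}(\hl X,\hl Y)$, so $V_{XY}|_q=\xi_Q(q)$ with $\xi=-\curv^{\mathcal{A}}_q(\hl X,\hl Y)$, and hence
\[
 \mu(V_{XY},\hl Z)
 =\big\langle\mathbf{J}_G(\hl Z),\,-\curv^{\mathcal{A}}(\hl X,\hl Y)\big\rangle
\]
by the defining property of the momentum map $\mathbf{J}_G$.

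Next I would apply the torsion-free Koszul formula on $(M,\mu_0)$ with the triple $(X,Y,Z)$ and on $(Q,\mu)$ with $(\hl X,\hl Y,\hl Z)$, post-compose the first with $\pi$, and subtract. The derivative terms cancel by the first identity above, and each bracket term on $M$ differs from its counterpart on $Q$ by a $\mu(V_{\cdot\,\cdot},\hl\cdot)$ correction. The outcome is
\[
 \mu_0\big(\nabla^{\mu_0}_X Y-\nabla^M_X Y,\,Z\big)\circ\pi
 =\tfrac12\Big[-\mu(V_{XY},\hl Z)+\mu(V_{XZ},\hl Y)+\mu(V_{YZ},\hl X)\Big].
\]
Specializing to $X=Y=u_a$ forces $V_{u_a u_a}=0$ and collapses the right-hand side to $\mu(V_{u_a Z},\hl u_a)=\langle\mathbf{J}_G(\hl u_a),\curv^{\mathcal{A}}(\hl Z,\hl u_a)\rangle$. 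Summing over $a$ and reading off the formula \eqref{e:beta2} for $\beta$ yields $\mu_0(b,Z)=\beta(Z)$, as required.

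The main point of care, I expect, is the bookkeeping between the two distinct complements of the vertical bundle in $TQ$: the projection $\Pi$ used to define $\nabla^M$ is along the $\mu$-orthogonal complement $\D^\perp$, whereas the bracket correction $V_{XY}$ lies in the $\mathcal{A}$-vertical bundle $V=\ker T\pi$, and these two subbundles generally differ when $\D$ is not $\mu$-orthogonal to $V$. The argument works precisely because on one side of the Koszul identity $\hl Z\in\D$ makes $\Pi$ invisible in the relevant pairing, while on the other side $V_{XY}\in V$ lets $\mathbf{J}_G$ pick it up cleanly.
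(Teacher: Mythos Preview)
Your proof is correct and follows essentially the same route as the paper: both compare the Koszul formula on $(M,\mu_0)$ with that on $(Q,\mu)$ applied to horizontal lifts, identify the discrepancy in the bracket terms as the vertical correction $V_{XY}$, and convert $\mu(V_{XY},\hl Z)$ into a $\langle\mathbf{J}_G,\curv^{\mathcal{A}}\rangle$ pairing. The only organizational difference is that the paper specializes immediately to an orthonormal frame (computing $\mu_0(\nabla^{\mu_0}_{u_a}u_a,u_b)=-\mu_0([u_a,u_b],u_a)$ and its $Q$-counterpart separately), whereas you first derive the general O'Neill-type difference formula for $\nabla^{\mu_0}_XY-\nabla^M_XY$ and then set $X=Y=u_a$; the underlying computation is the same.
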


\begin{proof}
Essentially this formula is a special case of
\cite[Proposition~8.5]{K92}. 
For convenience we provide a proof 
by using a local 
orthonormal frame $(u_a)$ on $M$.
Let $K = \zeta\circ\curv^{\mathcal{A}}\in\Om^2(Q,TQ)$ 
be the curvature of $\zeta\circ\A\in\Om^1(Q,TQ)$
where $\zeta: \mathfrak{g}\ni\xi\mapsto \xi_Q \in \mathfrak{X}(Q)$ 
is the fundamental vector field mapping of the $G$-action. 
Then 
\begin{align*}
 \mu_0(\nabla^{\mu_0}_{u_a}u_a,u_b)
 &= -\mu_0([u_a,u_b],u_a)
  = -\mu(\hl[u_a,u_b],\hl u_a)\\
 &= \mu(K(\textup{hl}^{\mathcal{A}}u_a,\textup{hl}^{\mathcal{A}}u_b),\textup{hl}^{\mathcal{A}} u_a)
    - \mu([\textup{hl}^{\mathcal{A}}u_a,\textup{hl}^{\mathcal{A}}u_b],\textup{hl}^{\mathcal{A}}u_a)\\
 &= -\mu(\zeta_{\textup{Curv}^{\mathcal{A}}(\textup{hl}^{\mathcal{A}}u_a,\textup{hl}^{\mathcal{A}}u_b)}, \textup{hl}^{\mathcal{A}}u_a)
    + \mu(\Pi\nabla^{\mu}_{\textup{hl}^{\mathcal{A}}u_a}\textup{hl}^{\mathcal{A}}u_a, \textup{hl}^{\mathcal{A}}u_b)\\
 &= -\vv<J(\textup{hl}^{\mathcal{A}}u_a),\curv^{\mathcal{A}}(\textup{hl}^{\mathcal{A}}u_a,\textup{hl}^{\mathcal{A}}u_b)>
    + \mu_0(\nabla^M_{u_a}u_a,u_b).
\end{align*}
Therefore, 
\begin{align*}
 \check{\mu}_0(\nabla^{\mu_0}_{u_a}u_a - \nabla^M_{u_a}u_a)
 &=
 \vv<J(\textup{hl}^{\mathcal{A}}u_a),\curv^{\mathcal{A}}(\textup{hl}^{\mathcal{A}}u_b,\textup{hl}^{\mathcal{A}}u_a)>
 \mu_0(u_b,\_)\\
 &=
 \Xi(u_a)\Big(X^{\textup{nh}}_{\mathcal{H}_{\textup{c}}},u_a^h\Big) 
 = \beta
\end{align*}
where $u_a^h$ is the horizontal lift of the local vector field 
$u_a\in\X_{\textup{loc}}(M)$ to $u_a^h\in\X_{\textup{loc}}(TM)$ with
respect to the Levi-Civita connection $\nabla^{\mu_0}$. 
(We have identified linear functions on $TM$ and one-forms on $M$ as
we did in the definition \eqref{e:beta1}.) 
\end{proof}

\begin{theorem}\label{thm:chap-diff}
The $G$-Chaplygin system $(Q,\mathcal{D},\mathcal{L})$ has a preserved measure if and
only if the associated diffusion $\Gamma^M$ is symmetric. 
Moreover,
if $b=\textup{grad}^{\mu_0}(\log\N)$ then the diffusion is
$\N$-symmetric and $\N$ is the density of the preserved measure of
$\Xnh$ with respect to the Liouville volume.
\end{theorem}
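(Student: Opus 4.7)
The plan is to combine Lemma~\ref{lem:b} (which identifies the drift of $\Gamma^M$) with Proposition~\ref{prop:pm} (which characterizes existence of a preserved measure by exactness of $\beta$) and Kolmogorov's criterion Theorem~\ref{thm:Kol}(1) (symmetry is equivalent to the drift being a gradient).

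First I would record, via equation \eqref{e:AM}, that the generator has the form $A^M = \tfrac{1}{2}\Delta^{\mu_0} + \tfrac{1}{2}b$, so $\Gamma^M$ is a diffusion of the type \eqref{e:trdA} considered in Section~\ref{sec:trd}, with drift $b = \sum(\nabla^{\mu_0}_{u_a}u_a - \nabla^{M}_{u_a}u_a)$. By Lemma~\ref{lem:b}, this drift satisfies $\mu_0(b,\cdot) = \beta$, where $\beta\in\Omega^1(M)$ is the one-form from \eqref{e:beta1}. Consequently $b$ is a $\mu_0$-gradient if and only if $\beta$ is exact; more precisely, $b = \operatorname{grad}^{\mu_0}(F)$ if and only if $\beta = \mathbf{d}F$.

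Next I would chain the equivalences. On the one hand, Proposition~\ref{prop:pm} states that $(TM,\Omega_{\textup{nh}},\mathcal{H}_{\textup{c}})$ admits a preserved measure if and only if $\beta$ is exact. On the other hand, Theorem~\ref{thm:Kol}(1) asserts that $\Gamma^M$ is symmetric if and only if its drift is a gradient. Via the identification $\beta = \mu_0(b,\cdot)$ of the previous paragraph, these two conditions coincide, which proves the first assertion.

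For the ``moreover'' part, assume $b = \operatorname{grad}^{\mu_0}(\log \N)$ with $\N>0$. Then $\beta = \mathbf{d}(\log \N)$, so Proposition~\ref{prop:pm} gives that $\N = e^{\log \N}$ is precisely the density (with respect to the Liouville volume on $TM$) of the preserved measure for $X^{\textup{nh}}_{\mathcal{H}_{\textup{c}}}$. Simultaneously, Theorem~\ref{thm:Kol}(1) says that with $b = \operatorname{grad}^{\mu_0}(\log \N)$ the diffusion $\Gamma^M$ is $\N$-symmetric. No real obstacle is expected: the only thing to watch is that $\N$ is only determined up to a positive multiplicative constant (reflected in the freedom of adding a constant to $F$ in Proposition~\ref{prop:pm}), but this affects neither the $\N$-symmetry relation \eqref{e:symm} nor the preserved-measure condition, so the statement is unambiguous.
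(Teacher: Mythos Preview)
Your proof is correct and follows exactly the same route as the paper: invoke \eqref{e:AM} and Lemma~\ref{lem:b} to identify the drift $b$ with $\mu_0^{-1}\beta$, then combine Proposition~\ref{prop:pm} (preserved measure $\Leftrightarrow$ $\beta$ exact) with Theorem~\ref{thm:Kol}(1) (symmetry $\Leftrightarrow$ drift is a gradient). The paper's own proof is a single sentence citing precisely these four ingredients, so your write-up is simply a more detailed unpacking of the same argument.
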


\begin{proof}
Using \eqref{e:AM} and Lemma~\ref{lem:b} this is a direct consequence of
Proposition~\ref{prop:pm} and Theorem~\ref{thm:Kol}.
\end{proof}

When $M$ is compact, then we infer from Section~\ref{sec:trd} that 
measure preservation of the deterministic system is equivalent to
time-reversibility of $\Gamma^M$ which in turn is equivalent to the
vanishing of the entropy production rate \eqref{epr} of $\Gamma^M$. 
Moreover,
if $b=\textup{grad}^{\mu_0}(\log\N)$ then
\[
\left(\int_M\N\vol_{\mu_0}\right)^{-1}\N\vol_{\mu_0}  
\]
is the (unique)
equilibrium distribution of $\Gamma^M$.
For most systems of practical interest, such as 
the Chaplygin ball, the two-wheeled robot, or the snakeboard, the
manifold $M$ is compact.

\section{Examples}\label{sec:5} 

\subsection{The two-wheeled robot}
The configuration space of the two-wheeled robot is 
\[
 Q = S^1\times
S^1\times \textup{SE}(2) = \set{(\psi^1,\psi^2,x,y,\theta)}. 
\]
Here
$(\psi^1,\psi^2)$ measure the positions of the wheels with the
orientation such that the robot goes forward when the wheels go
backward, and $(x,y,\theta)$ give the overall configuration of the
robot in the plane. 
Let $G = \textup{SE}(2)$ and $M := S^1\times S^1 = Q/G$.
We use almost exactly the same notation as
\cite[Section~5.2.2]{M02}. It is assumed that the two wheels can be
controlled independently and roll without slipping and without 
lateral sliding on the plane. 
The Lagrangian $\mathcal{L}$ of the system is the kinetic energy 
corresponding to the metric
\begin{align*}
 \mu
 &= J_w(d\psi^1\otimes d\psi^1 + d\psi^2\otimes d\psi^2)
   + m(dx\otimes dx + dy\otimes dy)\\
   &\phantom{==} 
   + m_0l\cos\theta(dy\otimes d\theta + d\theta\otimes dy)
   - m_0l\sin\theta(dx\otimes d\theta + d\theta\otimes dx)
   + J_0d\theta\otimes d\theta.
\end{align*}
Here $m = m_0+2m_w$, $m_0$ is the mass of the robot without the
wheels, $m_w$ is the mass of each wheel, $J_w$ is the moment of
inertia of each wheel, $J_0$ is the moment of inertia of the robot about 
the vertical axis, and $l$ is
the distance from the vehicle's center of mass to the midpoint of the
axis which connects the two wheels. 
Let $2c$ denote the distance between the  contact points of the two 
wheels with the ground, and $R$ the radius of the wheels. 
The constraints are given by the kernel of the $\gu\cong\mathbb{R}^3$-valued one-form 
\[
 \A 
 =
 \left(
 \begin{matrix}
 \vspace{1mm}
   dx + yd\theta + \by{R}{2}\cos\theta(d\psi^1 + d\psi^2) 
                 +  y\by{R}{2c}(d\psi^1 - d\psi^2)\\
                 \vspace{1mm}
   dy - xd\theta + \by{R}{2}\sin\theta(d\psi^1 + d\psi^2) 
                 -  x\by{R}{2c}(d\psi^1 - d\psi^2)\\
   d\theta + \by{R}{2c}(d\psi^1 - d\psi^2)
 \end{matrix}
 \right).
\]
Thus the constraint distribution is $\mathcal{D} = \A^{-1}(0) =
\textup{span}\set{\xi_1,\xi_2}$ 
where 
\begin{equation}\label{e:xi_i}
 \xi_1
 :=
 \del_{\psi^1} - \by{R}{2}(\cos\theta\del_x + \sin\theta\del_y + \by{1}{c}\del_{\theta})\quad 
 \textup{and} \quad 
 \xi_2
 :=
 \del_{\psi^2} - \by{R}{2}(\cos\theta\del_x + \sin\theta\del_y - \by{1}{c}\del_{\theta}).
\end{equation}

\subsubsection{Symmetry reduction} 
Since $\A$ is a connection one-form for the principal bundle 
$\pi: Q\toto M = Q/G$, the system $(Q,\mathcal{D},\mathcal{L})$ is of $G$-Chaplygin type.
Let $J: TQ\to\gu^*$ be the \momap of the $G$-action. Then a
calculation shows that 
\[
 \vv<J(q,v^1\xi_1+v^2\xi_2),\curv^{\mathcal{A}}(\del_{\psi^1},\del_{\psi^2})> 
 = m_0l\by{R^3}{4c^2}(v^2-v^1).
\]
Note that this vanishes if $l=0$.  
Let us apply the Gram-Schmidt orthonormalization scheme with respect
to the reduced metric $\mu_0$ to
$\del_{\psi^1},\del_{\psi^2}$ and denote the result by
$u_1,u_2$. 
Thus,
\begin{equation}\label{e:u_i}
 u_1 
 = \left(J_w + m\by{R^2}{4} +
 J_0\by{R^2}{4c^2}\right)^{-\frac{1}{2}}\del_{\psi^1}
\end{equation}
and 
\[
\tag{\ref{e:u_i}a}
 u_2
 = \left(
   J_w\by{\left(J_w+m\by{R^2}{2}+J_0\by{R^2}{2c^2}\right)
   +mJ_0\by{R^4}{4c^2}}{J_w+m\by{R^2}{4}+J_0\by{R^2}{4c^2}}
\right)^{-\frac{1}{2}}
\left(
  \del_{\psi^2} 
   -
   \by{m\by{R^2}{4}-J_0\by{R^2}{4c^2}}{J_w+m\by{R^2}{4}+J_0\by{R^2}{4c^2}}
   \del_{\psi^1}
\right).
\]
Using the relation 
\[
 \mu_0(b)
 = \beta
 = \mu_0\left(\sum_{i,j} \vv< J(u_i),\curv^{\mathcal{A}}(u_j,u_i) > u_j ,
 \_\right)
\]
and expanding everything in terms of $\del_{\psi^1},\del_{\psi^2}$,
one finds that the drift vector $b$ equals 
\begin{equation}\label{e:b-rob}
 b = lm_0R^3\Big(J_w(4c^2J_w + m2c^2R^2 + 2J_0R^2) + mJ_0R^4\Big)^{-1}
      \left(\del_{\psi^1}+\del_{\psi^2}\right).
\end{equation}
Since $M$ is compact, this $b$ cannot be the gradient of a function for
$l\neq0$. Thus, we can conclude that the deterministic two-wheeled robot 
does not have a preserved volume for $l\neq0$ and that the associated 
stochastic system is not time-reversible.

\subsubsection{Kinematics of the noisy cart}
Formula \eqref{e:b-rob} seems to imply that
the stochastic cart (with zero initial velocity) 
acquires a tendency to go backwards
when the center of mass is displaced towards the rear. 
To see that this is indeed the case we should check that the
horizontally lifted mean curve coincides with the expected motion of
the cart.

Since $TQ\cong_{\mu}T^*Q$ (vector bundle isomorphism induced by the
Riemannian metric $\mu$ on $Q$) and $TTQ$ are trivial, we may view 
$TQ\subset TTQ$ as a vector subbundle,
so that $b^h=\hl(b)$ and $u_a^h$ become vector fields on
$TTQ$. Then the stochastic dynamics $\Gamma^{\mathcal{D}}=(q_t,p_t)$ on
$\mathcal{D}$ are generated by the operator
\[
 A^{\mathcal{D}} 
 = X_{\mathcal{H}}^{\mathcal{C}}
   + \by{1}{2}b^h
   + \sum_a u_a^hu_a^h
\]
or by the Stratonovich equation
\[
 \delta\Gamma^{\mathcal{D}}
 = 
 X_{\mathcal{H}}^{\mathcal{C}}(\Gamma^{\mathcal{D}})\delta t
 + \by{1}{2}b^h(q_t)
 + \sum_a u_a^h(q_t)\delta W^a.
\]
Here $X_{\mathcal{H}}^{\mathcal{C}}$ was defined in
Section~\ref{sec:2}. 
Now in local coordinates $(q^i,p_i)$ on $TQ$ 
the stochastic equations of motion are 
\begin{align*}
 \delta(p_i\circ\Gamma^{\mathcal{D}}_t)
 &= dp_i \left(PX_{\mathcal{H}}(q_t,p_t)\right) \delta t, \\
 \delta(q^i\circ\Gamma^{\mathcal{D}}_t)
 &= (p_t)_i\delta t 
  + \by{1}{2}dq^i\left(b^h(q_t)\right)\delta t
  + dq^i\left(\sum_a u^h_a(q_t)\right)\delta W^a.
\end{align*}
If the initial conditions are $(q_0,0)$ then the solution is given by
$(q_t,0)$ where $q_t$ satisfies 
\[
 \delta q_t 
 = \by{1}{2}b^h(q_t)\delta t  +  \sum_a u^h_a(q_t)\delta W^a.
\]
Let $q_t = (\psi^1_t,\psi^2_t,x_t,y_t,\theta_t) = (q^i_t)$. By
\cite[Lemma~7.3.2]{Oks07} we have
\[
 E\left[q^i_t\right] 
 = q_0^t 
 + E\left[\int_0^t\left(\by{1}{2}\left(\sum_a u_a^h u_a^h q^i\right)(q_s) 
     + \by{1}{2}(b^hq^i)(q_s)\right) \,ds\right].
\]

Rewriting 
\[
 \sum_a u_a^h u_a^h
 = A(\xi_1\xi_1+\xi_2\xi_2) + B(\xi_1\xi_2+\xi_2\xi_1),
\]
with
\[
 A := \frac{\mu(\xi_1,\xi_1)}{\mu(\xi_1,\xi_1)^2-\mu(\xi_1,\xi_2)^2}\,,
 \qquad
 B := \frac{\mu(\xi_1,\xi_2)}{\mu(\xi_1,\xi_1)^2-\mu(\xi_1,\xi_2)^2}\,, 
\]
and noting that $\mu(\xi_1,\xi_2)$ and $ \mu(\xi_1,\xi_1) =
\mu(\xi_2,\xi_2)$ are constants, implies that 
$\sum_a u_a^h u_a^h q^i = 0$ for $(q^i) = (\psi^1,\psi^2,x,y,\theta)$.
Thus
\[
 \dd{t}{}E\left[q_t\right] = \by{1}{2}b^h\left(E[q_t]\right), 
\]
i.e., $E[q_t]$ is the
horizontal lift of the integral curve of $\by{1}{2}b\in\X(M)$. 

Therefore, constraints and noise couple to produce a backwards
drift of the robot.
We emphasize that this is a stochastic non-holonomic effect which does
not appear in a Hamiltonian setting. Indeed, the Hamiltonian reduction
of Brownian motion at the zero-momentum level yields Brownian motion
and this is consistent with the fact that the reduced two-wheeled
robot system is actually Hamiltonian when $l=0$.

\subsubsection{Trajectory planning for noisy wheels}
Generally speaking, consider a non-holonomic system (such as the cart)
and assume that it is controlled so as to follow a predefined smooth curve
$c(t)\in Q$, $t\in[0,T]$ when no noise is present. When the system is 
stochastically perturbed we may ask whether $c(t)$ is also the expected 
motion of the perturbed system.

Suppose we want to steer the robot so that it follows a predefined
curve in the plane. As a curve we consider the circle $C$ of radius 
$\rho\ge0$ centered at the origin. The initial configuration of the 
robot should be $(x_0,y_0,\theta_0) = (\rho,0,\pi/2)$ and the vehicle 
should go around the circle in the positive sense. It is assumed that 
the wheel speeds can be individually controlled.

In this section we consider the example of \cite{ZC04}. 
Here the wheels are subject to a Gaussian white noise which is modeled
by the Stratonovich equation
\begin{equation}\label{e:BM-wheels}
 \delta\Gamma^M
 = \sqrt{D_1}\del_{\psi^1}\delta W^1
   + \sqrt{D_2}\del_{\psi^2}\delta W^2
\end{equation}
in $TM$
where $(W^i)$ is Brownian motion in $\mathbb{R}^2$ and $D_i>0$ are constants. 
In this setup one assumes that the controlled vehicle is not affected
by the kinematics of the problem, 
thus effectively forgetting the metric $\mu$. 
The generator of $\Gamma^M$ is
$\by{1}{2}(D_1\del_{\psi^1}^2+D_2\del_{\psi^2}^2)$. 
Equation~\eqref{e:BM-wheels}
lifts to a Stratonovich equation
\[
 \delta\Gamma^Q
 = \sqrt{D_1}\xi_1\delta W^1
   + \sqrt{D_2}\xi_2\delta W^2
\]
in $TQ$. This is in accordance with the general theory of
\cite{ELL04,ELL10}; the generator of $\Gamma^Q$ is
$A^Q = \by{1}{2}(D_1\xi_1^2+D_2\xi_2^2)$ which can be regarded as the
horizontal lift of $A^M$. 
Consider the deterministic input vector field 
\[
 u(t) 
 :=
 -\lam(t)\left(\by{\rho+c}{R}\del_{\psi^1}+\by{\rho-c}{R}
 \del_{\psi^2}\right) 
\]
on $M$ where the control
\[
 \lam(t)
 =
 \left\{
 \begin{matrix}
   2t, && 0\le t <\sqrt{\by{\pi}{2}} =: t_1\\
   2\by{t-T}{t_1-T}, && t_1\le t \le T := \by{3\pi}{2}+t_1
 \end{matrix}
 \right. 
\]
is chosen such that the unperturbed robot traverses the nominal
curve $C$ exactly once and 
initial and final speed are $0$.
The equation for the controlled noisy robot is thus
\[
 \delta\Gamma^u
 =
 - \lam(t)(\by{\rho+c}{R}\xi_1(\Gamma^u)+\by{\rho-c}{R}\xi_2(\Gamma^u))\delta t
 + \sqrt{D_1}\xi_1(\Gamma^u)\delta W^1
 + \sqrt{D_2}\xi_2(\Gamma^u)\delta W^2
\]
and the corresponding (time-dependent) generator is $A^u = A^Q +
\hl(u)$ whence by \cite[Lemma~7.3.2]{Oks07}
\begin{equation}\label{e:A^u}
 E\left[f(\Gamma^u_t)\right] = f(\Gamma_0^u)
 - E\left[\int_0^t(A^Q+\hl(u))(f)(\Gamma^u_s)\,ds\right]
\end{equation}
for $f\in\cinf(Q)$. 
(The expectation is taken with respect to the underlying probability.)
Let 
\[
 \Gamma_0=(0,0,\rho,0,\by{\pi}{2}),
 \quad
 \Gamma_t^u =: (\psi^1_t,\psi^2_t,x_t,y_t,\theta_t),
 \quad
 \kappa := \by{(D_2-D_1)R^2}{8c}.
\]
Using \eqref{e:A^u} we find
\begin{align*}
 E[x_t]
 &= \kappa\int_0^tE[\cos(\theta_s)]\,ds
   + \rho\int_0^t\lam(s)E[\sin(\theta_s)]\,ds\\
 &= \kappa\int_0^te^{\kappa t}\cos(\theta(s))\,ds
   + \rho\int_0^t\lam(s)e^{\kappa t}\sin(\theta(s))\,ds\\
 E[y_t]
 &= - \kappa\int_0^tE[\sin(\theta_s)]\,ds
    + \rho\int_0^t\lam(s)E[\cos(\theta_s)]\,ds\\
 &= - \kappa\int_0^te^{\kappa t}\sin(\theta(s))\,ds
    + \rho\int_0^t\lam(s)e^{\kappa t}\cos(\theta(s))\,ds
\end{align*}
where
$\theta(t)$ differs from $\theta_t$ and is defined by
\[
 \theta(t)
 =   
 \left\{
 \begin{matrix}
 t^2+\by{\pi}{2}, && 0\le t <\sqrt{\by{\pi}{2}} =: t_1;\\
 \by{(t-T)^2}{t_1-T} + \by{5\pi}{2}, && t_1\le t \le T := \by{3\pi}{2}+t_1.
 \end{matrix}
 \right. 
\]
This determines the orientation of the vehicle.

We have solved for $(E[x_t],E[y_t])$ using Maxima and its built in
Runge-Kutta scheme.
Here is a plot:
\begin{center}
 \includegraphics[width=100mm,height=60mm]{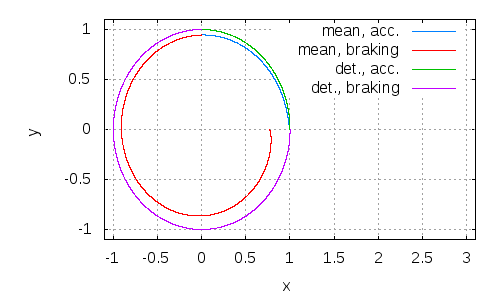}\\
\parbox[center]{12cm}{
The data are $\rho=1$, $D_1=1.2$, $D_2=0.8$, $R=0.3$, $c=0.1$.
we have plotted the accelerating and braking parts of
$(E[x_t],E[y_t])$ as blue and red, and  
the accelerating and braking parts of the unperturbed controlled robot
$(x(t),y(t))$ as green and magenta, respectively.
}
\end{center}
The discrepancy between the deterministic trajectory and the mean
curve of the perturbed system is quite obvious. This phenomenon has
also been observed in \cite{ZC04} by means of numerical simulations,
and \cite{ZC04} have also proposed a trajectory planning algorithm
which takes the perturbation into account.
When comparing the above picture to that of \cite{ZC04}, it should be
noted that we have chosen a different convention for the orientation
of the wheels.

\subsection{Microscopic snakeboard under molecular bombardment}
This is not a $G$-Chaplygin system but does fit the set-up of
Section~\ref{sec:CBM}. 

In describing the snakeboard we follow mostly the presentation of
\cite{CMR01}. There is, however, one difference: 
the metric which we use to define the kinetic energy is that
of \cite{BL02}. This considerably simplifies some of the formulas. 
We further assume that the angle of the front axis equals minus that
of the back axis. Thus 
the configuration space of this system is 
\[
 Q 
 = S^1\times S^1\times\textup{SE}(3)
 = \set{q=(\phi,\psi,x,y,\theta)}.
\]
The constraint distribution is the kernel of the $\mathbb{R}^2$-valued
one-form $\om=(\om_1,\om_2)$ given by
\begin{align*}
 \om_1(q)
 &=
  -\sin(\theta+\phi)dx 
  + \cos(\theta+\phi)dy -
  r\cos(\phi)d\theta,\\
 \om_2(q)
 &=
  -\sin(\theta-\phi)dx 
  + \cos(\theta-\phi)dy -
  r\cos(\phi)d\theta
\end{align*}
where $2r$ is the distance between the axes measured from their
respective midpoints. Thus 
\[
 \mathcal{D}
 = 
 \textup{span}\,\set{
 \del_{\phi},\del_{\psi},s := a\del_x+b\del_y+c\del_{\theta} }
\]
where the functions $a,b,c$ are given by
\begin{align*}
 a
 &=
  -r\big(
  \cos(\phi)\cos(\theta-\phi)+\cos(\phi)\cos(\theta+\phi)
  \big),\\
 b
 &= 
 -r\big(
  \cos(\phi)\sin(\theta-\phi)+\cos(\phi)\sin(\theta+\phi)
  \big),\\
 c
 &=
 \sin(2\phi).
\end{align*}
Let $m$ be the mass of the board, $J_0$ its moment of inertia, and 
$J_{\phi}$, $J_{\psi}$, $J_{\theta}$ the moments of
inertia corresponding to rotation about the angle $\phi$, $\psi$, and
$\theta$ respectively.  Then the Lagrangian
of the system is the kinetic energy of the metric
\[
 \mu
 =
 m(dx\otimes dx + dy\otimes dy)
 + K d\theta\otimes d\theta
 + J_{\phi} d\phi\otimes d\phi
 + J_{\psi} d\psi\otimes d\psi
 + J_{\psi}(d\psi\otimes d\theta + d\theta\otimes d\psi)
\]
where $K := J_{\theta}+J_{\psi}+J_{\phi}$.

Let us assume that the snakeboard is perturbed by white noise.
Using the left trivialization of $TQ$ this can be modeled by a
Stratonovich operator of the form
\begin{align*}
 \mathcal{S}:
 Q\times T\mathbb{R}^6&\longto TQ,\\
 (q,w,w')
 &\longmapsto
 \sigma\sum\vv<e_i,w'>u_i \delta W^i
\end{align*}
where $(u_i)$
is a left invariant orthonormal frame on $Q$ and $\sigma\ge0$ is a
parameter specifying the field strength. According to the results of 
Section~\ref{sec:4}, constrained Brownian motion is a diffusion
$\Gamma^{\textup{nh}}$ with generator
\[
 A = \by{\sigma^2}{2}\sum(u_a u_a - \Pi\nabla^{\mu}_{\Pi u_a}u_a).
\]
Here $(u_a)$ is an orthonormal frame of $\mathcal{D}$ and $\Pi:
TQ=\mathcal{D}\oplus\mathcal{D}^{\bot}\to\mathcal{D}$. We fix this frame to be
\[
 u_1 = J_{\phi}^{-\frac{1}{2}}\del_{\phi},\quad
 u_2  = \eta^{-\frac{1}{2}}(\del_{\psi}-J_{\psi}\by{c}{\eps}s),\quad
 u_3  = \eps^{-\frac{1}{2}}s
\]
where
\[
 \eps = m(a^2+b^2)+Kc^2,
 \quad \eta = J_{\psi}(1-\by{J_{\psi}c^2}{\eps}).
\]
Note that $\eta$ and $\eps$ are functions of $\phi$ only. 
A calculation now shows that we have, for the (trivial) connection
$\nabla$ associated to $\mu$,
\begin{align*}
 \nabla_{u_1}u_1 &= 0,\\
 \nabla_{u_2}u_2 &=
 \by{J_{\psi}^2c^3}{\eta\eps^2}\big((\del_{\theta}a)\del_x+(\del_{\theta}b)\del_y\big)
 \in\mathcal{D}^{\bot},\\
 \nabla_{u_3}u_3
 &=  
 \by{c}{\eps}\big((\del_{\theta}a)\del_x+(\del_{\theta}b)\del_y\big)
 \in\mathcal{D}^{\bot}.
\end{align*}
Thus
$\Pi\nabla^{\mu}_{\Pi u_a}u_a = 0$ for this
frame and $\Gamma^{\textup{nh}}$ is given by the Stratonovich
equation
\begin{equation}\label{e:sb_cbm}
 \delta\Gamma^{\textup{nh}} = \sigma\sum u_a(\Gamma^{\textup{nh}})\delta W^a.
\end{equation}
As in the theory of \cite{CMR01}, we fix the horizontal space of the 
principal bundle $\pi: Q\toto Q/G =
T^2 = M$ associated to the distribution $\mathcal{D}$ to be given by the span of
$\set{u_1,u_2}$.
The corresponding connection form is denoted by $\A$.
Consider the control vector fields 
\begin{align*}
 U_{\phi}(t) &= u_{\phi}'(t)\del_{\phi},\qquad
 u_{\phi}(t) = a_{\phi}\sin(\om_{\phi}t),\\
 U_{\psi}(t) &= u_{\psi}'(t)\del_{\psi},\qquad
 u_{\psi}(t) = a_{\psi}\sin(\om_{\psi}t)
\end{align*}
in the control space $TM$. Their horizontal lifts are $\hl(U_{\phi}) =
u_{\phi}'(t)\del_{\phi}$ and $\hl(U_{\psi}) =
u_{\psi}'(t)(\del_{\psi}-J_{\psi}\by{c}{\eps}s)$. Combining this with
\eqref{e:sb_cbm} yields 
\begin{equation}\label{e:sb_gamma_u} 
 \delta\Gamma^u
 = \hl_{\Gamma^u}(U_{\phi}+U_{\psi})\delta t
   + \sigma u_a(\Gamma^u)\delta W^a
\end{equation}
which describes the stochastic perturbation of the controlled
snakeboard with deterministic gait input 
$(\phi,\psi)=(u_{\phi}(t),u_{\psi}(t))$.

Since the variables $(\phi,\psi)$ are also the ones
which can be controlled, we are interested in estimating $\Gamma^u$ given
that the projected process $X_t = \pi\circ\Gamma^u_t$ satisfies the
projected equation
\begin{equation}\label{e:sb_X}
 \delta X = (U_{\phi}(t)+U_{\psi}(t))\delta t
            + J_{\psi}^{-\frac{1}{2}}\delta W^1 \del_{\phi} 
            + \eta(X)^{-\frac{1}{2}}\delta W^2 \del_{\psi}\,.
\end{equation}
This is the filtering problem
$E[\Gamma^u_t|\pi\circ\Gamma^u_t = X_t] =: Z_t$ and the solution 
is provided by \cite{ELL04,ELL10}: The process $\Gamma^u$ can be
decomposed as 
\[
 \Gamma^u_t = g^{X}_t\cdot X_t^h
\]
where $X_t^h$ is the horizontal lift of $X_t = \pi\circ\Gamma^u_t$ and 
$g^X$ is the
reconstruction process. 
These satisfy the Stratonovich
equations
\[
 \delta X^h 
   = \hl_{X^u}(U_{\phi}+U_{\psi})\delta t
   + \sigma(u_1(X^h)\delta W^1 + u_2(X^h)\delta W^2),
   \qquad X^h_0 = \Gamma^u_0 = q_0\in Q
\]
and
\[
 \delta g^{X}_t 
 = \sigma T_eL_{g^X_t}.\A_{X_t^h}u_3(X^h_t)\delta W^3,
 \qquad
 g^X_0 = e \in G.
\]
(See also Section~\ref{sec:equiv-diff}.)
By \cite{ELL10} we have that 
\begin{equation}\label{e:Z}
 Z_t = E[g_t^X]\cdot X_t^h.
\end{equation}
Let $X_t^h = (\phi_t,\psi_t,x_t,y_t,\theta_t)$ and $E[g_t^X] =
(a_t,b_t,\gamma_t)\in G$. 
It follows from Proposition~\ref{prop:mean-re} that the mean reconstruction curve
$E[g_t^X]$ is determined by the time- and $\om\in\Om$-dependent ODE 
\begin{align}\label{e:E[g]}
 \dd{t}{}E[g_t^X]
 &= 
 \dd{t}{}
 \left(
  \begin{matrix}
    a_t\\
    b_t\\
    \gamma_t
  \end{matrix}
 \right)\\
 &=
 \by{\sigma^2c(\phi_t)}{2\sqrt{\eps(\phi_t)}}
 \left(
  \begin{matrix}
    -\big(a(\phi_t,\theta_t)+y_tc(\phi_t)\big)\sin(\gamma_t) 
      -\big(b(\phi_t,\theta_t)-x_tc(\phi_t)\big)\cos(\gamma_t)\\
    \big(a(\phi_t,\theta_t)+yc(\phi_t)\big)\cos(\gamma_t)
      -\big(b(\phi_t,\psi_t)-x_tc(\phi_t)\big)\sin(\gamma_t)\\
    0
  \end{matrix}
 \right). \notag
\end{align}
Using the rule for transforming Stratonovich equations to It\^o type, we
can characterize $X_t^h$ by the It\^o equation
\begin{multline}\label{e:XIto}
 dX_t^h
 =
 \left(
 \begin{matrix}
 \vspace{1mm}
   u_{\phi}'(t)\\
   \vspace{1mm}
   u_{\psi}'(t)\\
   \vspace{1mm}
   -J_{\psi}\by{a(\phi_t,\theta_t)c(\phi_t)}{\eps(\phi_t)}u_{\psi}'(t)
     + \sigma^2\by{c(\phi_t)^3}{2\eta(\phi_t)\eps(\phi_t)^2}(\del_{\theta}a)(\phi_t,\theta_t)\\ 
     \vspace{1mm}
   -J_{\psi}\by{b(\phi_t,\theta_t)c(\phi_t)}{\eps(\phi_t)}u_{\psi}'(t)
     + \sigma^2\by{c(\phi_t)^3}{2\eta(\phi_t)\eps(\phi_t)^2}(\del_{\theta}b)(\phi_t,\theta_t)\\ 
   -J_{\psi}\by{c(\phi_t)^2}{\eps(\phi_t)}u_{\psi}'(t)
 \end{matrix}
 \right)dt\\
 + \sigma u_1(X^h)dW^1 + \sigma u_2(X^h)dW^2.
\end{multline}
Equation~\eqref{e:XIto} involves an iterated dependence on 
trigonometric functions, and hence numerical simulation is not straightforward.
A naive approach would involve to run an Euler-Maruyama and an Euler simulation for
\eqref{e:XIto} and \eqref{e:E[g]} respectively, and to multiply the
results together according to \eqref{e:Z} which is the action of $G$
on $Q$. This yields $Z_t$. Running the simulation sufficiently many  
times and computing the average yields the mean $E[Z_t]$. We have 
implemented this scheme and the results seem reasonably stable up to 
time 1, according to a first order test. Beyond that time,
the trajectories blow up very quickly, which is a strong indication
that the method is unstable and a more detailed analysis of the
numerical implementation is necessary. Our preliminary results are contained in the plot below.
\begin{center}
 \includegraphics[width=11cm,height=9cm]{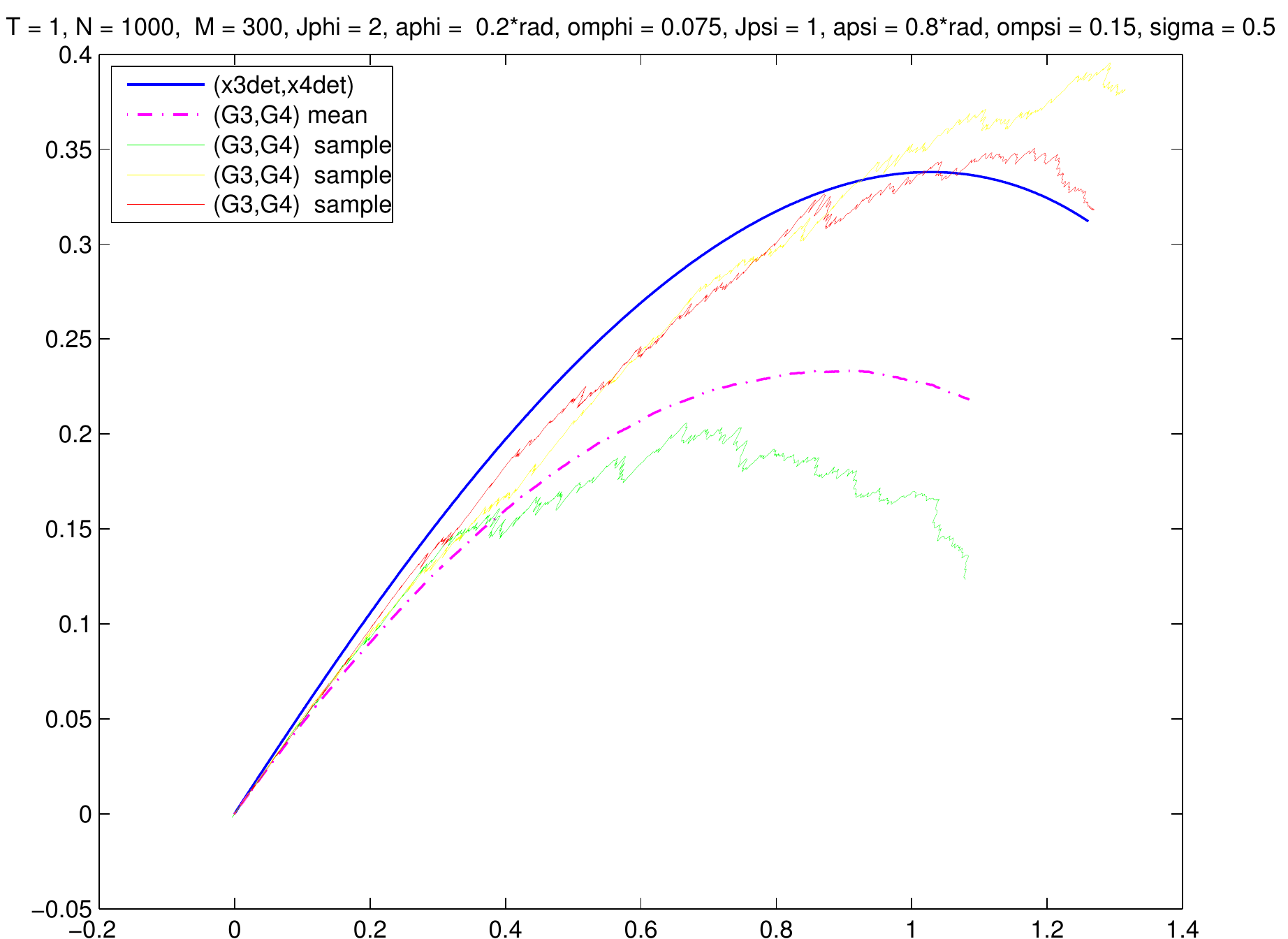}
% \quad
 %\includegraphics[width=60mm,height=60mm]{sb-all200.png}
 \\
\parbox[center]{12cm}{
The blue line is the center of mass motion of the unperturbed
snakeboard and and the dotted magenta line is the mean motion of the
stochastic snakeboard with the same deterministic input.
Additionally $3$ sample plots have been included.
The data are as indicated above: $T$ is the runtime, $1/N$ the
step size, $M$ the number of experiments, $\textup{rad} = \by{180}{\pi}$ 
and $\sigma$ the parameter
specifying the strength of the white noise.
The initial conditions are $q_0=(0,0,0,0,0.5)$. 
}
\end{center}

\noindent\textit{Acknowledgments.}
T.S. Ratiu was partially supported by Swiss NSF grant 200021-140238 
and by the government grant of the Russian Federation for support of 
research projects implemented by leading scientists, Lomonosov Moscow 
State University under the agreement No. 11.G34.31.0054.

\end{document}